  \theoremstyle{plain}
  \newtheorem{theorem}{Theorem}
  \newtheorem{lemma}{Lemma}  
  \newtheorem{fact}{Fact}
  \newtheorem{observation}{Observation}
  \theoremstyle{definition}
  \newtheorem{definition}{Definition}
  \newtheorem{example}{Example}
  \newtheorem{remark}[definition]{Remark}
  \newtheorem*{claim}{Claim}
\title{String Periods in the Order-Preserving Model}
\author[1]{Garance Gourdel\footnote{A part of this work was done during the author's internship at University of Warsaw, Poland.}$^{,}$}
\author[2]{Tomasz Kociumaka\footnote{Supported by the Polish National Science Center, grant no.\ 2014/13/B/ST6/00770.}$^{,}$}
\author[2]{Jakub Radoszewski$^{\dagger,}$}
\author[2]{Wojciech Rytter$^{\dagger,}$}
\author[3]{Arseny Shur}
\author[2]{Tomasz Waleń$^{\dagger,}$}
\affil[1]{ENS Paris-Saclay, Cachan, France\\
\texttt{garance.gourdel@ens-paris-saclay.fr}}
\affil[2]{Institute of Informatics, University of Warsaw, Warsaw, Poland\\
    \texttt{[kociumaka,jrad,rytter,walen]@mimuw.edu.pl}}
\affil[3]{Ural Federal University, Ekaterinburg, Russia\\
    \texttt{arseny.shur@urfu.ru}
}
\date{\vspace{-5ex}}
\setlist{leftmargin=*}
\newcommand{\PP}{\mathit{PER}}
\newcommand{\TT}{P}
\begin{document}
\newcommand{\floor}[1]{\left\lfloor #1 \right\rfloor}
\newcommand{\ceil}[1]{\left\lceil #1 \right\rceil}

\newcommand{\NN}{\mathcal{N}}
\newcommand{\lb}{\llbracket}
\newcommand{\rb}{\rrbracket}
\renewcommand{\AA}{\mathcal{A}}
\newcommand{\BB}{\mathcal{B}}
\newcommand{\CC}{\mathcal{C}}

\newcommand{\LL}{\mathcal{L}}
\newcommand{\RR}{\mathcal{R}}
\newcommand{\LPP}{\mathrm{op\mbox{-}LPP}}
\newcommand{\PREF}{\mathrm{op\mbox{-}PREF}}
\newcommand{\regPREF}{\mathrm{PREF}}
\newcommand{\LCP}{\mathrm{op\mbox{-}LCP}}
\newcommand{\LCS}{\mathrm{op\mbox{-}LCS}}
\newcommand{\Squares}{\mathit{op\mbox{-}Squares}}
\newcommand{\Shifts}{\mathit{Shifts}}

\newcommand{\trace}{\mathit{trace}}
\newcommand{\sort}{\mathrm{sort}}
\newcommand{\shape}{\mathit{shape}}
\newcommand{\SHAPE}{\mathit{SH}}

\newcommand{\Div}{\mathit{Div}}

\newcommand{\per}{\mathsf{per}}
\newcommand{\shper}{\mathrm{sh\mbox{-}per}}

\renewcommand{\top}{\mbox{\it{maxdef}}}
\newcommand{\RoutineB}{\mathit{MaxShape}}

 \setenumerate{noitemsep,topsep=0pt,parsep=0pt,partopsep=0pt}

\maketitle
 \begin{abstract}
The order-preserving model (op-model, in short) was introduced
quite recently but has already attracted significant attention
because of its applications in data analysis.
We introduce several types of periods
in this setting (op-periods).
Then we give algorithms to compute these periods in time $O(n),\, O(n\log\log n),\,
O(n \log^2 \log n/\log \log \log n),\, O(n\log n)$ depending on the type
of periodicity.
In the most general variant the number of different periods
can be as big as $\Omega(n^2)$,
and a compact representation is needed.
Our algorithms require novel combinatorial insight into the properties of such periods.
 \end{abstract} 

  \section{Introduction}
  Study of strings in the \emph{order-preserving} model (\emph{op-model}, in short)
  is a part of the so-called non-standard stringology.
  It is focused on pattern matching and repetition discovery problems in the shapes of number sequences.
  Here the shape of a sequence is given by the relative order of its elements.
  The applications of the op-model include finding trends in time series
  which appear naturally when considering e.g.\ the stock market or
  melody matching of two musical scores; see \cite{Costas_TCS}.
  In such problems periodicity plays a crucial role.

  One of motivations is given by the following scenario.
  Consider a sequence $D$ of numbers that models a time series which is known to repeat the same shape every fixed
  period of time.
  For example, this could be certain stock market data or statistics data from a social network that is strongly dependent on the day of the week,
  i.e., repeats the same shape every consecutive week.
  Our goal is, given a fragment $S$ of the sequence $D$, to discover such repeating shapes, called here \emph{op-periods}, in $S$.
  We also consider some special cases of this setting.
  If the beginning of the sequence $S$ is synchronized with the beginning of the repeating shape in $D$, we refer to the repeating shape as to an \emph{initial} op-period.
  If the synchronization takes place also at the end of the sequence, we call the shape a \emph{full} op-period.
  Finally, we also consider \emph{sliding} op-periods that describe the case when every factor of the sequence $D$ repeats the same shape every fixed period of time.
  
  \paragraph{\bf Order-preserving model.}
  Let $\lb a..b \rb$ denote the set $\{a,\ldots,b\}$.
  We say that two strings $X=X[1] \ldots X[n]$ and $Y=Y[1] \ldots Y[n]$ over an integer alphabet
  are \emph{order-equivalent} (\emph{equivalent} in short), written $X\approx Y$, iff
  $\forall_{ i,j \in \lb 1 .. n\rb}\ \ X[i] <  X[j]\Leftrightarrow Y[i] < Y[j]$.
  \begin{example}
    $5\, 2\, 7\, 5\, 1\, 3\, 10\, 3\, 5 \approx 6\, 4\, 7\, 6\, 3\, 5\, 9\, 5\, 6$.
  \end{example}
  Order-equivalence is a special case of a substring consistent equivalence relation (SCER) that was defined in \cite{DBLP:journals/tcs/MatsuokaAIBT16}.

  For a string $S$ of length $n$, we can create a new string $X$ of length $n$ such that $X[i]$
  is equal to the number of distinct symbols in $S$ that are not greater than $S[i]$.
  The string $X$ is called the \emph{shape} of $S$ and is denoted by $\shape(S)$.
  It is easy to observe that two strings $S,T$ are order-equivalent if and only if they have the same shape. 
  \begin{example}
    $\shape(5\, 2\, 7\, 5\, 1\, 3\, 10\, 3\, 5)= \shape(6\, 4\, 7\, 6\, 3\, 5\, 9\, 5\, 6)=4\,2\,5\,4\,1\,3\,6\,3\,4$.
  \end{example}

  \paragraph{\bf Periods in the op-model.}
  We consider several notions of periodicity in the op-model, illustrated by Fig.~\ref{f:types}. 
  We say that a string $S$ has a (general) \emph{op-period} $p$ with \emph{shift} $s \in \lb 0 .. p-1 \rb$
  if and only if $p<|S|$ and $S$ is a factor of a string $V_1V_2 \cdots V_k$ such that:
  $$|V_1|=\dots=|V_k|=p,\quad V_1 \approx \dots \approx V_k,\quad\text{and } S[s+1..|S|]\text{ is a prefix of }V_2\cdots V_k.$$

  \noindent
  The \emph{shape} of the op-period is $\shape(V_1)$.
  One op-period $p$ can have several shifts; to avoid ambiguity, we sometimes denote the op-period as $(p,s)$.
  We define $\Shifts_p$ as the set of all shifts of the op-period $p$.
  
  An op-period $p$ is called \emph{initial} if $0\in \Shifts_p$, \emph{full} if it is initial and $p$ divides $|S|$, and \emph{sliding} if $\Shifts_p = \lb 0..p-1 \rb$.
  Initial and sliding op-periods are particular cases of block-based and sliding-window-based periods for SCER,
  both of which were introduced in~\cite{DBLP:journals/tcs/MatsuokaAIBT16}.

\begin{figure}[!htb]
\begin{center}
\begin{tikzpicture}

\foreach \x [count=\i] in {0, 0, 3, 2, 1, 1, 3, 2, 1, 1, 4, 3} {
  \node at (\i * 0.5, 0) {\small \x};
}

\begin{scope}
\clip (0.24, -0.8) rectangle (6.26, 0.8);
\foreach \p/\s [count=\i] in {4/0} {
  \foreach \c in {0, 1, 2} {
    \draw (\s * 0.5 + 0.25 + \c * \p * 0.5,\i * 0.25 + 0.2) -- +(\p*0.5, 0);
    \draw (\s * 0.5 + 0.25 + \c * \p * 0.5,\i * 0.25 + 0.2) -- +(0, -0.15);
    \draw (\s * 0.5 + 0.25 + \c * \p * 0.5 + \p * 0.5,\i * 0.25 + 0.2) -- +(0, -0.15);
  }
}

\foreach \p/\s [count=\i] in {4/-2, 4/-1} {
  \foreach \c in {0, 1, 2, 3} {
    \draw (\s * 0.5 + 0.25 + \c * \p * 0.5, -\i * 0.25 - 0.2) -- +(\p*0.5, 0);
    \draw (\s * 0.5 + 0.25 + \c * \p * 0.5, -\i * 0.25  - 0.2) -- +(0, 0.15);
    \draw (\s * 0.5 + 0.25 + \c * \p * 0.5 + \p * 0.5,-\i * 0.25 - 0.2) -- +(0, 0.15);
  }
}
\end{scope}

\begin{scope}[xshift=7cm]
\foreach \x [count=\i] in {1, 1, 2, 5, 1, 1, 3, 4, 1, 1, 2, 4} {
  \node at (\i * 0.5, 0) {\small \x};
}

\begin{scope}
\clip (0.24, -0.8) rectangle (6.26, 0.85);
\foreach \p/\s [count=\i] in {4/0, 4/1} {
  \foreach \c in {-1, 0, 1, 2, 3} {
    \draw (\s * 0.5 + 0.25 + \c * \p * 0.5,\i * 0.25 + 0.2) -- +(\p*0.5, 0);
    \draw (\s * 0.5 + 0.25 + \c * \p * 0.5,\i * 0.25 + 0.2) -- +(0, -0.15);
    \draw (\s * 0.5 + 0.25 + \c * \p * 0.5 + \p * 0.5,\i * 0.25 + 0.2) -- +(0, -0.15);
  }
}

\foreach \p/\s [count=\i] in {4/-2, 4/-1} {
  \foreach \c in {0, 1, 2, 3} {
    \draw (\s * 0.5 + 0.25 + \c * \p * 0.5, -\i * 0.25 - 0.2) -- +(\p*0.5, 0);
    \draw (\s * 0.5 + 0.25 + \c * \p * 0.5, -\i * 0.25  - 0.2) -- +(0, 0.15);
    \draw (\s * 0.5 + 0.25 + \c * \p * 0.5 + \p * 0.5,-\i * 0.25 - 0.2) -- +(0, 0.15);
  }
}
\end{scope}
\end{scope}

\end{tikzpicture}
\end{center}
\caption{
  The string to the left has op-period 4 with three shifts: $\Shifts_4=\lb 0..0 \rb \cup \lb 2..3 \rb$.
  Due to the shift 0, the string has an initial---therefore, a full---op-period 4.
  The string to the right has op-period 4 with all four shifts: $\Shifts_4=\lb 0..3 \rb$.
  In particular, 4 is a sliding op-period of the string.
Notice that both strings (of length $n=12$) have
(general, sliding) periods 4, but none of them has the order-border (in the sense of
\cite{DBLP:journals/ipl/KubicaKRRW13}) of length $n-4$.
}\label{f:types}
\end{figure}
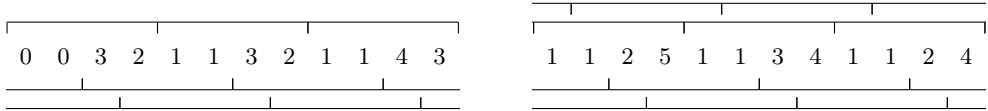

  \paragraph{\bf Models of periodicity.}
  In the standard model, a string $S$ of length $n$ has a period $p$ iff $S[i]=S[i+p]$ for all $i=1,\ldots,n-p$. The famous periodicity lemma of Fine and Wilf~\cite{fw:65} states that a ``long enough'' string with periods $p$ and $q$ has also the period $\gcd(p,q)$. The exact bound of being ``long enough'' is $p+q-\gcd(p,q)$. This result was generalized to arbitrary number of periods \cite{DBLP:journals/tcs/CastelliMR99,DBLP:journals/ita/Justin00,TijdemanZamboni03}.

Periods were also considered in a number of non-standard models. Partial words, which are strings with don't care symbols, possess quite interesting Fine--Wilf type properties, including probabilistic ones; see \cite{DBLP:journals/tcs/BerstelB99,DBLP:journals/iandc/Blanchet-SadriBS08,DBLP:journals/tcs/Blanchet-SadriH02,ShurGamzova04,DBLP:conf/mfcs/ShurK01,DBLP:journals/fuin/IdiatulinaS14}. In Section~\ref{s:FW}, we make use of periodicity graphs introduced in \cite{ShurGamzova04,DBLP:conf/mfcs/ShurK01}. In the abelian (jumbled) model, a version of the periodicity lemma was shown in \cite{DBLP:journals/eatcs/ConstantinescuI06} and extended in \cite{DBLP:journals/ita/Blanchet-SadriSTV13}. Also, algorithms for computing three types of periods analogous to full, initial, and general op-periods were designed \cite{DBLP:journals/ipl/CrochemoreIKKPRRTW13,DBLP:journals/dam/FiciLLP14,DBLP:journals/dam/FiciLLPS16,DBLP:conf/stacs/KociumakaRR13,DBLP:journals/jcss/KociumakaRR17,DBLP:conf/macis/KociumakaRW15}. In the computation of full and initial op-periods we use some number-theoretic tools initially developed in \cite{DBLP:conf/stacs/KociumakaRR13,DBLP:journals/jcss/KociumakaRR17}. Remarkably, the fastest known algorithm for computing general periods in the abelian model has essentially quadratic time complexity \cite{DBLP:journals/ipl/CrochemoreIKKPRRTW13,DBLP:conf/macis/KociumakaRW15}, whereas for the general op-periods we design a much more efficient solution. A version of the periodicity lemma for the parameterized model was proposed in \cite{DBLP:journals/dam/ApostolicoG08}.

Op-periods were first considered in \cite{DBLP:journals/tcs/MatsuokaAIBT16} where initial and sliding op-periods were introduced and direct generalizations of the Fine--Wilf property to these kinds of op-periods were developed. A few distinctions between the op-periods and periods in other models should be mentioned. First, ``to have a period 1'' becomes a trivial property in the op-model. Second, all standard periods of a string have the ``sliding'' property; the first string in Fig.~\ref{f:types} demonstrates that this is not true for op-periods. The last distinction concerns borders. A standard period $p$ in a string $S$ of length $n$ corresponds to a \emph{border} of $S$ of length $n-p$, which is both a prefix and a suffix of $S$. In the order-preserving setting, an analogue of a border is an \emph{op-border}, that is, a prefix that is equivalent to the suffix of the same length. Op-borders have properties similar to standard borders and can be computed in $O(n)$ time \cite{DBLP:journals/ipl/KubicaKRRW13}. However, it is no longer the case that a (general, initial, full, or sliding) op-period must correspond to an op-border; see \cite{DBLP:journals/tcs/MatsuokaAIBT16}.

  \paragraph{\bf Previous algorithmic study of the op-model.}
  The notion of order-equivalence was introduced in \cite{Costas_TCS,DBLP:journals/ipl/KubicaKRRW13}.
	(However, note the related combinatorial studies, originated in \cite{ElizaldeNoy03}, on containment/avoidance of shapes in permutations.)
  Both \cite{Costas_TCS,DBLP:journals/ipl/KubicaKRRW13} studied pattern matching in the op-model (op-pattern matching)
  that consists in identifying all consecutive factors of a text that are order-equivalent
  to a given pattern.
  We assume that the alphabet is integer and, as usual, that it is polynomially bounded with respect to the length of the string, which
  means that a string can be sorted in linear time (cf.\ \cite{DBLP:books/daglib/0023376}).
  Under this assumption, for a text of length $n$ and a pattern of length $m$, \cite{Costas_TCS} solve the op-pattern matching problem in
  $O(n+m\log{m})$ time and \cite{DBLP:journals/ipl/KubicaKRRW13} solve it in $O(n+m)$ time.
  Other op-pattern matching algorithms were presented in \cite{DBLP:conf/isaac/BelazzouguiPRV13,DBLP:journals/ipl/ChoNPS15}.
  
  An index for op-pattern matching based on the suffix tree was developed in \cite{DBLP:journals/tcs/CrochemoreIKKLP16}.
  For a text of length $n$ it uses $O(n)$ space and answers op-pattern matching queries for a pattern of length $m$
  in optimal, $O(m)$ time (or $O(m+\mathit{Occ})$ time if we are to report all $\mathit{Occ}$ occurrences).
  The index can be constructed in $O(n \log \log n)$ expected time or $O(n \log^2 \log n/\log \log \log n)$ worst-case time.
  We use the index itself and some of its applications from \cite{DBLP:journals/tcs/CrochemoreIKKLP16}.

  Other developments in this area include a multiple-pattern matching algorithm for the op-model \cite{Costas_TCS},
  an approximate version of op-pattern matching \cite{DBLP:journals/tcs/GawrychowskiU16},
  compressed index constructions \cite{DBLP:conf/stringology/ChhabraKT15,DBLP:conf/dcc/DecaroliGM17},
  a small-space index for op-pattern matching that supports only short queries \cite{gagie_et_al:LIPIcs:2017:7872},
  and a number of practical approaches \cite{DBLP:conf/stringology/CantoneFK15,DBLP:journals/spe/ChhabraFKT17,DBLP:conf/spire/ChhabraGT15,DBLP:journals/ipl/ChhabraT16,DBLP:conf/aaim/FaroK16}.

\paragraph{\bf Our results.}
We give algorithms to compute:
\begin{itemize}
\item all full op-periods in $O(n)$ time;
\item the smallest non-trivial initial op-period in $O(n)$ time;
\item all initial op-periods in $O(n \log \log n)$ time;
\item all sliding op-periods in $O(n \log \log n)$ expected time or $O(n \log^2 \log n / \log \log \log n)$ worst-case time (and linear space);
\item all general op-periods with all their shifts (compactly represented) in $O(n\log n)$ time and space.
The output is the family of sets $\Shifts_p$ represented as unions of
disjoint intervals. 
The total number of intervals, over all $p$, is $O(n\log n)$.
\end{itemize}
In the combinatorial part, we characterize the Fine--Wilf periodicity property (aka interaction property) in the op-model in the case of coprime periods. This result is at the core of the linear-time algorithm for the smallest initial op-period.

  \paragraph{Structure of the paper.}
  Combinatorial foundations of our study are given in Section~\ref{s:FW}. Then in Section~\ref{sec:toolbox} we recall known algorithms and data structures for the op-model and develop further algorithmic tools. The remaining sections are devoted to computation of the respective types of op-periods: full and initial op-periods in Section~\ref{sec:full_initial}, the smallest non-trivial initial op-period in Section~\ref{sec:sm_initial}, all (general) op-periods in Section~\ref{sec:general}, and sliding op-periods in Section~\ref{sec:sliding}.

\section{Fine--Wilf Property for Op-Periods} \label{s:FW}

The following result was shown as Theorem 2 in \cite{DBLP:journals/tcs/MatsuokaAIBT16}. Note that if $p$ and $q$ are coprime, then the conclusion is void, as every string has the op-period 1.

\begin{theorem}[\cite{DBLP:journals/tcs/MatsuokaAIBT16}] \label{t:fwd2}
Let $p>q>1$ and $d=\gcd(p,q)$. If a string $S$ of length $n\ge p+q-d$ has initial op-periods $p$ and $q$, it has initial op-period $d$.
Moreover, if $S$ has length $n \ge p+q-1$ and sliding op-periods $p$ and $q$, it has sliding op-period $d$.
\end{theorem}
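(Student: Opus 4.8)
The plan is to first recast both notions of op-periodicity as purely local conditions on the pairwise comparisons inside $S$, and then run Fine--Wilf type arguments at that level.

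Unfolding the definition of an initial op-period shows that \emph{$S$ has initial op-period $p$ iff, for every two positions $x<y$ lying in one block $\lb kp+1..(k+1)p\rb$, the value $\operatorname{sgn}(S[y]-S[x])$ depends only on the offsets $(x-1)\bmod p$ and $(y-1)\bmod p$}: each block is a string realising one fixed shape, order-equivalence of the blocks is exactly equality of these signs, and the last, possibly partial, block adds nothing. For sliding op-periods one obtains, after bookkeeping with the first and last partial blocks of every shift, the local characterisation: \emph{$S$ has sliding op-period $p$ iff, for all $x<y$ with $y-x<p$ and $y+p\le n$, one has $\operatorname{sgn}(S[y]-S[x])=\operatorname{sgn}(S[y+p]-S[x+p])$} --- the sign of a ``short'' difference is invariant under shifting both endpoints by $\pm p$ as long as we stay inside $S$. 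The one delicate point is the ``$\Leftarrow$'' direction for a shift $s$ whose block structure has no full block at all (possible only when $n<2p$): one must produce a single shape realised by both partial blocks, and the local condition is exactly the overlap consistency that lets one glue the two partial order-types together. I expect most of the routine work to sit here.

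For the \textbf{initial} case put $P=p/d$ and $Q=q/d$, so $\gcd(P,Q)=1$; since $d\mid p$ and $d\mid q$, every length-$d$ block lies inside one length-$p$ block and inside one length-$q$ block. Fix offsets $0\le u<v\le d-1$ and set $h(\beta)=\operatorname{sgn}\bigl(S[\beta d+1+v]-S[\beta d+1+u]\bigr)$ for every $\beta$ for which this position exists. The initial characterisation applied to $p$ shows $h(\beta)$ depends only on $\beta\bmod P$, and applied to $q$ that it depends only on $\beta\bmod Q$; hence, read as a string, $h$ has ordinary periods $P$ and $Q$. Its domain contains $\{0,1,\dots,P+Q-2\}$ exactly because $n\ge p+q-d=d(P+Q-1)$, so the ordinary Fine--Wilf lemma (length $P+Q-1=P+Q-\gcd(P,Q)$, periods $P,Q$) forces $h$ to be constant there, hence on its whole domain. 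Thus for $x<y$ in a common $d$-block $\operatorname{sgn}(S[y]-S[x])$ depends only on $\bigl((x-1)\bmod d,(y-1)\bmod d\bigr)$, which is precisely ``$S$ has initial op-period $d$''.

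For the \textbf{sliding} case we move pairs of positions. Take $x<y$ with $\delta:=y-x<d$ and $y+d\le n$; we must show $\operatorname{sgn}(S[y]-S[x])=\operatorname{sgn}(S[y+d]-S[x+d])$. As $\delta<d\le q<p$, the sliding characterisations for $p$ and for $q$ say that $\operatorname{sgn}(S[y']-S[x'])$ is unchanged when $(x',y')$ is replaced by $(x'\pm p,y'\pm p)$ or $(x'\pm q,y'\pm q)$ provided both coordinates stay in $\lb 1..n\rb$; equivalently, the left endpoint may be moved by $\pm p,\pm q$ within the window $\lb 1..n-\delta\rb$. Since $\gcd(p,q)=d$, the Fine--Wilf connectivity lemma joins $x$ and $x+d$ by $\pm p,\pm q$ steps inside any window of length $\ge p+q-d$, and here the window has length $n-\delta\ge(p+q-1)-(d-1)=p+q-d$ --- exactly the threshold, which is why the sliding statement needs $n\ge p+q-1$ rather than $n\ge p+q-d$. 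Composing the sign-preserving steps along such a path and invoking the ``$\Leftarrow$'' direction of the characterisation for $d$ (used only here, and only in the harmless regime $n\ge 2d$, which holds since $p>q>1$ and $n\ge p+q-1$) gives sliding op-period $d$.
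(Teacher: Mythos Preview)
The paper does not prove this theorem; it is quoted from \cite{DBLP:journals/tcs/MatsuokaAIBT16} and stated without argument (the only comment is the remark that the proof ``can be easily adapted'' to the synchronised variant, Theorem~\ref{t:fwd3}). So there is no in-paper proof to compare your proposal against.

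That said, your approach is sound and self-contained. Recasting initial and sliding op-periodicity as local sign conditions and then running classical Fine--Wilf at that level --- on the auxiliary sequence $h$ of length $P+Q-1$ for the initial case, and via the graph-connectivity formulation on the window $\lb 1..n-\delta\rb$ for the sliding case --- is clean and recovers exactly the stated thresholds, including the pleasant explanation of why the sliding bound is $p+q-1$ rather than $p+q-d$. The one place you flag as delicate, the ``$\Leftarrow$'' direction of the sliding characterisation when there is no full block, is never actually invoked: you apply that direction only for the period $d$, and $n\ge p+q-1\ge 2q\ge 2d$ guarantees every shift has a full $d$-block, so the reconstruction of the shape is routine. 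The ``$\Rightarrow$'' direction for $p$ and $q$, which you do need without any lower bound on $n/p$, holds unconditionally (it is essentially the op-square condition of Observation~\ref{obs:char}, or equivalently Lemma~\ref{lem:sliding}).
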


\noindent
The aim of this section is to show a periodicity lemma in the case that $\gcd(p,q)=1$.

\subsection{Preliminary Notation}
For a string $S$ of length $n$, by $S[i]$ (for $1 \le i \le n$) we denote the $i$th letter of $S$ and by $S[i..j]$ we denote a \emph{factor} of $S$ equal to $S[i] \ldots S[j]$. If $i>j$, $S[i..j]$ denotes the empty string $\varepsilon$.

A string which is strictly increasing, strictly decreasing, or constant, is called \emph{strictly monotone}. 
A \emph{strictly monotone op-period} of $S$ is an op-period with a strictly monotone shape.
Such an op-period is called increasing (decreasing, constant) if so is its shape. 
Clearly, any divisor of a strictly monotone op-period is a strictly monotone op-period as well.
A string $S$ is \emph{2-monotone} if $S=S_1S_2$, where $S_1,S_2$ are strictly monotone in the same direction.

Below we assume that $n>p>q>1$. Let a string $S=S[1..n]$ have op-periods $(p,i)$ and $(q,j)$.
If there exists a number $k\in \lb 1..n-1 \rb$ such that $k\bmod p = i$ and $k\bmod q = j$, we say that these op-periods are \emph{synchronized} and $k$ is a \emph{synchronization point} (see Fig.~\ref{f:synch}).

\begin{figure}[!htb]
\centerline{
\includegraphics[trim={33 715 270 60}, clip]{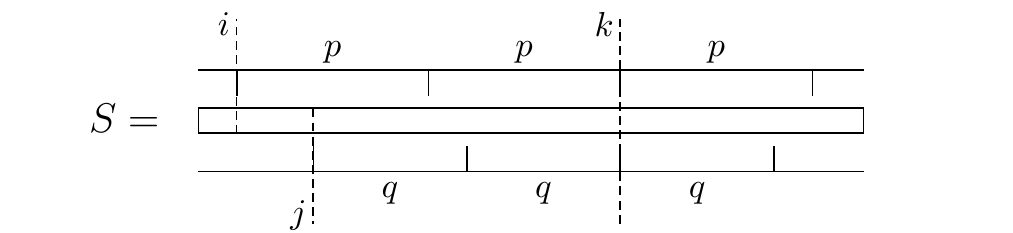} }
\caption{Op-periods $(p,i)$ and $(q,j)$ synchronized at position $k$.}\label{f:synch}
\end{figure}

\begin{remark}
The proof of Theorem~\ref{t:fwd2} can be easily adapted to prove the following.
\end{remark}

\begin{theorem} \label{t:fwd3}
Let $p>q>1$ and $d=\gcd(p,q)$. If op-periods $p$ and $q$ of a string $S$ of length $n\ge p+q-1$ are synchronized, then $S$ has op-period $d$, synchronized with them.
\end{theorem}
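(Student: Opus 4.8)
The plan is to reduce Theorem~\ref{t:fwd3} to Theorem~\ref{t:fwd2}. Recall that an op-period $(p,i)$ of $S$ means $S$ embeds into a block structure $V_1V_2\cdots V_k$ with $|V_\ell|=p$, all $V_\ell$ order-equivalent, and the suffix $S[i+1..n]$ aligned as a prefix of $V_2\cdots V_k$; equivalently, there is an ``extension'' of $S$ to the left by exactly $i$ letters and to the right (up to a block boundary) that is genuinely $p$-periodic in the op-sense starting from position $1$. Concretely, a synchronization point $k$ with $k\bmod p = i$ and $k\bmod q = j$ is precisely a position of $S$ that is simultaneously a block boundary for the $p$-structure and for the $q$-structure. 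So the first step is: using the synchronization point $k\in\lb 1..n-1\rb$, look at the factor $S' = S[k+1..n]$ (or, if $k$ is large, a suitable factor containing $k$ as an endpoint); this factor $S'$ has \emph{initial} op-period $p$ and \emph{initial} op-period $q$, because the block boundary at $k$ makes both structures start cleanly at the beginning of $S'$.

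The second step is a length check. We need $S'$ (or whichever factor we isolate around $k$) to have length at least $p+q-d$ so that Theorem~\ref{t:fwd2} applies. Here is where the hypothesis $n\ge p+q-1$ and the freedom $k\in\lb 1..n-1\rb$ come in: either the suffix $S[k+1..n]$ or the prefix $S[1..k]$ is long enough. If $k \le q-1 < p+q-d$ might fail for the suffix, but then the prefix $S[1..k]$ is short and we instead use the suffix, which has length $n-k \ge n-(q-1) \ge p$; conversely if $k$ is large the suffix could be short and we use the prefix. More carefully, since $n\ge p+q-1$, at least one of $k$ and $n-k$ is $\ge \lceil (p+q-1)/2\rceil$, but that alone is not obviously $\ge p+q-d$; the real point is subtler. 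Actually the clean argument is: WLOG consider the suffix $S'=S[k+1..n]$ of length $n-k$; if $n-k\ge p+q-d$ we are done by applying Theorem~\ref{t:fwd2} to $S'$, obtaining that $S'$ has initial op-period $d$, which combined with the block boundary at $k$ propagates back to give $S$ op-period $d$ synchronized at $k$ (since $k\bmod d = i\bmod d = j\bmod d$, consistently). If $n-k < p+q-d$, then $k > n-(p+q-d) \ge (p+q-1)-(p+q-d) = d-1$, so $k\ge d$, and moreover $k$ is ``small-ish''; in that regime one symmetrically uses the prefix $S[1..k]$ — but a prefix ending at a block boundary does not by itself carry op-periods $p,q$ since those need a further block to the right. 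The resolution is to instead extend: because $(p,i)$ is an op-period of $S$, $S$ sits inside an op-$p$-periodic string $W_p$ of length a multiple of $p$ that is at least $p$ longer than $|S|$ in total; similarly for $q$. Replacing $S$ by a common such extension (choosing a long enough superstring that is simultaneously op-$p$-periodic and op-$q$-periodic and still synchronized at $k$) lets us assume the relevant factor has length $\ge p+q-d$.

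Thus the concrete order of steps is: (1) formalize that a synchronization point $k$ is a simultaneous block boundary, so that on either side of $k$ the string has \emph{initial} op-periods $p$ and $q$ (after harmless extension to make each side long enough, which is possible precisely because op-periodicity is defined via embedding into a longer periodic string); (2) apply Theorem~\ref{t:fwd2} to get initial op-period $d$ on that side; (3) glue the two sides at the boundary $k$, observing $i\equiv j\pmod d$, so that the op-$d$-structures on the left and right of $k$ are consistent and yield a single op-period $d$ of the whole (extended) string synchronized at $k$; (4) restrict back to $S$, noting that op-period $d$ of the extension synchronized at $k$ restricts to op-period $d$ of $S$ synchronized at $k$. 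The main obstacle is step~(1)/(3): one must check that "initial op-period on each side, with matching shift mod $d$" really concatenates to a global op-period $d$ — i.e.\ that the op-$d$-equivalent blocks on the left of $k$ and those on the right of $k$ are mutually order-equivalent, not just internally. This follows because they are all factors of length $d$ of the op-$p$-periodic extension (each such block is a length-$d$ aligned sub-block of a $V_\ell$, and all $V_\ell$ are order-equivalent with matching offsets), so order-equivalence across the boundary is inherited from the $p$-structure; making this inheritance precise, rather than the length bookkeeping, is the delicate part. The length bookkeeping in step~(1), while fiddly, is routine once one allows the extension of $S$ to a genuinely periodic superstring.
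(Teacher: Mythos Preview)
Your reduction to Theorem~\ref{t:fwd2} as a black box has a genuine gap. The paper does not prove Theorem~\ref{t:fwd3} by reduction; the remark preceding it says that the \emph{proof} of Theorem~\ref{t:fwd2} (from~\cite{DBLP:journals/tcs/MatsuokaAIBT16}) can be adapted. That means re-running the combinatorial argument with the synchronization point $k$ playing the role that position~$0$ plays for initial op-periods, not invoking the statement of Theorem~\ref{t:fwd2} on a sub- or super-string.

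The specific failure is the ``common extension'' step. You claim one can replace $S$ by a longer string that is \emph{simultaneously} op-$p$-periodic and op-$q$-periodic with the same alignments. This is false in general. An op-period $(p,i)$ asserts only that $S$ embeds into some $V_1\cdots V_m$; the characters of $V_1$ outside $S$ are not determined, only their relative order with each other is. Extending $S$ by a single character to the left, say, imposes order constraints on that new character from the $p$-structure (it must make $c\,S[1..p{-}1]\approx S[i{+}1..i{+}p]$ in the relevant positions) and, independently, from the $q$-structure. These constraints can conflict: one may force $c<S[1]$ while the other forces $c>S[1]$. So no common op-$p$- and op-$q$-periodic superstring need exist, and you cannot ``assume the relevant factor has length $\ge p+q-d$'' by passing to such an extension.

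Without the extension, the length bookkeeping does not close: from $n\ge p+q-1$ and $k\in\lb 1..n{-}1\rb$ you get only $\max(k,n{-}k)\ge (p{+}q{-}1)/2$, which can be strictly smaller than $p+q-d$, so neither $S[k{+}1..n]$ nor $S[1..k]^R$ is guaranteed long enough for Theorem~\ref{t:fwd2}. Your gluing step~(3) also needs more than the sketch you give: the $d$-blocks immediately left and right of $k$ sit at \emph{different} offsets inside their ambient $p$-blocks (one is a suffix, the other a prefix), so ``inheritance from the $p$-structure'' does not directly match them; one has to combine the $p$- and $q$-structures to bridge the boundary. The clean route, and the one the paper intends, is to re-run the Fine--Wilf style argument of~\cite{DBLP:journals/tcs/MatsuokaAIBT16} anchored at $k$ rather than at~$0$.
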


\subsection{Periodicity Theorem For Coprime Periods}
For a string $S$, by $\trace(S)$ we denote a string $X$ of length $|S|-1$ over the alphabet
$\{\texttt{+},\texttt{0},\texttt{-}\}$ such that:
  $$
    X[i]=\left\{
\begin{array}{ll}
\texttt{+} & \mbox{if }S[i]<S[i+1]\\
\texttt{0} & \mbox{if }S[i]=S[i+1]\\
\texttt{-} & \mbox{if }S[i]>S[i+1].
\end{array}
\right.
  $$

\begin{observation} \label{obs:trace}
\begin{enumerate}[label={\rm(\arabic*)}]
\item\label{oo1}
A string is strictly monotone iff its trace is a unary string.
\item\label{oo2}
If $S$ has an op-period $p$ with shift $i$, then $\trace(S)$ ``almost'' has a period $p$, namely, $\trace(S)[j]=\trace(S)[k]$ for any $j,k\in\lb 1..n-1 \rb$ such that $j=k\pmod p$ and $j\ne i\pmod p$. (This is because both $\trace(S)[j]$ and $\trace(S)[k]$ equal the sign of the difference between the same positions of the shape of the op-period of $S$.)
\end{enumerate}
\end{observation}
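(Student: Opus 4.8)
Below is the plan I would follow; since the statement is labelled an Observation, I expect the argument to be short and essentially bookkeeping.

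For the first part I would just unwind the definition of $\trace$: for a string $S$ of length $n$ we have $\trace(S)=\texttt{+}^{\,n-1}$ precisely when $S[t]<S[t+1]$ for every $t\in\lb 1..n-1\rb$, i.e.\ exactly when $S$ is strictly increasing, and symmetrically $\texttt{0}^{\,n-1}$ characterises constant strings and $\texttt{-}^{\,n-1}$ strictly decreasing ones. As these three words are the only unary strings over $\{\texttt{+},\texttt{0},\texttt{-}\}$, part~\ref{oo1} follows.

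For part~\ref{oo2} the plan is to make the parenthetical hint precise by an index computation. Fix a witness for the op-period, i.e.\ write $S$ as a factor of $V_1V_2\cdots V_k$ with $|V_1|=\dots=|V_k|=p$, $V_1\approx\dots\approx V_k$, and $S[i+1..n]$ a prefix of $V_2\cdots V_k$, and set $W=\shape(V_1)$. The key step is to locate the inter-block boundaries inside $S$: since $S[i+1]$ is the first letter of $V_2$ and consecutive blocks start $p$ positions apart, a boundary between two blocks falls between $S[t]$ and $S[t+1]$ if and only if $t\equiv i\pmod p$. Hence, for every $t\in\lb 1..n-1\rb$ with $t\not\equiv i\pmod p$, the letters $S[t]$ and $S[t+1]$ lie in one common block $V_\ell$, occupying block-positions $m$ and $m+1$ with $m\equiv t-i\pmod p$ and $1\le m\le p-1$; this $m$ depends only on $t\bmod p$. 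Because $V_\ell\approx V_1$, the sign of $S[t+1]-S[t]$ equals the sign of $W[m+1]-W[m]$, so $\trace(S)[t]$ is a function of $t\bmod p$ alone on the set of indices $t\not\equiv i\pmod p$. Applying this to $j$ and $k$ with $j\equiv k\pmod p$ and $j\not\equiv i\pmod p$ (whence also $k\not\equiv i\pmod p$) yields the same $m$ for both, so $\trace(S)[j]=\trace(S)[k]$, as claimed.

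The only point needing care — and the closest thing to an obstacle here — is the bookkeeping at the two ends of $S$: the blocks $V_1$ and $V_k$ may be only partially covered (indeed $S[1..i]=V_1[p-i+1..p]$), so one must check that the correspondence $t\mapsto(\ell,m)$ used above is well defined there and still satisfies $m\equiv t-i\pmod p$ with $m\in\lb 1..p-1\rb$. Once the offset of $S$ inside $V_1\cdots V_k$ is fixed explicitly this is routine arithmetic, so I do not expect a genuine difficulty beyond careful indexing.
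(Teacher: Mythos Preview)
Your proposal is correct and follows exactly the approach the paper indicates: the paper gives no separate proof of this Observation beyond the parenthetical remark in part~\ref{oo2}, and your argument is precisely a careful unpacking of that remark (locating the block boundaries at $t\equiv i\pmod p$ and reading off $\trace(S)[t]$ from the shape $W$ at the residual position $m$). The only addition on your side is the explicit index bookkeeping at the ends of $S$, which is indeed routine and matches the paper's intent.
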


\begin{example}
  Consider the string 7\,5\,8\,1\,4\,6\,2\,4\,5. It has an op-period $(3,1)$ with shape 2\,3\,1.
  The trace of this string is:

\centerline{\texttt{\textcolor{white!70!black}{-}\ +\ -\ 
\textcolor{white!70!black}{+}\ +\ -\ \textcolor{white!70!black}{+}\ +}
}

\noindent
The positions giving the remainder 1 modulo 3 are shown in gray; the sequence of the remaining positions is periodic.
\end{example}
To study traces of strings with two op-periods, we use \emph{periodicity graphs} (see Fig.~\ref{f:graph} below) very similar to those introduced in \cite{ShurGamzova04,DBLP:conf/mfcs/ShurK01} for the study of partial words with two periods. The periodicity graph $G(n,p,i,q,j)$ represents all strings $S$ of length $n{+}1$ having the op-periods $(p,i)$ and $(q,j)$. Its vertex set $\lb 1..n \rb$ is the set of positions of the trace $\trace(S)$. Two positions are connected by an edge iff they contain equal symbols according to Observation~\ref{obs:trace}\ref{oo2}. For convenience, we distinguish between $p$- and $q$-edges, connecting positions in the same residue class modulo $p$ (resp., modulo $q$). The construction of $G(n,p,i,q,j)$ is split in two steps: first we build a \emph{draft graph} $H(n,p,q)$ (see Fig.~\ref{f:graph},a), containing all $p$- and $q$-edges for each residue class, and then delete all edges of the orange clique corresponding to the $i$th class modulo $p$ and all edges of the blue clique corresponding to the $j$th class modulo $q$ (see Fig.~\ref{f:graph},b,c). If some vertices $k,l$ belong to the same connected component of $G=G(n,p,i,q,j)$, then $\trace(S)[k]=\trace(S)[l]$ for every string $S$ corresponding to $G$. In particular, if $G$ is connected, then $\trace(S)$ is unary and $S$ is strictly monotone by Observation~\ref{obs:trace}\ref{oo1}. 

\begin{figure}[htpb]
\begin{center}
\includegraphics[scale=0.83, trim={47 697 390 28}, clip]{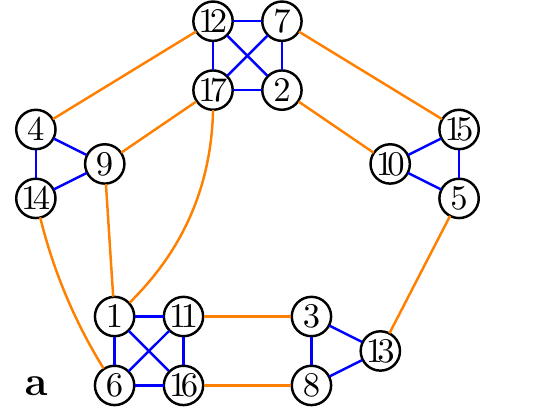}
\includegraphics[scale=0.83, trim={47 697 390 28}, clip]{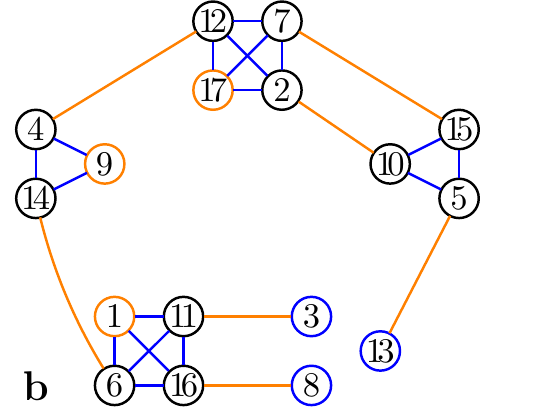}
\includegraphics[scale=0.83, trim={47 697 410 28}, clip]{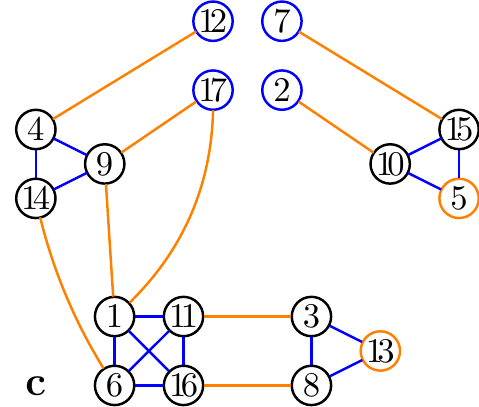}
\end{center}
\caption{Examples of periodicity graph: (a) draft graph $H(17,8,5)$; (b) periodicity graph $G(17,8,1,5,3)$; (c) periodicity graph $G(17,8,5,5,2)$. Orange/blue are $p$-edges (resp., $q$-edges) and the vertices equal to $i$ modulo $p$ (resp., to $j$ modulo $q$).}\label{f:graph}
\end{figure}

\begin{example}
The graph $G(17,8,1,5,3)$ in Fig.~\ref{f:graph},b is connected, so all strings having this graph are strictly monotone. On the other hand, some strings with the graph $G(17,8,5,5,2)$ in Fig.~\ref{f:graph},c have no monotonicity properties. Thus, the string $6\,1\!8\,2\,1\!5\,1\!7\,3\,1\!6\,1\,5\,1\!4\,4\,7\,8\,1\!0\,1\!3\,9\,1\!1\,1\!2$ of length 18 indeed has the op-period 8 with shift 5 (and shape $2\,8\,1\,4\,7\,3\,5\,6$) and the op-period 5 with shift 2 (and shape $1\,3\,5\,2\,4$).
\end{example}

It turns out that the existence of two coprime op-periods makes a string ``almost'' strictly monotone.

\begin{theorem}\label{t:fw1}
Let $S$ be a string of length $n$ that has coprime op-periods $p$ and $q$ with shifts $i$ and $j$, respectively, such that $n > p > q > 1$. Then:
\begin{enumerate}[label={\rm(\alph*)}]
\item\label{aa} if $n>pq$, then $S$ has a strictly monotone op-period $pq$;
\item\label{bb} if $2p<n\le pq$ and the op-periods are synchronized, then $S$ is 2-monotone;
\item\label{cc} if $p{+}q<n\le 2p$ and the op-periods are synchronized, then $(q,j)$ is a strictly monotone op-period of $S$;
\item\label{dd} if $n>\max\{2p,p{+}2q\}$ and the op-periods are not synchronized, then $S$ is strictly monotone;
\item\label{ee} if $n>2p$, the op-periods are not synchronized, and $p$ is initial, then $S$ is strictly monotone;
\item\label{ff} if $p{+}q<n\le 2p$ and $p$ is initial, then $(q,j)$ is a strictly monotone op-period of $S$.
\end{enumerate}
\end{theorem}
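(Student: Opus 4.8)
The plan is to analyze the periodicity graph $G = G(n{-}1,p,i,q,j)$ and its connected components, since connectivity of $G$ forces $S$ to be strictly monotone by Observation~\ref{obs:trace} and, more generally, each connected component of $G$ corresponds to a maximal run of equal trace symbols, which translates into monotonicity statements about factors of $S$. First I would set up the basic structure of the draft graph $H(n{-}1,p,q)$: vertices $\lb 1..n{-}1 \rb$, $p$-edges forming cliques on residue classes mod $p$, $q$-edges forming cliques on residue classes mod $q$. Since $\gcd(p,q)=1$, by the Chinese Remainder Theorem the residue classes mod $pq$ are singletons, and one checks that $H$ restricted to a window of length $pq$ is already connected (this is essentially the classical Fine--Wilf argument on the index set). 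Removing the single orange clique (class $i$ mod $p$) and the single blue clique (class $j$ mod $q$) yields $G$; the crux is to understand when the graph stays connected, or splits into at most two components, each of which is an interval of positions.

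For part~\ref{aa}, when $n > pq$ I would argue that even after deleting the two exceptional cliques, the remaining $p$- and $q$-edges on the first $pq$ positions and on positions $2,\dots,pq{+}1$ (both full windows exist since $n{-}1 \ge pq$) suffice to connect all of $\lb 1..n{-}1 \rb$: a deleted position in one window is reconnected through the shifted window. This gives unary trace, hence $S$ strictly monotone, and it is a strictly monotone op-period $pq$ because $pq < n$. For the synchronized cases \ref{bb} and \ref{cc}, synchronization means there is a position $k$ with $k \equiv i \pmod p$ and $k \equiv j \pmod q$, i.e. the deleted residue classes "line up" at a single point $k \le n{-}1$; all other positions avoid at least one deleted class, so in $H$ minus the two cliques the vertices split into the component of the index set still connected through non-deleted edges. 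The point $k$ is the unique place where both clique-removals bite, so $G$ has at most two components, namely $\lb 1..k{-}1\rb$-ish and $\lb k{+}1..n{-}1\rb$-ish merged appropriately with $k$ itself; when the window is long enough ($n > 2p$) one still gets a single component (monotone $S$), and when $p{+}q < n \le 2p$ one component covers the positions congruent to $j$'s complement enough to force the $q$-shape monotone, giving \ref{cc} and (by the same accounting with the initial shift $i=0$) \ref{ff}.

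The remaining cases \ref{dd}, \ref{ee}, \ref{ff} use the hypothesis that the op-periods are \emph{not} synchronized, or that $p$ is initial ($i=0$). Not synchronized means the congruences $x \equiv i \pmod p$, $x \equiv j \pmod q$ have their CRT solution $x_0$ lying \emph{outside} $\lb 1..n{-}1\rb$; hence every position in $\lb 1..n{-}1\rb$ misses at least one of the two deleted cliques, so each vertex still has incident non-deleted edges of one color. The argument is then that the deleted orange clique and deleted blue clique, being "disjoint in effect," cannot disconnect the whole graph once the window exceeds $\max\{2p,p{+}2q\}$ (resp.\ $2p$ when $i=0$, since then the deleted class mod $p$ is the class of $0$, which contains $p, 2p, \dots$ and interacts more favorably with the $q$-structure). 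When $i=0$, I would note the deleted $p$-class is $\{p, 2p, \dots\}$ and the positions $1,\dots,p{-}1$ are never deleted on the $p$-side, which simplifies the reconnection count and lets the bound drop from $\max\{2p,p{+}2q\}$ to $2p$ in \ref{ee}, and to $p{+}q$ in \ref{ff} (reusing the \ref{cc} analysis without needing synchronization because $i=0$ is automatically "synchronizable" against any $j$ via $k \equiv 0 \pmod p$, $k\equiv j\pmod q$, provided such $k \le n{-}1$, which holds once $n > p+q-1$, cf.\ Theorem~\ref{t:fwd3}).

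The main obstacle I expect is the precise case analysis for the non-synchronized, sub-$pq$ regime: bounding the number of connected components of $G$ and showing it is at most two, with each component being a contiguous block of positions, requires a careful combinatorial count of how a single removed $p$-clique and a single removed $q$-clique can cut the Fine--Wilf connection pattern — essentially a discrete argument about intervals of length $p$ and $q$ tiling a window and where the "gaps" fall. Getting the exact thresholds ($pq$, $2p$, $p{+}q$, $p{+}2q$, $\max\{2p,p{+}2q\}$) to be tight is the delicate part; I would handle it by treating the leftmost and rightmost full $p$- and $q$-windows inside $\lb 1..n{-}1\rb$ and showing the exceptional cliques, which occupy one residue class each, always leave enough cross-color edges to bridge across them once the window is long enough, with the initial-shift case ($i=0$) treated separately since it removes the structural asymmetry caused by a generic shift.
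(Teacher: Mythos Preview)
Your overall framework---analyzing the periodicity graph $G(n{-}1,p,i,q,j)$ and reading monotonicity off from its connected components---is exactly the paper's approach. However, the execution contains a genuine error that breaks part~(a), and a related conceptual slip that affects the synchronized cases.

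For part~(a) you assert that when $n>pq$ the deletions leave the graph connected, whence the trace is unary and $S$ itself is strictly monotone. This is false. Once $n>pq$, the CRT solution $k_0$ of $x\equiv i\pmod p$, $x\equiv j\pmod q$ always lies in $\lb 1..n{-}1\rb$, so the op-periods are necessarily synchronized; at each such $k_0$ (and its translates by $pq$) \emph{all} incident edges---both colors---are deleted, making $k_0$ an isolated vertex. Hence the graph is not connected and the trace need not be unary. The correct conclusion (and what the theorem actually states) is weaker: all isolated vertices lie in a single residue class modulo $pq$, so the trace is constant off that class, which yields a strictly monotone op-period $pq$, not a strictly monotone $S$. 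Your ``shifted window'' reconnection idea cannot reach an isolated vertex, since shifting by $1$ does not change its residue modulo $p$ or modulo $q$.

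The related slip is the claim that ``each connected component of $G$ corresponds to a maximal run of equal trace symbols.'' Components of $G$ are not intervals of positions; what you actually get in cases (a)--(c) is one large component together with isolated vertices, and it is the \emph{location pattern} of those isolated vertices (all $\equiv k_0 \pmod{pq}$ in (a,b); all $\equiv j\pmod q$ in (c)) that yields, respectively, a monotone $pq$-period, $2$-monotonicity, and a monotone $q$-period. Finally, in (e,f) the key point you miss is that for an initial op-period the deleted $p$-class is $\{p,2p,\dots\}$; when $n\le 2p$ this class contributes no edges at all, so only $q$-edges are removed and the analysis of (c) applies directly without invoking synchronization.
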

\begin{proof}
Take a string $S$ of length $n$ having op-periods $p$ (with shift $i$) and $q$ (with shift $j$). Let $n'=n-1$. Consider the draft graph $H(n',p,q)$ (see Fig.~\ref{f:graph},a). It consists of $q$ $q$-cliques (numbered from 0 to $q-1$ by residue classes modulo $q$) connected by some $p$-edges. If $n'=p+q$, there are exactly $q$ $p$-edges, which connect $q$-cliques in a cycle due to coprimality of $p$ and $q$. Thus we have a cyclic order on $q$-cliques: for the clique $k$, the next one is $(k{+}p)\bmod q$. The number of $p$-edges connecting neighboring cliques increases with the number of vertices: if $n'\ge 2p$, every vertex has an adjacent $p$-edge, and if $n'\ge p+2q$, every $q$-clique is connected to the next $q$-clique by at least two $p$-edges.

To obtain the periodicity graph $G(n',p,i,q,j)$, one should delete all edges of the $i$th $p$-clique and the $j$th $q$-clique from $H(n',p,q)$. First consider the effect of deleting $p$-edges. If the $i$th $p$-clique has at least three vertices, then after the deletion each $q$-clique will still be connected to the next one. Indeed, if we delete edges between $i$, $i{+}p$, and $i{+}2p$, then there are still the edges $(i{+}q,i{+}p{+}q)$ and $(i{+}p{-}q,i{+}2p{-}q)$, connecting the corresponding $q$-cliques. If the $p$-clique has a single edge, its deletion will break the connection between two neighboring $q$-cliques if they were connected by a single edge. This is not the case if $n'\ge p{+}2q$, but may happen for any smaller $n'$; see Fig.~\ref{f:graph},c, where $n'=p{+}2q{-}1$.

Now look at the effect of only deleting $q$-edges from $H(n',p,q)$. If all vertices in the $j$th $q$-clique have $p$-edges (this holds for any $j$ if $n'\ge 2p$), the graph after deletion remains connected; if not, it consists of a big connected component and one or more isolated vertices from the $j$th $q$-clique.

Finally we consider the cumulative effect of deleting $p$- and $q$-edges. Any synchronization point becomes an isolated vertex. In total, there are two ways of making the draft graph disconnected: break the connection between neighboring $q$-cliques distinct from the removed $q$-clique (Fig.~\ref{f:disc},a) or get isolated vertices in the removed $q$-clique (Fig.~\ref{f:disc},b). The first way does not work if $n'\ge p+2q$ (see above) or if the op-periods are synchronized (the removed $p$-edge was adjacent to the removed $q$-clique). For the second way, only synchronization points are isolated if $n'\ge 2p$ (each vertex has a $p$-edge, see above). Note that in this case all non-isolated vertices of periodicity graph are connected. Hence all positions of the trace $\trace(S)$, except for the isolated ones, contain the same symbol. So all factors of $S$ involving no isolated positions are strictly monotone (in the same direction). 

\begin{figure}[!htb]
\centerline{
\includegraphics[trim={47 697 390 28}, clip]{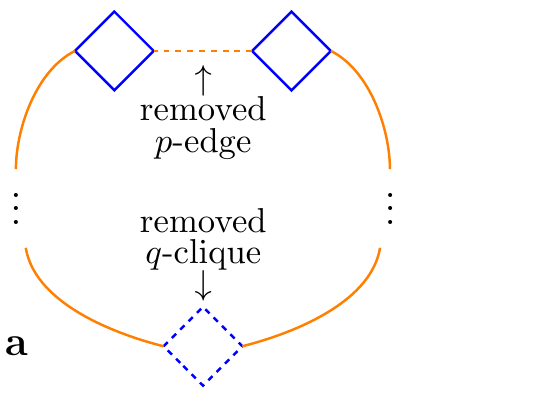}
\includegraphics[trim={47 697 390 28}, clip]{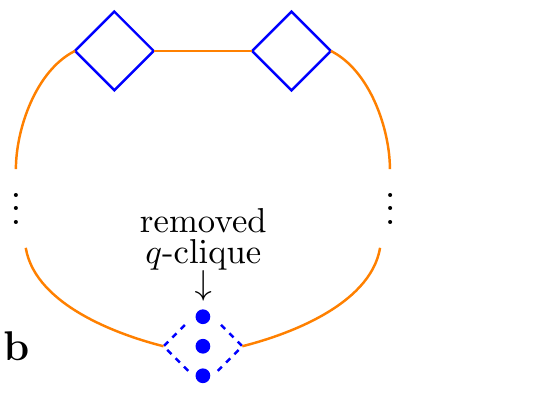} }
\caption{Disconnecting a draft graph: (a) removing the only edge between neighboring $q$-cliques distinct from the removed $q$-clique; (b) getting isolated vertices in the removed $q$-clique.}\label{f:disc}
\end{figure}

At this point all statements of the theorem are straightforward:
\begin{itemize}
\item[(a,b)] all synchronization points are equal modulo $pq$ by the Chinese Remainder Theorem;
\item[(c)] all isolated positions are equal modulo $q$;
\item[(d)] the condition on $n$ excludes both ways to disconnect the draft graph;
\item[(e,f)] for the initial op-period, $i=p$; if $n\le 2p$, there is no deletion of $p$-edges; if $n>2p$, then the $q$-cliques connected by the edge $(p,2p)$ are also connected by $(p{-}q,2p{-}q)$; so only the disconnection by isolated positions is possible. \qedhere
\end{itemize}
\end{proof}

\section{Algorithmic Toolbox for Op-Model}\label{sec:toolbox}
  Let us start by recalling the \emph{encoding for op-pattern matching} (\emph{op-encoding}) from \cite{DBLP:journals/tcs/CrochemoreIKKLP16,DBLP:journals/ipl/KubicaKRRW13}.
  For a string $S$ of length $n$ and $i \in \lb 1 .. n \rb$ we define:
  $$\alpha_i(S)\text{ as the largest }j<i\text{ such that }S[j]=\max\{S[k]\ :\ k < i,\ S[k] \le S[i]\}.$$
  If there is no such $j$, then $\alpha_i(S)=0$. Similarly, we define:
  $$\beta_i(S)\text{ as the largest }j<i\text{ such that }S[j]=\min\{S[k]\ :\ k < i,\ S[k]\ge S[i]\},$$
  and $\beta_i(S)=0$ if no such $j$ exists.
  Then $(\alpha_1(S),\beta_1(S)),\ldots,(\alpha_n(S),\beta_n(S))$ is the op-encoding of $S$.
  It can be computed efficiently as mentioned in the following lemma.
  \begin{lemma}[\cite{DBLP:journals/ipl/KubicaKRRW13}]\label{lem:op_enc}
    The op-encoding of a string of length $n$ over an integer alphabet can be computed in $O(n)$ time.
  \end{lemma}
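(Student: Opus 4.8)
The plan is to reduce the problem first to a canonical representative of the order-equivalence class and then to an incremental predecessor/successor search. Observe that $\alpha_i(S)$ and $\beta_i(S)$ are defined purely through inequalities and equalities between letters of $S$, hence they are invariant under order-equivalence; so it suffices to compute the op-encoding of $\shape(S)$. Since the alphabet is integer and polynomially bounded, $S$ can be sorted in $O(n)$ time, and $\shape(S)$ — a string whose letters form exactly the contiguous range $\{1,\dots,m\}$, where $m\le n$ is the number of distinct letters of $S$ — is then read off in $O(n)$ time. From now on we assume that $S$ is this rank sequence over $\{1,\dots,m\}$.

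Next, reformulate the quantities to be computed: $\alpha_i$ is the last position of $S[1..i{-}1]$ carrying the \emph{largest} letter that is $\le S[i]$ and occurs in $S[1..i{-}1]$, and $\beta_i$ is the analogous position for the \emph{smallest} letter $\ge S[i]$. I would scan $i=1,\dots,n$ while maintaining an array $\mathrm{last}[1..m]$, where $\mathrm{last}[v]$ is the most recent occurrence of letter $v$ among $S[1..i{-}1]$ (and $0$ if $v$ has not occurred yet). If the letter $S[i]$ has already occurred, then the relevant predecessor and successor letters are both $S[i]$ itself, so $\alpha_i=\beta_i=\mathrm{last}[S[i]]$ is obtained in $O(1)$ time; only when $i$ is the first occurrence of its letter do we actually need to locate the nearest strictly smaller letter (for $\alpha_i$) and the nearest strictly larger letter (for $\beta_i$) among the letters seen so far, after which one more lookup in $\mathrm{last}$ yields the position. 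Finally we update $\mathrm{last}[S[i]]\gets i$ and proceed.

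It remains to support, over the universe $\{1,\dots,m\}$, a sequence of operations that mark a letter as ``seen'' (each letter is marked exactly once, at its first occurrence) interleaved with queries asking for the nearest marked letter below, and the nearest marked letter above, a given value. Because the universe is a contiguous interval and marks are only ever added, never removed, this is precisely an interval union–find (``nearest marked neighbour'') instance: maintain for every letter a pointer to the nearest marked letter $\le$ it and another pointer to the nearest marked letter $\ge$ it, answer each query by following the appropriate pointer chain, and apply path compression; appealing to the linear-time bound for union–find with this interval structure, the whole computation runs in $O(n)$ time and $O(n)$ space. The main obstacle is exactly this last step — obtaining a genuinely linear rather than, say, $O(n\log\log n)$ running time — so the argument has to exploit the special ``interval'' shape of the union–find instance (equivalently, the linear-time bound for union–find on a path) rather than treating the predecessor/successor queries as black-box operations.
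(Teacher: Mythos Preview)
The paper does not give its own proof of this lemma; it is quoted as a known fact from the cited reference. Your reduction to the rank string $\shape(S)$ and the forward scan with the array $\mathrm{last}[\cdot]$ are fine, and you correctly isolate the only nontrivial subproblem: at each first occurrence, find the nearest already-seen letter on either side. The gap is in how you solve that subproblem. Union--find with path compression realises the ``nearest marked neighbour'' abstraction only when the marked set is \emph{shrinking} (each find can be followed by a union that absorbs the queried element into its neighbour); with a \emph{growing} marked set, a pointer that has been path-compressed across a currently unmarked letter $u$ becomes stale once $u$ is later marked. Concretely, if the distinct letters first occur in the order $1,5,3,4$, the predecessor query at $5$ compresses $\mathrm{pred}[4]$ down to $1$, and the later predecessor query at $4$ then returns $1$ instead of the correct answer $3$. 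The Gabow--Tarjan linear bound you invoke needs a static union tree, which this instance does not provide in forward time.

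The repair is short once you observe that every query is issued for a letter immediately before that same letter is marked, so the instance is offline with a very special shape: for each letter $v$ you want its predecessor and successor among the letters whose first occurrence precedes that of $v$. Reversing time gives an $O(m)$ solution: initialise a doubly linked list $1\leftrightarrow 2\leftrightarrow\cdots\leftrightarrow m$, process letters by \emph{decreasing} first-occurrence time, read off the two neighbours of the current letter, then delete it. (Equivalently, this is two instances of all-nearest-smaller-values on the array of first-occurrence times, each solvable with a single stack.) This reverse-time linked-list sweep is essentially the argument in the cited reference, applied there directly to positions sorted by value and scanned from $n$ down to $1$.
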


  \noindent
  The op-encoding can be used to efficiently extend a match.
 
  \begin{lemma}\label{lem:extend}
    Let $X$ and $Y$ be two strings of length $n$ and assume that the op-encoding of $X$ is known.
    If $X[1..n-1] \approx Y[1..n-1]$, one can check if $X \approx Y$ in $O(1)$ time.
  \end{lemma}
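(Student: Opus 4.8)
The plan is to reduce the test $X \approx Y$ to a constant number of comparisons involving position $n$, using the op-encoding of $X$ together with the already-known order-equivalence of the length-$(n-1)$ prefixes. First I would recall that $X \approx Y$ holds iff, in addition to $X[1..n-1] \approx Y[1..n-1]$ (which is given), the position $n$ "fits" in $Y$ exactly as it does in $X$: that is, for every $k < n$, $X[k] < X[n] \Leftrightarrow Y[k] < Y[n]$, $X[k] = X[n] \Leftrightarrow Y[k] = Y[n]$, and $X[k] > X[n] \Leftrightarrow Y[k] > Y[n]$. The key observation is that, because $X[1..n-1] \approx Y[1..n-1]$, the relative order among the first $n-1$ positions is the same in both strings, so it suffices to check that $X[n]$ and $Y[n]$ are inserted into the same "gap" of the sorted order of the prefix. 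This gap is pinned down by just two witnesses: the largest prefix value that is $\le X[n]$ and the smallest prefix value that is $\ge X[n]$ — which is exactly what $\alpha_n(X)$ and $\beta_n(X)$ record.

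The main steps are as follows. First, if $\alpha_n(X) = \beta_n(X) = j \ne 0$, then $X[n] = X[j]$, and I would check that $Y[n] = Y[j]$; conversely if $Y[n]=Y[j]$ then $X\approx Y$, and if $Y[n]\ne Y[j]$ equivalence fails. Second, if $\alpha_n(X) = a$ and $\beta_n(X) = b$ with $a \ne b$ (allowing $a=0$ or $b=0$), then $X[n]$ lies strictly between the value at $a$ (or below everything, if $a=0$) and the value at $b$ (or above everything, if $b=0$). By the prefix equivalence, the corresponding "slot" in $Y$ is the open interval between $Y[a]$ and $Y[b]$ (with the obvious one-sided conventions when $a=0$ or $b=0$). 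So I would verify the two inequalities $Y[a] < Y[n]$ (vacuous if $a=0$) and $Y[n] < Y[b]$ (vacuous if $b=0$); these hold iff $X \approx Y$. Each case uses $O(1)$ array lookups and comparisons, so the total time is $O(1)$.

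The correctness argument hinges on showing that placing $X[n]$ correctly relative to the two witnesses $\alpha_n(X),\beta_n(X)$ forces the correct relation to \emph{every} prefix position. For this I would argue: if $X[k] \le X[n]$ for some $k<n$, then $X[k] \le X[\alpha_n(X)]$ by maximality in the definition of $\alpha$, hence $Y[k] \le Y[\alpha_n(X)]$ by prefix-equivalence, and combined with $Y[\alpha_n(X)] < Y[n]$ (or $= Y[n]$ in the degenerate case) this gives $Y[k] < Y[n]$ (resp.\ the right relation); symmetrically for $X[k] \ge X[n]$ using $\beta_n(X)$. The only mild subtlety — and the step I expect to need the most care — is the bookkeeping of the degenerate cases ($\alpha_n(X)=0$, $\beta_n(X)=0$, or $\alpha_n(X)=\beta_n(X)$, i.e.\ $X[n]$ is a new minimum, a new maximum, or a repeat of an existing value), making sure the one-sided and equality conventions line up so that each branch is both necessary and sufficient for $X \approx Y$.
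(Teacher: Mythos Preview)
Your proposal is correct and follows essentially the same approach as the paper: both reduce the check to comparing $Y[n]$ against $Y[\alpha_n(X)]$ and $Y[\beta_n(X)]$, with the same case split on whether $\alpha_n(X)=\beta_n(X)$ and the same handling of the degenerate $0$ values. The only difference is that the paper outsources the correctness argument by citing Lemma~3 of~\cite{DBLP:journals/tcs/CrochemoreIKKLP16}, whereas you spell out the (straightforward) proof directly.
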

  \begin{proof}
    Let $i =\alpha_n(X)$ and $j = \beta_n(X)$.
    Lemma 3 from \cite{DBLP:journals/tcs/CrochemoreIKKLP16} asserts that,
    if $i\ne j$, then
      $$ X \approx Y \Longleftrightarrow Y[i] < Y[n] < Y[j],$$
    and otherwise,
      $$ X \approx Y \Longleftrightarrow Y[i] = Y[n] = Y[j].$$
    (Conditions involving $Y[i]$ or $Y[j]$ when $i=0$ or $j=0$ should be omitted.)
  \end{proof}

\subsection{$\PREF$ table}
For a string $S$ of length $n$, we introduce a table $\PREF[1..n]$ such that $\PREF[i]$ is the length of the longest prefix of $S[i..n]$ that is equivalent to a prefix of $S$. It is a direct analogue of the PREF array used in standard string matching (see \cite{Jewels}) and can be computed similarly in $O(n)$ time using one of the standard encodings for the op-model that were used in \cite{DBLP:journals/ipl/ChoNPS15,DBLP:journals/tcs/CrochemoreIKKLP16,DBLP:journals/ipl/KubicaKRRW13}; see lemma below.

\begin{lemma}\label{lem:PREF}
  For a string of length $n$, the $\PREF$ table can be computed in $O(n)$ time.
\end{lemma}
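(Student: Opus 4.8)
The plan is to adapt the classical linear-time algorithm for the standard $\regPREF$ (equivalently, the $Z$) array, exploiting the fact that order-equivalence is a substring consistent equivalence relation: if $U\approx W$ with $|U|=|W|$, then $U[a..b]\approx W[a..b]$ for all $1\le a\le b\le |U|$ (immediate from the definition of $\approx$). First I would compute the op-encoding $(\alpha_i(S),\beta_i(S))_{i=1}^{n}$ of $S$ in $O(n)$ time via Lemma~\ref{lem:op_enc}. Since $\alpha_i$ and $\beta_i$ depend only on $S[1..i]$, the op-encoding of any prefix $S[1..m]$ is simply the length-$m$ prefix of the op-encoding of $S$; hence, whenever a match $S[1..m]\approx S[i..i+m-1]$ is already known, Lemma~\ref{lem:extend} lets us decide in $O(1)$ time whether it extends to $S[1..m+1]\approx S[i..i+m]$.

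Next I would run the usual sliding-window scan. Set $\PREF[1]:=n$ and maintain an interval $[l..r]$ (initially empty) with the invariant $S[l..r]\approx S[1..r-l+1]$ and $r$ as large as possible among the matches found so far. To process $i=2,\ldots,n$: if $i>r$, match $S[i..n]$ against $S[1..n]$ from scratch by repeated calls to Lemma~\ref{lem:extend}; if $i\le r$, put $k:=i-l+1$, so that by the SCER property applied to the window $S[i..r]\approx S[k..r-l+1]$. If $\PREF[k]<r-i+1$, the mismatch lies strictly inside the window and $\PREF[i]=\PREF[k]$; otherwise $\PREF[i]\ge r-i+1$, and we keep extending the equivalence past position $r$ using Lemma~\ref{lem:extend}. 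In the latter case, whenever the match reaches beyond $r$ we update $[l..r]$ to $[i..i+\PREF[i]-1]$.

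The correctness of the reuse step is where the only genuine (if minor) work lies: one must check, exactly as for ordinary strings but invoking SCER in place of character equality, that (i) $\PREF[i]\ge\min(\PREF[k],\,r-i+1)$, because the corresponding prefix of $S[i..n]$ is a factor of the window $S[l..r]\approx S[1..r-l+1]$ and is therefore equivalent to the matching prefix of $S$; and (ii) when $\PREF[k]<r-i+1$ one also has $\PREF[i]\le\PREF[k]$, since a strictly longer match at position $i$ would, again by SCER, transfer to position $k$ and contradict the maximality of $\PREF[k]$. Granting this, the time bound follows from the classical amortized argument: each call to Lemma~\ref{lem:extend} either strictly increases $r$ or is the final, failing comparison for the current $i$, so there are $O(n)$ calls in total, and every remaining operation is $O(1)$; combined with the $O(n)$ preprocessing this gives the claimed $O(n)$ time. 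I do not anticipate any serious obstacle beyond faithfully transcribing the standard $Z$-array invariants into the order-preserving setting and confirming that the op-encoding of a prefix is a prefix of the op-encoding.
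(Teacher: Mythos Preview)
Your proposal is correct and follows essentially the same approach as the paper: both adapt the classical linear-time $\regPREF$/$Z$-array algorithm by precomputing the op-encoding via Lemma~\ref{lem:op_enc}, using Lemma~\ref{lem:extend} for $O(1)$ extension of a known match, and relying on the SCER (equivalently, transitivity of $\approx$) to justify the reuse step. The paper states the two key properties more tersely while you spell out the $[l..r]$ window mechanics, but the argument is the same.
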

\begin{proof}
  Let $S$ be a string of length $n$.
  The standard linear-time algorithm for computing the $\regPREF$ table for $S$ (see, e.g., \cite{Jewels}) uses the following two properties of the table:
  \begin{enumerate}
    \item If $\regPREF[i]=k$, $i < j < i+k$, and $\regPREF[j+1-i] < i+k-j$, then $\regPREF[j]=\regPREF[j+1-i]$.
    \item If we know that $\regPREF[i] \ge k$, then $\regPREF[i]$ can be computed in $O(\regPREF[i]-k)$ time
    by extending the common prefix character by character.
  \end{enumerate}
  In the case of the $\PREF$ table, the first of these properties extends without alterations due to the transitivity of the $\approx$ relation.
  As for the second property, the matching prefix $S[1..k] \approx S[i..i+k-1]$ can be extended character by character using Lemma~\ref{lem:extend}
  provided that the op-encoding for $S$ is known.
  The op-encoding can be computed in advance using Lemma~\ref{lem:op_enc}.
\end{proof}

\noindent
Let us mention an application of the $\PREF$ table that is used further in the algorithms. We denote by $\LPP_p(S)$ (``longest op-periodic prefix'') the length of the longest prefix of a string $S$ having $p$ as an initial op-period.

\begin{lemma}\label{lem:LPP}
  For a string $S$ of length $n$, $\LPP_p(S)$ for a given $p$ can be computed in $O(\LPP_p(S)/p+1)$ time after $O(n)$-time preprocessing.
\end{lemma}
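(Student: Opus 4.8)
The plan is to reduce the computation of $\LPP_p(S)$ to a short scan of the $\PREF$ table. First I would record the combinatorial meaning of an initial op-period in terms of $\PREF$: for $p<\ell\le n$, the prefix $S[1..\ell]$ has initial op-period $p$ if and only if
\[
  \PREF[jp+1]\ \ge\ \min(p,\,\ell-jp)\quad\text{for every }j\ge 1\text{ with }jp+1\le\ell.
\]
The forward implication is immediate from the definition: if $S[1..\ell]$ is a prefix of a concatenation $W_1W_2\cdots$ of length-$p$ strings with $W_1\approx W_2\approx\cdots$, then $W_1=S[1..p]$, and, writing $m=\lceil\ell/p\rceil$, each full block $S[jp+1..(j+1)p]$ equals some $W_{j+1}\approx W_1$ while the last (possibly partial) block $S[(m-1)p+1..\ell]$ is a prefix of $W_m\approx W_1$, hence order-equivalent to $S[1..\ell-(m-1)p]$; restricting order-equivalence to these factors yields the displayed inequalities. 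For the converse I would assemble a witness explicitly: take $W_1=S[1..p]$, use the full blocks of $S[1..\ell]$ as the following $W_t$'s (they are $\approx W_1$ since $\PREF[jp+1]\ge p$), and extend the final partial block $S[(m-1)p+1..\ell]$ — which is $\approx S[1..\ell-(m-1)p]$, a prefix of $W_1$ — to a length-$p$ block $W_m\approx W_1$. The existence of such a compatible extension is a routine property of order-equivalence: an order-equivalence between two strings over the integers can always be extended to an order-equivalence between prescribed extensions of them.

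Given the characterization, $\LPP_p(S)$ is obtained by scanning the blocks $S[1..p],\,S[p+1..2p],\dots$ from left to right. Starting at $j=1$: if $jp\ge n$, stop and return $n$ (every block up to the end of $S$ has matched fully); otherwise inspect $\PREF[jp+1]$ — if $\PREF[jp+1]\ge p$, the $j$-th block matches in full, so advance to $j+1$; and at the first index $j^*\ge 1$ with $jp+1\le n$ and $\PREF[j^*p+1]<p$, stop and return $\ell=j^*p+\PREF[j^*p+1]$. Correctness follows from the characterization: all blocks before $j^*$ match completely; block $j^*$ matches only up to length $\PREF[j^*p+1]<p$, so the op-periodic prefix cannot reach position $(j^*+1)p$, hence no later block is relevant; and increasing $\ell$ by one breaks the required inequality at block $j^*$.

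For the running time, the $\PREF$ table is built once in $O(n)$ time by Lemma~\ref{lem:PREF}; this is the $O(n)$ preprocessing. Each step of the scan performs $O(1)$ work, and the number of steps is $O(\LPP_p(S)/p+1)$: when the scan halts at $j^*$ we have $\LPP_p(S)\ge j^*p$, and when it halts because $jp\ge n$ we have $\LPP_p(S)=n$ with $j\le n/p+1$. The part I expect to require the most care is pinning down the characterization exactly against the (intentionally permissive) definition of an initial op-period — in particular, correctly treating the trailing partial block and justifying that a block order-equivalent to a prefix of $S[1..p]$ can always be completed to a length-$p$ block order-equivalent to $S[1..p]$. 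Once that is settled, the remainder is bookkeeping about block boundaries and the endpoints of the scan.
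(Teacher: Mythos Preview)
Your proposal is correct and follows essentially the same approach as the paper: both compute the $\PREF$ table once and then scan the block boundaries $p+1,2p+1,\ldots$, stopping at the first index $j^*$ where $\PREF[j^*p+1]<p$ and returning $j^*p+\PREF[j^*p+1]$ (equivalently, the paper's $i_0+\PREF[i_0]-1$ with $i_0=j^*p+1$). You give a more explicit justification of the underlying characterization (including the extension of the trailing partial block), whereas the paper simply invokes it; but the algorithm and its analysis are the same.
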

\begin{proof}
  We start by computing the $\PREF$ table for $S$ in $O(n)$ time.
  We assume that $\PREF[n+1]=0$.
  To compute $\LPP_p(S)$, we iterate over positions $i=p+1,2p+1,\ldots$ and for each of them check if $\PREF[i] \ge p$.
  If $i_0$ is the first position for which this condition is not satisfied (possibly because $i_0>n-p+1$), we have $\LPP_p(S)=i_0+\PREF[i_0]-1$.
  Clearly, this procedure works in the desired time complexity.
\end{proof}

\begin{remark}
  Note that it can be the case that $\LPP_p(S) \ne \PREF[p+1]$.
  See, e.g., the strings in Fig.~\ref{f:types} and $p=4$.
\end{remark}

\subsection{Longest Common Extension Queries}
For a string $S$, we define a \emph{longest common extension} query $\LCP(i,j)$ in the order-preserving model
as the maximum $k \ge 0$ such that $S[i..i+k-1] \approx S[j..j+k-1]$.
Symmetrically, $\LCS(i,j)$ is the maximum $k \ge 0$ such that $S[i-k+1..i] \approx S[j-k+1..j]$.

Similarly as in the standard model \cite{DBLP:books/daglib/0020103}, LCP-queries in the op-model can be answered using lowest common ancestor (LCA) queries in the op-suffix tree;
see the following lemma.

\begin{lemma}\label{lem:LCP}
  For a string of length $n$, after preprocessing in $O(n \log \log n)$ expected time or in \linebreak$O(n \log^2 \log n / \log \log \log n)$ worst-case time
  one can answer $\LCP$-queries in $O(1)$~time.
\end{lemma}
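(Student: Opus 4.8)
The plan is to mimic the classical reduction from longest-common-extension queries to lowest-common-ancestor queries, but carried out on the op-suffix tree of $S$ rather than on an ordinary suffix tree. First I would invoke the op-suffix tree construction of \cite{DBLP:journals/tcs/CrochemoreIKKLP16}: for a string of length $n$ it builds, within the stated time bounds ($O(n\log\log n)$ expected or $O(n\log^2\log n/\log\log\log n)$ worst case) and in $O(n)$ space, a compacted trie of the op-encodings of all suffixes $S[i..n]$, where two suffixes share a path prefix of length $\ell$ exactly when their length-$\ell$ prefixes are order-equivalent. I would denote by $\mathit{leaf}(i)$ the leaf corresponding to the suffix starting at position $i$, and note that the string depth of the lowest common ancestor of $\mathit{leaf}(i)$ and $\mathit{leaf}(j)$ equals the length of the longest common order-equivalent prefix of $S[i..n]$ and $S[j..n]$, i.e.\ exactly $\LCP(i,j)$.

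Next I would equip this tree with a data structure answering LCA queries in $O(1)$ time after $O(m)$-time preprocessing on an $m$-node tree (e.g.\ the classical Harel--Tarjan / Bender--Farach-Colton scheme), and precompute the string depth of every node, which is done by a single linear traversal. An $\LCP(i,j)$ query is then answered by computing $u=\mathrm{LCA}(\mathit{leaf}(i),\mathit{leaf}(j))$ and returning its string depth; a boundary case is $i=j$, or when one suffix is a prefix of the other in the op-sense, which is handled by capping the returned value at $\min(n-i+1,n-j+1)$ (or treated directly since the leaves are distinct when $i\ne j$). All of this preprocessing is dominated by the cost of building the op-suffix tree, so the total preprocessing time matches the claimed bounds, and each query costs $O(1)$. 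For $\LCS(i,j)$ I would simply apply the same construction to the reversed string $S^R$, observing that $S[i-k+1..i]\approx S[j-k+1..j]$ iff $S^R[n-i+1..n-i+k]\approx S^R[n-j+1..n-j+k]$, so $\LCS$-queries on $S$ reduce to $\LCP$-queries on $S^R$.

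The only genuinely delicate point is that the op-model is not a ``free'' string model: order-equivalence is not determined letter-by-letter, so one must be sure that the op-suffix tree of \cite{DBLP:journals/tcs/CrochemoreIKKLP16} really does have the ``shared path prefix $\Leftrightarrow$ order-equivalent prefix'' property, and that string depths are well defined (the relevant consistency is exactly the SCER property, ensuring that a prefix of an op-equivalent pair is again op-equivalent). Since that index is explicitly stated in the excerpt to support $O(m)$-time op-pattern matching queries, this property is available to us, and the reduction goes through verbatim. I therefore expect the write-up to be short: cite the index, cite a constant-time LCA structure, describe the leaf-to-position correspondence, and verify the two boundary cases.
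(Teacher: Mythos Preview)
Your proposal is correct and follows essentially the same approach as the paper: build the op-suffix tree of \cite{DBLP:journals/tcs/CrochemoreIKKLP16}, equip it with a constant-time LCA structure, and read off string depths. One technical nuance the paper makes explicit and you gloss over is that within the stated time bounds only an \emph{incomplete} op-suffix tree is obtained (some first-character edge labels are unknown); the paper then observes that edge labels are irrelevant for LCA/string-depth purposes, so the incomplete tree suffices---you should state this explicitly rather than assume a fully labelled tree is available.
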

\begin{proof}
  The \emph{order-preserving suffix tree} (\emph{op-suffix tree}) that is constructed in \cite{DBLP:journals/tcs/CrochemoreIKKLP16}
  is a compacted trie of op-encodings of all the suffixes of the text.
  In $O(n \log \log n)$ expected time or $O(n \log^2 \log n / \log \log \log n)$ worst-case time
  one can construct a so-called incomplete version of the op-suffix tree
  in which each explicit node may have at most one edge whose first character label is not known.
  Fortunately, for $\LCP$-queries the labels of the edges are not needed; the only required
  information is the depth of each explicit node and the location of each suffix.
  Therefore, for this purpose the incomplete op-suffix tree can be treated as a regular suffix tree
  and preprocessed using standard lowest common ancestor data structure that requires additional
  $O(n)$ preprocessing and can answer queries in $O(1)$ time \cite{LCA}.
\end{proof}

\subsection{Order-preserving Squares}
The factor $S[i..i+2p-1]$ is called an order-preserving square (\emph{op-square}) iff $S[i..i+p-1] \approx S[i+p..i+2p-1]$.
For a string $S$ of length $n$, we define the set

$$\Squares_p = \{i \in \lb 1..n-2p+1 \rb\,:\,S[i..i+2p-1]\text{ is an op-square}\}.$$

Op-squares were first defined in \cite{DBLP:journals/tcs/CrochemoreIKKLP16} where an algorithm computing all the sets $\Squares_p$
for a string of length $n$ in $O(n \log n + \sum_p |\Squares_p|)$ time was shown.

We say that an op-square $S[i..i+2p-1]$ is \emph{right shiftable} if
$S[i+1..i+2p]$ is an op-square and \emph{right non-shiftable} otherwise.
Similarly, we say that the op-square is \emph{left shiftable} if $S[i-1..i+2p-2]$ is an op-square
and \emph{left non-shiftable} otherwise.
Using the approach of~\cite{DBLP:journals/tcs/CrochemoreIKKLP16}, one can show the following lemma.

\begin{lemma}\label{lem:op_sq}
  All the (left and right) non-shiftable op-squares in a string of length $n$ can be computed in $O(n \log n)$ time.
\end{lemma}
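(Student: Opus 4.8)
The plan is to mimic the classical approach for detecting runs (maximal repetitions) via Lempel--Ziv-style factorizations, adapted to the op-model, together with the machinery of~\cite{DBLP:journals/tcs/CrochemoreIKKLP16} and the longest common extension tool of Lemma~\ref{lem:LCP}. First I would recall the structure behind the $O(n\log n + \sum_p|\Squares_p|)$ algorithm of~\cite{DBLP:journals/tcs/CrochemoreIKKLP16}: it is a divide-and-conquer (à la Main--Lorentz) that, for each split point $m$, finds all op-squares straddling $m$. The key observation is that a right non-shiftable op-square $S[i..i+2p-1]$ is exactly the \emph{rightmost} starting position among a maximal block of consecutive starting positions all carrying an op-square of the same length $p$; equivalently, $i+1\notin\Squares_p$. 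Such maximal blocks of equal-length op-squares correspond to ``op-runs'', and a run of length $\ell$ with period $p$ contributes exactly one right non-shiftable and one left non-shiftable op-square of length $2p$ (its rightmost, resp.\ leftmost, occurrence). So the lemma reduces to showing that all maximal op-periodic factors (op-runs) can be found in $O(n\log n)$ time; the total output size is then $O(n\log n)$ as well, since each of the $O(n\log n)$ op-squares reported by the Main--Lorentz procedure can be charged, but we must be careful because the number of \emph{all} op-squares may be $\Omega(n^2)$ while the non-shiftable ones number only $O(n\log n)$.

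Concretely, I would run the divide-and-conquer on $S$: at a node handling $S[\ell..r]$ with midpoint $m$, I look for op-squares with one half entirely in $S[\ell..m]$ and the other touching $S[m+1..r]$. For a candidate period $p$, the standard trick is: let $a=\LCS(m, m-p)$ (longest op-suffix match ending at $m$ vs.\ ending at $m-p$) and $b=\LCP(m+1, m-p+1)$; then there is an op-square of length $2p$ centered appropriately iff $a+b\ge p$, and the set of valid starting positions forms an interval of length $a+b-p+1$. This is where the op-model differs from the standard one: op-equivalence is not ``cancellative'' the way character equality is, so I must justify that $\LCS$ and $\LCP$ genuinely characterize the op-squares in this window. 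The cleanest route is to invoke the corresponding lemma from~\cite{DBLP:journals/tcs/CrochemoreIKKLP16} (their square-detection already does exactly this), so that for each $p$ I obtain, in $O(1)$ time using Lemma~\ref{lem:LCP}, an interval $I_p$ of op-square starting positions crossing $m$. From $I_p$ I can read off whether its left endpoint is left non-shiftable and whether its right endpoint is right non-shiftable: the left endpoint of $I_p$ is left non-shiftable \emph{for this window}, and I additionally check with one $\LCS$ query whether it extends one position further to the left (outside the window), and symmetrically on the right. Since at each level the relevant periods $p$ range over $O(r-\ell)$ values and there are $O(\log n)$ levels, this gives $O(n\log n)$ total time and $O(n\log n)$ candidate non-shiftable squares.

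The main obstacle is \emph{deduplication and correctness of the shiftability test}: an op-square and its shifts may be discovered in several recursion nodes, and ``non-shiftable'' is a global property, not a per-window one, so a square that looks non-shiftable inside a window might actually be shiftable using positions the window does not see. I would resolve this by, after collecting all candidate non-shiftable op-squares, verifying each one globally with two $\LCS$/$\LCP$ queries (testing whether $S[i-1..i+2p-2]$ and $S[i+1..i+2p]$ are op-squares via Lemma~\ref{lem:extend}-style $O(1)$ checks, or directly via the $\LCP$/$\LCS$ data structure), which costs $O(1)$ per candidate; and by observing that the number of \emph{distinct} maximal op-periodic segments is $O(n\log n)$ (each recursion node contributes $O(r-\ell)$ of them), so duplicates inflate the candidate list by at most a constant factor per level. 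A minor subtlety to address is the degenerate case $p\le 2$ and op-squares near the string boundary, handled directly. Putting these together yields the claimed $O(n\log n)$-time algorithm.
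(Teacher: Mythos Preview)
Your Main--Lorentz plan has a genuine gap, stemming precisely from the non-cancellative nature of order-equivalence that you yourself flag but then wave away. In the standard model, $a=\LCS(m,m+p)$ and $b=\LCP(m+1,m+p+1)$ determine the \emph{exact} interval of square starting positions because character equality concatenates. In the op-model they give only a \emph{superset}: every $i\in\Squares_p$ with $m$ in the first half lies in $[m-a+1,\,m+b-p+1]$, but the converse fails, and the true rightmost op-square in the range can be a \emph{strict interior} point of this interval. Concretely, let $S=(0,3,5,1,1,3,2,2,3,1)$, $p=3$, $m=5$. One computes $a=\LCS(5,8)=3$ and $b=\LCP(6,9)=2$, giving the candidate interval $[3,5]$; yet $3,4\in\Squares_3$ while $5\notin\Squares_3$ (since $\shape(1,3,2)=1\,3\,2\ne 2\,3\,1=\shape(2,3,1)$). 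The right non-shiftable op-square sits at $i=4$, an interior point; your procedure examines only the endpoint $i=5$, discards it upon verification, and never looks at $i=4$. Since the square $S[4..9]$ straddles $m=5$ and is contained in neither recursive half, no other node of the recursion recovers it. Your appeal to~\cite{DBLP:journals/tcs/CrochemoreIKKLP16} does not close this: the result the paper invokes from there (their Lemma~18) concerns \emph{non-extendible} op-squares and does not assert any Main--Lorentz-style interval characterization.

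The paper's own proof sidesteps the difficulty entirely. It works with \emph{right non-extendible} op-squares (those with $i+2p-1=n$ or $S[i..i+p]\not\approx S[i+p..i+2p]$) and quotes directly from~\cite{DBLP:journals/tcs/CrochemoreIKKLP16} that all of them can be listed in $O(n\log n)$ time. Since right non-shiftable implies right non-extendible (if $S[i..i+p]\approx S[i+p..i+2p]$, dropping the first character yields $S[i+1..i+p]\approx S[i+p+1..i+2p]$, i.e., shiftability), one simply filters each listed non-extendible square with the single test $\LCP(i+1,i+p+1)<p$ using Lemma~\ref{lem:LCP}. That is the entire argument---a two-line reduction rather than a new divide-and-conquer.
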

\begin{proof}
  We show the algorithm for right non-shiftable op-squares; the computations for left non-shiftable op-squares are symmetric.

  Let $S$ be a string of length $n$.
  An op-square $S[i..i+2p-1]$ is called \emph{right non-extendible} if $i+2p-1=n$
  or $S[i..i+p] \not\approx S[i+p..i+2p]$.
  We use the following claim.
  \begin{claim}[See Lemma 18 in \cite{DBLP:journals/tcs/CrochemoreIKKLP16}]\label{clm:non_ext}
    All the right non-extendible op-squares in a string of length $n$ can be computed in $O(n \log n)$ time.
  \end{claim}
  Note that a right non-shiftable op-square is also right non-extendible, but the converse is not necessarily true.
  Thus it suffices to filter out the op-squares that are right shiftable.
  For this, for a right non-extendible op-square $S[i..i+2p-1]$ we need to check if $\LCP(i+1,i+p+1)<p$.
  This condition can be verified in $O(1)$ time after $o(n \log n)$-time preprocessing using Lemma~\ref{lem:LCP}.
\end{proof}

\section{Computing All Full and Initial Op-Periods}\label{sec:full_initial}
For a string $S$ of length $n$, we define $\PREF'[i]$ for $i=0,\ldots,n$ as:
$$\PREF'[i]=\left\{
\begin{array}{cl}
n & \mbox{if }\PREF[i+1]=n-i\\
\PREF[i+1] & \mbox{otherwise.}
\end{array}
\right.
$$
Here we assume that $\PREF[n+1]=0$.
In the computation of full and initial op-periods we heavily rely on this table according to the following obvious observation.
\begin{observation}\label{obs:ini}
  $p$ is an initial op-period of a string $S$ of length $n$ if and only if $\PREF'[ip] \ge p$ for all $i=1,\ldots,\lfloor n/p\rfloor$.
\end{observation}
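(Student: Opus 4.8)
The plan is to read off both sides of the claimed equivalence in terms of the block decomposition $S = B_1 B_2 \cdots B_m B_{m+1}$, where (here $1\le p<n$, as required for op-periods) $m=\floor{n/p}$, $B_j = S[(j-1)p+1..jp]$ for $1\le j\le m$, and the trailing block $B_{m+1}=S[mp+1..n]$ has length $\ell := n-mp \in \lb 0..p-1\rb$. Unwinding the definition of an op-period for shift $0$ (and using that $\approx$ is substring-consistent, so an order-equivalence restricts to corresponding prefixes), $p$ is an initial op-period of $S$ exactly when $B_1\approx B_2\approx\cdots\approx B_m$ and $B_{m+1}\approx S[1..\ell]$, the last condition being vacuous when $\ell=0$; conversely, from such a block decomposition one reconstructs a witnessing factorization $V_1\cdots V_k$. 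Hence it suffices to prove that ``$\PREF'[jp]\ge p$ for all $j\in\lb 1..m\rb$'' is equivalent to this \emph{block characterization}.

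The core step is the following reformulation of a single inequality: for every $i\in\lb 0..n\rb$ (with $1\le p\le n$),
$$\PREF'[i]\ge p \iff S[i+1..\min(i+p,\,n)] \approx S[1..\min(p,\,n-i)].$$
I would prove this by splitting on the length $n-i$ of the suffix $S[i+1..n]$. If $n-i\ge p$, the two $\min$'s equal $i+p$ and $p$; by definition $\PREF[i+1]\ge p$ is equivalent to $S[i+1..i+p]\approx S[1..p]$, and if moreover the special clause of $\PREF'$ fires, i.e.\ $\PREF[i+1]=n-i\ge p$, then that same equivalence still holds while $\PREF'[i]=n\ge p$; so the stated equivalence holds in both branches of the definition of $\PREF'$. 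If $n-i<p$, then $\PREF[i+1]\le n-i<p$, so $\PREF'[i]\ge p$ can hold only through the special clause $\PREF[i+1]=n-i$, which says precisely $S[i+1..n]\approx S[1..n-i]$; and this is exactly the right-hand side, since now $\min(i+p,n)=n$ and $\min(p,n-i)=n-i$.

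It then remains to instantiate the reformulation at $i=p,2p,\ldots,mp$. For $j\in\lb 1..m-1\rb$ we have $n-jp\ge(m-j)p\ge p$, so $\PREF'[jp]\ge p$ is equivalent to $S[jp+1..jp+p]\approx S[1..p]$, i.e.\ $B_{j+1}\approx B_1$; letting $j$ range over $\lb 1..m-1\rb$ yields $B_1\approx B_2\approx\cdots\approx B_m$. For $j=m$ we have $n-mp=\ell<p$, so $\PREF'[mp]\ge p$ is equivalent to $S[mp+1..n]\approx S[1..\ell]$, i.e.\ $B_{m+1}\approx S[1..\ell]$. Together these equivalences are exactly the block characterization, which finishes the proof. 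The one delicate point is the short trailing block $B_{m+1}$: here the auxiliary value $n$ hard-wired into $\PREF'$ is precisely what is needed, because the plain entry $\PREF[mp+1]$ is always $\le\ell<p$ and could never certify the last block by itself; everything else is a routine unwinding of the definitions.
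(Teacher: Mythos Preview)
Your argument is correct. The paper itself gives no proof of this observation---it is stated as ``obvious'' immediately after introducing the $\PREF'$ table---so your write-up simply fills in the details the authors omitted, via exactly the natural route: unfold the block decomposition induced by shift $0$, and match each block condition $B_{j+1}\approx B_1$ (resp.\ $B_{m+1}\approx S[1..\ell]$) against the inequality $\PREF'[jp]\ge p$, with the special clause $\PREF'[i]=n$ handling the short trailing block. The only step you gloss over slightly is the converse of the block characterization (extending $B_{m+1}$ to a full block $V\approx B_1$), but over an integer alphabet this is straightforward and the paper implicitly assumes it as well.
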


\subsection{Computing Initial Op-Periods}\label{sec:ini_all}
Let us introduce an auxiliary array $\TT[0..n]$ such that:
$$\TT[p]=\min\{\PREF'[ip]\,:\,i=1,\ldots,\floor{n/p}\}.$$
Straight from Observation~\ref{obs:ini} we have:

\begin{observation}
  $p$ is an initial period of $S$ if and only if $\TT[p] \ge p$.
\end{observation}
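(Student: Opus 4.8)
This is an "obvious observation" that follows directly by unfolding definitions, so the main work is just connecting the array $\TT$ back to the definition of an initial op-period through the intermediate array $\PREF'$.

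First I would recall the chain of equivalences already set up in the excerpt. By definition, $\TT[p] = \min\{\PREF'[ip] : i = 1,\ldots,\floor{n/p}\}$, so $\TT[p] \ge p$ holds if and only if $\PREF'[ip] \ge p$ for every $i = 1,\ldots,\floor{n/p}$. By Observation~\ref{obs:ini}, this latter condition is exactly the statement that $p$ is an initial op-period of $S$. Chaining these two equivalences gives the claim immediately.

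The only point that deserves a word of care is the role of the modification $\PREF' \neq \PREF$, i.e.\ the boundary case where $\PREF[i+1] = n-i$. The array $\PREF'$ is designed precisely so that when the relevant suffix of $S$ is entirely equivalent to a prefix of $S$ (the "match runs off the end of $S$" case), the value is bumped up to $n$, which is $\ge p$ for any admissible $p$; this reflects the fact that an initial op-period only constrains the blocks $S[(i-1)p+1..ip]$ that are fully contained in $S$, and a partial trailing block imposes no obstruction beyond being a prefix of the shape. Since Observation~\ref{obs:ini} is already stated with $\PREF'$ and accounts for this, I would simply invoke it rather than re-deriving it; if a self-contained argument were wanted, one would check that $\PREF'[ip] \ge p$ for $i = \floor{n/p}$ is equivalent to $S[(\floor{n/p}-1)p+1..n]$ being a prefix of the shape, matching the "$S[s+1..|S|]$ is a prefix of $V_2\cdots V_k$" clause in the definition of op-period specialized to $s = 0$.

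I do not anticipate any real obstacle here: there are no combinatorial lemmas or Fine--Wilf-type arguments involved, just the substitution $\TT[p] \ge p \iff (\forall i)\,\PREF'[ip] \ge p \iff p$ is an initial op-period. The statement is essentially a restatement of Observation~\ref{obs:ini} in terms of the precomputed minimum $\TT[p]$, and its purpose is purely to let subsequent algorithms test the initial-op-period property in $O(1)$ time per $p$ once $\TT$ is available.
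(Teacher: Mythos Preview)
Your proposal is correct and matches the paper's approach exactly: the paper simply states that the observation follows ``straight from Observation~\ref{obs:ini}'', which is precisely the two-step unfolding you describe (minimum $\ge p$ iff every term $\ge p$, then apply Observation~\ref{obs:ini}). Your additional remarks on the $\PREF'$ boundary modification are accurate but unnecessary here, since that issue is already absorbed into Observation~\ref{obs:ini}.
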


The table $T$ could be computed straight from definition in $O(n \log n)$ time. We improve this complexity to $O(n \log \log n)$ by employing Eratosthenes's sieve. The sieve computes, in particular, for each $j=1,\ldots,n$ a list of all distinct prime divisors of $j$. We use these divisors to compute the table via dynamic programming in a right-to-left scan, as shown in Algorithm~\ref{algo:ini}.

\begin{center} 
\begin{minipage}{9cm}
\begin{algorithm}[H]
  $T:=\PREF'$\;
  \vspace{0.1cm}
  \For{$j:=n$ \KwSty{down to} $2$}{
    \ForEach{\mbox{prime divisor} $q$ \mbox{of} $j$}{
       $\TT[j/q] := \min(\TT[j/q],\TT[j])$;
  }}
  \vspace{0.1cm}
  \For{$p:=1$ \KwSty{to} $n$}{
    \lIf{$\TT[p] \ge p$}{$p$ is an initial op-period}
  }
  \vspace{0.1cm}
  \caption{Computing All Initial Op-Periods of $S$}
  \label{algo:ini}
\end{algorithm}
\end{minipage}
\end{center}
  
\begin{theorem}
  All initial op-periods of a string of length $n$ can be computed in $O(n \log \log n)$ time.
\end{theorem}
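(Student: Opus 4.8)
The plan is to verify that Algorithm~\ref{algo:ini} correctly computes the array $\TT$ and runs within the claimed time bound, and then to invoke the preceding observations.

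First I would establish correctness of the dynamic-programming computation of $\TT$. Recall $\TT[p] = \min\{\PREF'[ip] : i = 1,\ldots,\floor{n/p}\}$. The algorithm initializes $\TT := \PREF'$, so after initialization $\TT[p]$ holds $\PREF'[p]$, which is the $i=1$ term. The key claim is the loop invariant: when the outer loop is about to process value $j$ (counting down from $n$), for every $p$ with $p > \ldots$ — more precisely, I would prove by downward induction on $j$ that after the iteration for $j$, every $\TT[p]$ with $p$ a proper divisor of some $m \in \lb j..n\rb$ has already received the contribution $\PREF'[m]$ for all such multiples $m$ that are $\ge j$, wait — cleaner: I would prove that after the entire outer loop, $\TT[p] = \min\{\PREF'[m] : m \text{ a positive multiple of } p,\ m \le n\}$. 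The point is that the set of multiples of $p$ up to $n$ is exactly $\{p, 2p, \ldots, \floor{n/p}p\}$, so this min equals $\TT[p]$ as defined. To see the loop achieves this, note that the value $\PREF'[j]$ must be propagated from index $j$ down to every proper divisor of $j$; it suffices to propagate it to $j/q$ for each prime divisor $q$ of $j$, because any proper divisor $d \mid j$ is a divisor of some $j/q$, and since $j/q > d \ge 1$ we have $j/q \ge 2$ (when $d<j/q$) or $d=j/q$, so by the time the loop reaches $j/q$ (which happens after $j$, as $j/q < j$) the accumulated value at $\TT[j/q]$ already includes $\PREF'[j]$ and will in turn be pushed further down. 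Formalizing this transitive propagation is the one place requiring care; I would phrase it as: for each $m \le n$ and each proper divisor $d$ of $m$, there is a chain $m = m_0 > m_1 > \cdots > m_t = d$ where each $m_{l+1} = m_l / q_l$ for a prime $q_l \mid m_l$, and the loop processes $m_0, m_1, \ldots, m_{t-1}$ in that order (decreasing), so $\PREF'[m]$ reaches $\TT[d]$.

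Second, given the correct array $\TT$, the final \textbf{for} loop reports exactly those $p$ with $\TT[p] \ge p$, which by the Observation immediately preceding the algorithm are exactly the initial op-periods of $S$.

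Third, for the running time: the $\PREF'$ table is obtained from the $\PREF$ table (Lemma~\ref{lem:PREF}) in $O(n)$ time. Eratosthenes's sieve, run up to $n$, produces for every $j \le n$ the list of its distinct prime divisors in $O(n \log\log n)$ total time (and space), which is the classical bound — the number of (prime-divisor, multiple) incidences is $\sum_{q \le n, q \text{ prime}} n/q = O(n \log\log n)$ by Mertens' theorem. The outer double loop of Algorithm~\ref{algo:ini} performs exactly one $O(1)$-time update per such incidence, so it also runs in $O(n\log\log n)$ time, and the final loop is $O(n)$. Hence the total is $O(n \log\log n)$.

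The main obstacle is the combinatorial bookkeeping in the correctness argument: making precise that propagating $\PREF'[j]$ only to the immediate quotients $j/q$ (rather than to all proper divisors of $j$ directly) still delivers it, via transitivity of $\min$, to every proper divisor, and that the decreasing order of the outer loop guarantees each intermediate quotient is visited after $j$ and before the smaller divisors below it. Everything else — correctness of the final test and the $O(n\log\log n)$ time bound — is routine once the sieve's incidence count is quoted.
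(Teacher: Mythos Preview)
Your proposal is correct and follows essentially the same approach as the paper: analyze Algorithm~\ref{algo:ini}, bound the number of updates by $\sum_{q\text{ prime},\,q\le n} n/q = O(n\log\log n)$, and obtain $\PREF'$ from Lemma~\ref{lem:PREF} in $O(n)$ time. The paper's own proof is terser---it states only the time bound and leaves the correctness of the dynamic-programming propagation implicit---whereas you spell out the chain-of-prime-quotients argument, but this is the same algorithm and the same analysis.
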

\begin{proof}
  By Lemma~\ref{lem:PREF}, the $\PREF$ table for the string---hence, the $\PREF'$ table---can be computed in $O(n)$ time. Then we use Algorithm~\ref{algo:ini}. Each prime number $q \leq n$ has at most $\frac{n}{q}$ multiples below $n$. Therefore, the complexity of Eratosthenes's sieve and the number of updates on the table $T$ in the algorithm is $\sum\limits_{q \in \mathit{Primes}, q \le n} \frac{n}{q} = O(n\log\log n)$; see \cite{Analytic}.
\end{proof}

\subsection{Computing Full Op-Periods}
Let us recall the following auxiliary data structure for efficient $\gcd$-computations that was developed in \cite{DBLP:journals/jcss/KociumakaRR17}.
We will only need a special case of this data structure to answer queries for $\gcd(x,n)$.

\begin{fact}[Theorem 4 in \cite{DBLP:journals/jcss/KociumakaRR17}]\label{fct:gcd}
  After $O(n)$-time preprocessing, given any $x,y \in \{1,\ldots,n\}$, the value $\gcd(x,y)$ can be computed in constant time.
\end{fact}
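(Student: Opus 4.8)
The plan is to reduce an arbitrary $\gcd$ to a constant number of ``easy'' gcds — each answerable in $O(1)$ time by a single modular reduction or one table lookup — and to prepare everything with a linear sieve. The combinatorial heart is the following structural claim, which I would establish first: every $m\in\{1,\dots,n\}$ admits a factorization $m=m_1m_2m_3$ in which each $m_i$ is either $1$, a prime, or at most $\sqrt n$. To prove it, write the prime factorization of $m$ as $q_1\le q_2\le\cdots\le q_t$ (with multiplicity), put $P_i=q_1\cdots q_i$, and take $m_1=P_j$ for the largest $j$ with $P_j\le\sqrt n$, then $m_2=P_k/P_j$ for the largest $k\ge j$ with $P_k/P_j\le\sqrt n$, and $m_3=m/(m_1m_2)=q_{k+1}\cdots q_t$. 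If $m_3$ had at least two prime factors, then $m_3\ge q_{k+1}^2$, while maximality of $j$ and $k$ gives $m_1=P_j>\sqrt n/q_{j+1}\ge\sqrt n/q_{k+1}$ (using $q_{j+1}\le q_{k+1}$) and $m_2>\sqrt n/q_{k+1}$, so $m=m_1m_2m_3>(n/q_{k+1}^2)\,m_3\ge n$, a contradiction; hence $m_3$ is $1$ or prime. Running a linear sieve (which also yields the smallest prime factor of every $m\le n$) and maintaining these greedy factorizations, one tabulates the map $m\mapsto(m_1,m_2,m_3)$ for all $m\le n$ in $O(n)$ time.

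Next I would reduce a query. Given $x,y\in\{1,\dots,n\}$, look up $x=x_1x_2x_3$ and compute $g_1=\gcd(x_1,y)$, then $g_2=\gcd(x_2,\,y/g_1)$, then $g_3=\gcd(x_3,\,y/(g_1g_2))$; all divisions are exact since $g_1\mid y$, $g_2\mid y/g_1$, $g_3\mid y/(g_1g_2)$. The claim is $\gcd(x,y)=g_1g_2g_3$. This is verified prime by prime: for a prime $p$, peeling $\gcd(x_1,\cdot)$, then $\gcd(x_2,\cdot)$, then $\gcd(x_3,\cdot)$ off $y$ produces, in the exponent of $p$, exactly $\min\bigl(v_p(x_1)+v_p(x_2)+v_p(x_3),\,v_p(y)\bigr)=\min(v_p(x),v_p(y))$, by a short case analysis on whether the running exponent of $y$ is exhausted before reaching $x_2$ or $x_3$. (Routine; I would omit the details.)

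It then remains to answer each easy gcd $\gcd(f,z)$, with $f$ prime or $f\le\sqrt n$ and $z\le n$, in $O(1)$ time. If $f$ is prime, $\gcd(f,z)=f$ when $z\equiv 0\pmod f$ and $1$ otherwise — a single modular test. If $f\le\sqrt n$, then $\gcd(f,z)=\gcd(f,\,z\bmod f)$ with $z\bmod f<f\le\sqrt n$, so it suffices to have a precomputed table $A[a][b]=\gcd(a,b)$ for all $a,b\le\lfloor\sqrt n\rfloor$; this table has at most $n$ entries and is built in $O(n)$ time by filling $A[a][b]$ in increasing order of $\max(a,b)$, each entry obtained in $O(1)$ from $A[b][\,a\bmod b\,]$, whose larger argument is strictly smaller. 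Putting the pieces together gives $O(n)$ preprocessing and $O(1)$ per query. (Since only $\gcd(x,n)$ for fixed $n$ is needed in this paper, one could alternatively aim to tabulate the single array $x\mapsto\gcd(x,n)$, but the same ingredients are what let such a table be produced in strictly linear time.)

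I expect the main obstacle to be not the structural claim itself — which is a one-paragraph greedy argument — but the sieve bookkeeping that produces and stores the per-$m$ factorizations $(m_1,m_2,m_3)$ for all $m\le n$ within the $O(n)$ budget and in a form supporting $O(1)$ access; this careful implementation is the real technical content behind the cited result.
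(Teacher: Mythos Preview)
The paper does not prove this statement at all; it is quoted verbatim as a known result (Theorem~4 of the cited reference) and used as a black box. So there is no ``paper's own proof'' to compare against here.

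That said, your plan is correct and is essentially the argument given in the cited source: factor each $m\le n$ as $m_1m_2m_3$ with every $m_i$ either at most $\sqrt n$ or prime, precompute a $\lfloor\sqrt n\rfloor\times\lfloor\sqrt n\rfloor$ gcd table, and reduce a general query to three easy gcds. Your proof of the structural claim (the greedy prefix argument on the sorted prime factorization, with the contradiction $m>n$ if $m_3$ has two prime factors) is exactly the standard one, and the chain $g_1,g_2,g_3$ and the $p$-adic valuation check are routine and correct. You are also right to flag the one genuinely delicate point: producing the table $m\mapsto(m_1,m_2,m_3)$ for all $m\le n$ in total $O(n)$ time, rather than the naive $O(n\log n)$ from fully factoring each $m$, is where the care goes; the cited paper handles this via the linear sieve together with an incremental update from $m/p$ to $m$ using the smallest prime factor $p$, and your sketch would need to spell that step out to be complete.
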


Let $\Div(i)$ denote the set of all positive divisors of $i$. In the case of full op-periods we only need to compute $\TT[p]$ for $p \in \Div(n)$. As in Algorithm~\ref{algo:ini}, we start with $T=\PREF'$. Then we perform a preprocessing phase that shifts the information stored in the array from indices $i \not \in \Div(n)$ to indices $\gcd(i,n) \in \Div(n)$. It is based on the fact that for $d \in \Div(n)$, $d \mid i$ if and only if $d \mid \gcd(i,n)$. Finally, we perform right-to-left processing as in Algorithm~\ref{algo:ini}. However, this time we can afford to iterate over all divisors of elements from $\Div(n)$. Thus we arrive at the pseudocode of Algorithm~\ref{algo:full}.

\begin{center} 
\begin{minipage}{9cm}
\begin{algorithm}[H]
  $T:=\PREF'$\;
  \vspace{0.1cm}
  \For{$i:=1$ \KwSty{to} $n$}{
    $k := \gcd(i,n)$\;
    $\TT[k]:=\min(\TT[k],\TT[i])$\;
  }
  \vspace{0.1cm}
  \ForEach{$i \in \Div(n)$\mbox{ in decreasing order}}{
      \ForEach{$d \in \Div(i)$}{
          $\TT[d]:=\min(\TT[d],\TT[i])$\;
      }
  }
  \vspace{0.1cm}
  \ForEach{$p \in \Div(n)$}{
    \lIf{$\TT[p] \ge p$}{$p$ is a full op-period}
  }
  \caption{Computing All Full Op-Periods of $S$}
  \label{algo:full}
\end{algorithm}
\end{minipage}
\end{center}

\begin{theorem}
  All full op-periods of a string of length $n$ can be computed in $O(n)$ time.
\end{theorem}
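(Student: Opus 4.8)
The plan is to show that Algorithm~\ref{algo:full} runs in $O(n)$ time and correctly identifies exactly the full op-periods of $S$. By definition, a full op-period is an initial op-period $p$ with $p \mid n$, so by the Observation following the definition of $\TT$, it suffices to verify that after the algorithm runs, $\TT[p] = \min\{\PREF'[ip] : i = 1,\ldots,n/p\}$ for every $p \in \Div(n)$; then the final loop reports precisely those $p \in \Div(n)$ with $\TT[p] \ge p$, which are exactly the full op-periods.

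First I would argue correctness of the first loop. After initialization $\TT = \PREF'$. For each $i \in \lb 1..n \rb$ we set $\TT[\gcd(i,n)] := \min(\TT[\gcd(i,n)], \PREF'[i])$. Using Fact~\ref{fct:gcd} this costs $O(1)$ per iteration, hence $O(n)$ total after $O(n)$ preprocessing. After this loop, for every $k \in \Div(n)$ we have $\TT[k] = \min\{\PREF'[i] : i \in \lb 1..n\rb,\ \gcd(i,n) = k\}$ (values at indices not in $\Div(n)$ become irrelevant). Next I would handle the second loop. Iterating over $i \in \Div(n)$ in decreasing order and over $d \in \Div(i)$, we propagate minima downward along the divisor lattice of $n$. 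A straightforward induction (processing larger divisors first) shows that after this loop, for every $d \in \Div(n)$,
\[
\TT[d] = \min\{\PREF'[i] : i \in \lb 1..n\rb,\ d \mid \gcd(i,n)\}.
\]
The key arithmetic fact is that for $d \in \Div(n)$ and any $i \in \lb 1..n\rb$, we have $d \mid i$ if and only if $d \mid \gcd(i,n)$; therefore the set on the right equals $\{\PREF'[i] : d \mid i,\ 1 \le i \le n\} = \{\PREF'[i\cdot d] : 1 \le i \le n/d\}$, which is exactly the defining set of $\TT[d]$. This establishes correctness.

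For the running time of the second loop, the cost is $\sum_{i \in \Div(n)} |\Div(i)|$. Since every $d$ appearing in the inner loop is a divisor of some $i \in \Div(n)$ and hence itself lies in $\Div(n)$, and since each such pair $(i,d)$ with $d \mid i \mid n$ is counted once, this sum is $\sum_{i \in \Div(n)} |\Div(i)| \le |\Div(n)|^2$. As $|\Div(n)| = n^{o(1)}$ this is $o(n)$; in fact one can bound it more crudely by noting it is at most the number of pairs $(d,i)$ with $d \mid i \mid n$, which is $O(n)$ trivially since each is at most $n$ and they are distinct as divisors—but the cleanest statement is simply $|\Div(n)|^2 = o(n)$. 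Combined with the $O(n)$ for computing $\PREF$ (Lemma~\ref{lem:PREF}), the $O(n)$ preprocessing of Fact~\ref{fct:gcd}, the $O(n)$ first loop, and the $O(|\Div(n)|)$ final loop, the total is $O(n)$.

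The only mildly delicate point—the place I expect to spend the most care—is the double divisor-lattice argument: verifying that the two-phase propagation (first collapsing index $i$ to $\gcd(i,n)$, then pushing minima down through $\Div(n)$ in decreasing order) correctly realizes the min over the arithmetic-progression index set $\{d, 2d, \ldots, (n/d)d\}$, and in particular that no index $i$ with $d \mid i$ is missed and none with $d \nmid i$ is erroneously included. This rests entirely on the elementary identity $d \mid i \iff d \mid \gcd(i,n)$ for $d \in \Div(n)$, so once that is stated the rest is a routine induction on the divisor poset. Everything else is bookkeeping.
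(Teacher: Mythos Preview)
Your proposal is correct and follows essentially the same approach as the paper: both analyze Algorithm~\ref{algo:full}, use Fact~\ref{fct:gcd} for the first loop, and bound the second loop by the size of the divisor lattice of $n$. The only cosmetic difference is that you invoke the sharper bound $|\Div(n)|^2 = n^{o(1)}$ whereas the paper uses the cruder $|\Div(n)|,|\Div(i)| = O(\sqrt{n})$ (and notes divisors can be enumerated in $O(\sqrt{n})$ time), and you supply a fuller correctness argument for the two-phase minimum propagation that the paper leaves to the discussion preceding the algorithm.
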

\begin{proof}
  We apply Algorithm~\ref{algo:full}. The complexity of the first for-loop is $O(n)$ by Fact~\ref{fct:gcd}. The second for-loop works in $O(n)$ time as the sizes of the sets $\Div(n)$, $\Div(i)$ are $O(\sqrt{n})$ and the elements of these sets can be enumerated in $O(\sqrt{n})$ time as well. %The total complexity is $O(n)$.
\end{proof}

\section{Computing Smallest Non-Trivial Initial Op-Period}\label{sec:sm_initial}
If a string is not strictly monotone itself, it has $O(n)$ such op-periods and they can all be computed in $O(n)$ time.
We use this as an auxiliary routine in the computation of the smallest initial op-period that is greater than~1.

\begin{theorem}\label{thm:monotone}
  If a string of length $n$ is not strictly monotone, all of its strictly monotone op-periods can be computed in $O(n)$ time.
\end{theorem}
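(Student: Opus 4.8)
The plan is to reduce the problem to a short $\gcd$ computation via the trace of $S$. Set $\tau=\trace(S)$, a string of length $n-1$ over $\{\texttt{+},\texttt{0},\texttt{-}\}$ computable in $O(n)$ time, and for each symbol $c$ let $B_c=\{j\in\lb 1..n-1\rb:\tau[j]\ne c\}$ be its set of ``defects''. The first step is the characterization: for $1\le p<n$ and $s\in\lb 0..p-1\rb$, the string $S$ has an op-period $p$ with shift $s$ whose shape is strictly monotone in direction $c$ (increasing for \texttt{+}, constant for \texttt{0}, decreasing for \texttt{-}) \emph{iff} every element of $B_c$ is congruent to $s$ modulo $p$. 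The ``only if'' part is immediate from Observation~\ref{obs:trace}\ref{oo2}: a position $j\not\equiv s\pmod p$ lies strictly inside one block $V_\ell$, so $\tau[j]$ is the sign of the difference of two consecutive entries of the shape, hence equals $c$. For ``if'', I would build the witness explicitly: cut $\lb 1..n\rb$ into the intervals $[1..s],[s{+}1..s{+}p],[s{+}p{+}1..s{+}2p],\dots$; by hypothesis each induces a factor of $S$ that is strictly monotone in direction $c$, so every full interval is already an admissible $V_\ell$ (all length-$p$ strings that are strictly monotone in the same direction are order-equivalent), while the partial leading and trailing intervals extend to length-$p$ strings of the same shape by padding with fresh integer values. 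This gives the required factorization $V_1\cdots V_k$ with $S[s{+}1..n]$ a prefix of $V_2\cdots V_k$.

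Granting the characterization, $p$ is a strictly monotone op-period of $S$ iff, for some $c$, the elements of $B_c$ are pairwise congruent modulo $p$; and the corresponding shift (and direction) can be read off immediately. Since $S$ is not strictly monotone, $\tau$ is non-unary, so no $B_c$ is empty. I then split into two cases per symbol. If $|B_c|=1$, with unique element $b$, the condition holds for every $p\in\lb 1..n-1\rb$ (take $s=b\bmod p$), contributing at most $n-1$ op-periods. If $B_c=\{b_1<\dots<b_m\}$ with $m\ge 2$, the condition holds precisely for the divisors of $g_c:=\gcd(b_2{-}b_1,\dots,b_m{-}b_{m-1})$; since $g_c\le b_m-b_1\le n-2$ these divisors are all $<n$, and there are $O(\sqrt n)$ of them. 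The identity $|B_{\texttt{+}}|+|B_{\texttt{0}}|+|B_{\texttt{-}}|=2(n-1)$ shows that for $n\ge 4$ at most one $B_c$ can have size $\le1$, so the total number of strictly monotone op-periods is $O(n)$, as asserted just before the theorem.

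For the running time: $\tau$ and the three sets $B_c$ cost $O(n)$. For each symbol with $|B_c|\ge2$, I compute $g_c$ by a running $\gcd$ over the consecutive differences, skipping a difference in $O(1)$ time whenever it is already a multiple of the current value; a genuine Euclidean recomputation is triggered only when the running value strictly decreases, which happens at most $O(\log n)$ times (each decrease at least halves it), so the whole $\gcd$ phase is $O(n+\log^2 n)=O(n)$. Listing $\Div(g_c)$ costs $O(\sqrt n)$, and in the degenerate case $|B_c|=1$ we output $\lb 1..n-1\rb$ in $O(n)$ time; summing over the three directions gives $O(n)$. The only genuinely delicate step is the characterization, and within it the bookkeeping around the shift $s$ and the partial first and last blocks; everything afterwards is elementary.
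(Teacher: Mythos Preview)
Your proof is correct and follows essentially the same route as the paper: pass to the trace, characterize a strictly monotone op-period $(p,s)$ in direction $c$ by the condition that all positions where the trace differs from $c$ lie in a single residue class modulo $p$, handle the singleton case separately, and in the general case compute the gcd of consecutive differences in $O(n)$ time via the observation that the running gcd strictly decreases (hence at least halves) at most $O(\log n)$ times. The only differences are cosmetic---you spell out the ``if'' direction of the characterization and the $O(n)$ count of periods explicitly, whereas the paper states the characterization as an observation and treats one direction with ``the others are the same.''
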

\begin{proof}
  We show how to compute all the strictly increasing op-periods of a string $S$ that is not strictly monotone itself;
  computation of strictly decreasing and constant op-periods is the same.
  Let $S$ be a string of length $n$ and let us denote $X=\trace(S)$.
  Let $A=\{a_1,\ldots,a_k\}$ be the set of all positions $a_1<\dots<a_k$ in $X$ such that
  $X[i] \ne \texttt{+}$; by the assumption of this theorem, we have that $A \ne \emptyset$.
  This set provides a simple characterization of strictly increasing op-periods of $S$.

  \begin{observation}\label{obs:incr}
    $(p,s)$ is a strictly increasing op-period of a string $S$ that is not strictly monotone itself
    if and only if $a_i = s \pmod{p}$ for all $a_i \in A$.
  \end{observation}

  First, assume that $|A|=1$.
  By Observation~\ref{obs:incr}, each $p=1,\ldots,n$ is an op-period of $S$ with the shift $s=a_1 \bmod p$.
  From now we can assume that $|A|>1$.

  For a set of positive integers $B=\{b_1,\ldots,b_k\}$, by $\gcd(B)$ we denote $\gcd(b_1,\ldots,b_k)$.
  The claim below follows from Fact~\ref{fct:gcd}.
  However, we give a simpler proof.

  \begin{claim}\label{clm:gcd}
    If $B \subseteq \{1,\ldots,n\}$, then $\gcd(B)$ can be computed in $O(n)$ time.
  \end{claim}
  \begin{proof}
    Let $B=\{b_1,\ldots,b_k\}$ and denote $d_i = \gcd(b_1,\ldots,b_i)$.
    We want to compute $d_k$.

    Note that $d_i \mid d_{i-1}$ for all $i=2,\ldots,k$.
    Hence, the sequence $(d_i)$ contains at most $\log n+1$ distinct values.

    Set $d_1=b_1$.  
    To compute $d_i$ for $i \ge 2$, we check if $d_{i-1} \mid b_i$.
    If so, $d_i=d_{i-1}$.
    Otherwise $d_i = \gcd(d_{i-1},b_i)<d_{i-1}$.
    Hence, we can compute $d_i$ using Euclid's algorithm in $O(\log n)$ time.
    The latter situation takes place at most $\log n+1$ times; the conclusion follows.
  \end{proof}

  Consider the set $B=\{a_2-a_1,a_3-a_2,\ldots,a_k-a_{k-1}\}$.
  By Observation~\ref{obs:incr}, $(p,s)$ is a strictly increasing op-period of $S$ if and only if $p \mid \gcd(B)$ and $s = a_1 \bmod p$.
  Thus there is exactly one strictly increasing op-period of each length that divides $\gcd(B)$ and its shift is
  determined uniquely.

  The value $\gcd(B)$ can be computed in $O(n)$ time by Claim~\ref{clm:gcd}.
  Afterwards, we find all its divisors and report the op-periods in $O(\sqrt{n})$ time.
\end{proof}

\noindent
Let us start with the following simple property.

\begin{lemma} \label{l:primitive}
The shape of the smallest non-trivial initial op-period of a string has no shorter non-trivial full op-period.
\end{lemma}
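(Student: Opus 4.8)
The plan is to argue by contradiction, showing that a shorter non-trivial full op-period of the shape would itself be an initial op-period of the whole string. Let $S$ have length $n$, let $p$ be its smallest non-trivial initial op-period, and set $W=\shape(S[1..p])$, a string of length $p$; thus $W$ is the shape of this op-period. Suppose, towards a contradiction, that $W$ has a non-trivial full op-period $q$. Being full, $q\mid p$ and $q<p$; being initial, every one of the $p/q$ length-$q$ blocks $W[bq+1..bq+q]$ ($0\le b<p/q$) is order-equivalent to $W[1..q]$. Write $U=\shape(W[1..q])=\shape(S[1..q])$. The goal is to deduce that $q$ is itself an initial op-period of $S$, which contradicts the minimality of $p$ since $1<q<p<n$.

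The key idea is to \emph{transport} the op-period $q$ of $W$ through the $p$-block structure of $S$. Since $p$ is an initial op-period, Observation~\ref{obs:ini} spelled out gives $\shape(S[jp+1..jp+p])=W$ for every $j$ with $jp+p\le n$, and $\shape(S[mp+1..n])=\shape(S[1..n-mp])$ for $m=\floor{n/p}$. Now take any multiple $iq$ of $q$ with $0<iq\le n-q$ and write $iq=jp+bq$, where $j=\floor{iq/p}$; since $q\mid iq$ and $q\mid p$ we get $q\mid(iq-jp)$, so indeed $iq-jp=bq$ with $0\le b<p/q$. The window $S[iq+1..iq+q]$ then lies inside a single full $p$-block (if $j<m$) or inside the $p$-tail $S[mp+1..n]$ (if $j=m$), and, being aligned, it is the image of the block $W[bq+1..bq+q]$ under an order-equivalence. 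As order-equivalence is a SCER it preserves the shapes of aligned factors, so $\shape(S[iq+1..iq+q])=\shape(W[bq+1..bq+q])=U=\shape(S[1..q])$, i.e.\ $S[iq+1..iq+q]\approx S[1..q]$. An entirely analogous computation handles the last, partial $q$-block: for $m'=\floor{n/q}$, the window $S[m'q+1..n]$ (of length $<q$) sits in the $p$-tail and is the image of a length-$(n-m'q)$ prefix of some length-$q$ block of $W$, hence of a prefix of a $U$-shaped block, so $S[m'q+1..n]\approx S[1..n-m'q]$. Together these two facts say exactly that $\PREF'[iq]\ge q$ for all $i=1,\dots,\floor{n/q}$, so by Observation~\ref{obs:ini} the number $q$ is an initial op-period of $S$ --- contradicting the minimality of $p$.

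I expect the only real care to lie in the index bookkeeping around the \emph{partial} blocks: one must check that each window considered stays within a range over which a shape is under control. This uses $q\mid p$ (so $bq+q\le p$ and every aligned length-$q$ window sits inside a single $p$-block), $n-m'q<q\le p$ (so the $q$-tail sits inside a single $p$-block and is short enough to be compared against a prefix of $W$), and $mp\le m'q\le n$ (so the $q$-tail lies in the $p$-tail region). Once these inclusions are verified, everything else is a mechanical consequence of the SCER-invariance of shapes, and the contradiction with the minimality of $p$ completes the proof.
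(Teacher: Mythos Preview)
Your proof is correct and follows the same approach as the paper. The paper's proof is a single sentence---``A full op-period of the initial op-period of a string $S$ is an initial op-period of $S$''---and your argument is precisely a detailed verification of this fact, carefully tracking how each aligned length-$q$ window of $S$ inherits the shape $U$ through the $p$-block structure and the SCER property.
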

\begin{proof}
A full op-period of the initial op-period of a string $S$ is an initial op-period of $S$.
\end{proof}

\noindent
Now we can state a property of initial op-periods, implied by Theorem~\ref{t:fw1}, that is the basis of the algorithm.

\begin{lemma}\label{lem:key_init}
  If a string of length $n$ has initial op-periods $p>q>1$ such that $p+q<n$ and $\gcd(p,q)=1$, then
  $q$ is strictly monotone.
\end{lemma}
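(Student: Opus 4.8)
The plan is to derive the result directly from Theorem~\ref{t:fw1} by checking which of its cases applies under the hypotheses ``$p$ and $q$ are initial op-periods, $p>q>1$, $p+q<n$, $\gcd(p,q)=1$.'' Since $q$ is initial, in particular $p$ is initial, so I will aim to invoke part~\ref{ff} (which requires only that $p$ be initial and that $p+q<n\le 2p$) together with part~\ref{ee} (which requires $p$ initial, $n>2p$, and the op-periods non-synchronized). The key point is that when both $p$ and $q$ are initial, both have shift $0$, so $k=0$ would be a common residue --- but the definition of synchronization point requires $k\in\lb 1..n-1\rb$, so a priori the op-periods need not be ``synchronized'' in the technical sense. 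I will therefore split on whether $n>2p$ or $n\le 2p$.

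First I would handle the case $p+q<n\le 2p$. Here part~\ref{ff} of Theorem~\ref{t:fw1} applies verbatim (it needs exactly $p+q<n\le 2p$ and $p$ initial) and yields that $(q,j)$ is a strictly monotone op-period of $S$; since $q$ is initial we have $j=0$, so $q$ is a strictly monotone op-period, which is what we want. Second, I would handle $n>2p$. Now I want part~\ref{ee}, which additionally needs that the op-periods are \emph{not} synchronized. Because both are initial, any synchronization point $k$ would have to satisfy $k\equiv 0\pmod p$ and $k\equiv 0\pmod q$, hence $k\equiv 0\pmod{pq}$ by coprimality and the Chinese Remainder Theorem; but $1\le k\le n-1$, so this is possible only if $pq\le n-1$, i.e.\ $n>pq$. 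So I would further split: if $n\le pq$ then the op-periods are automatically not synchronized, part~\ref{ee} applies, and $S$ itself is strictly monotone, a fortiori $q$ is a strictly monotone op-period. If instead $n>pq$, then part~\ref{aa} applies directly and gives that $S$ has a strictly monotone op-period $pq$; since $q\mid pq$ and any divisor of a strictly monotone op-period is a strictly monotone op-period (noted right after the definition of 2-monotone), $q$ is a strictly monotone op-period of $S$.

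Assembling these pieces covers all of $p+q<n$: either $n\le 2p$ (use~\ref{ff}), or $2p<n\le pq$ (use~\ref{ee}), or $n>pq$ (use~\ref{aa}), and in every case $q$ is a strictly monotone op-period of $S$. The only delicate point --- and the one I would be most careful about --- is the bookkeeping around ``synchronized'': I must make sure that in the range $2p<n\le pq$ the op-periods really fail to be synchronized (which follows from the CRT argument above, since a synchronization point would be a positive integer less than $n\le pq$ divisible by $pq$), so that the hypothesis of part~\ref{ee} is genuinely met, and conversely that in the range $n>pq$ I do not need non-synchronization at all because part~\ref{aa} has no such requirement. Everything else is a routine case check, and the divisor-closure remark disposes of passing from the op-period $pq$ down to~$q$.

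\begin{proof}
Since $q$ is an initial op-period, so is $p$, and both have shift $0$. We consider three cases depending on $n$.

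\textbf{Case $p+q<n\le 2p$.} By Theorem~\ref{t:fw1}\ref{ff}, $(q,j)$ is a strictly monotone op-period of $S$; as $q$ is initial, $j=0$ and thus $q$ is a strictly monotone op-period.

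\textbf{Case $2p<n\le pq$.} A synchronization point would be some $k\in\lb 1..n-1\rb$ with $k\equiv 0\pmod p$ and $k\equiv 0\pmod q$; by $\gcd(p,q)=1$ and the Chinese Remainder Theorem this forces $pq\mid k$, which is impossible since $1\le k\le n-1\le pq-1$. Hence the op-periods are not synchronized, and Theorem~\ref{t:fw1}\ref{ee} (applicable because $n>2p$ and $p$ is initial) implies that $S$ is strictly monotone; in particular $q$ is a strictly monotone op-period of $S$.

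\textbf{Case $n>pq$.} By Theorem~\ref{t:fw1}\ref{aa}, $S$ has a strictly monotone op-period $pq$. Since $q\mid pq$ and any divisor of a strictly monotone op-period is a strictly monotone op-period, $q$ is a strictly monotone op-period of $S$.

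In all cases $q$ is a strictly monotone op-period of $S$, hence in particular $q$ is strictly monotone.
\end{proof}
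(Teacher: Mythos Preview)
Your proof is correct and follows essentially the same three-case split as the paper's proof, invoking parts~\ref{aa}, \ref{ee}, and~\ref{ff} of Theorem~\ref{t:fw1} in the ranges $n>pq$, $2p<n\le pq$, and $p+q<n\le 2p$ respectively. You are in fact more careful than the paper in the middle case: you explicitly verify via the Chinese Remainder Theorem that the two initial op-periods cannot be synchronized when $n\le pq$, whereas the paper leaves this implicit. One harmless slip: the sentence ``Since $q$ is an initial op-period, so is $p$'' is a non-sequitur---the hypothesis already gives both $p$ and $q$ as initial, and neither implies the other---so just delete it.
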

\begin{proof}
  Let us consider three cases. If $n>pq$, then by Theorem~\ref{t:fw1}\ref{aa}, both $p$ and $q$ are strictly monotone. If $2p < n \le pq$, then Theorem~\ref{t:fw1}\ref{ee} implies that $S[1..pq-1]$ is strictly monotone, hence $p$ and $q$ are strictly monotone as well. Finally, if $p+q < n \le 2p$, we have that $q$ is strictly monotone by Theorem~\ref{t:fw1}\ref{ff}.
\end{proof}

%We use Algorithm~\ref{algo5} to compute the smallest non-trivial initial op-period.

\begin{center} 
\begin{minipage}{12.5cm}
\begin{algorithm}[H]
  \caption{Computing the Smallest Non-Trivial Initial Op-Period of $S$}\label{algo5}
  \If{$S$ has a non-trivial strictly monotone op-period}{\Return{smallest such op-period}\Comment*[r]{Theorem~\ref{thm:monotone}}}
  \vspace{0.1cm}
  $p:=$\,the length of the longest monotone prefix of $S$ plus 1\;
  \While{$p\le n$}{
    $k$ := $\LPP_p(S)$\;
    \lIf{$k=n$}{\Return{$p$}}
    $p$ := $\max(p+1,\,k-p-1)$\;
  }
  \Return{$\min(p_{mon},n)$;}
\end{algorithm}
\end{minipage}
\end{center}

\begin{theorem}
  The smallest initial op-period $p>1$ of a string $S$ of length $n$ can be computed in $O(n)$ time.
\end{theorem}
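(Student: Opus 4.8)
The plan is to verify that Algorithm~\ref{algo5} is correct and runs in $O(n)$ time. First I would dispose of the monotone case: if $S$ has a non-trivial strictly monotone op-period, then by Theorem~\ref{thm:monotone} (applicable since in that subcase $S$ itself is not strictly monotone — otherwise every $p$ is an op-period and the smallest is trivially~$2$, which one handles directly) we can list all such op-periods in $O(n)$ time and return the smallest, which is clearly the answer. So from now on assume $S$ has \emph{no} non-trivial strictly monotone op-period, and let $p^\star>1$ be the true smallest non-trivial initial op-period (if none exists the algorithm should fall through and return $\min(p_{mon},n)$ — I will need to track what $p_{mon}$ denotes, presumably the length of the longest monotone prefix plus~$1$ as set before the loop, and argue that in the "no initial op-period" case the loop exits with $p>n$ so the returned value is just $n$, signalling failure, or report accordingly).

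Next I would prove the loop invariant: \emph{at the start of each iteration, $S$ has no non-trivial initial op-period strictly smaller than the current value of $p$.} The base case holds because an initial op-period $q$ forces $\PREF'[q]\ge q$, hence $\trace(S)[1..q-1]$ is unary, i.e.\ the prefix $S[1..q]$ is strictly monotone; so any initial op-period smaller than the longest-monotone-prefix-length-plus-one would be a non-trivial strictly monotone op-period of $S$, contradicting our assumption. For the inductive step, consider an iteration with current value $p$. We compute $k=\LPP_p(S)$ via Lemma~\ref{lem:LPP}. If $k=n$, then $p$ is an initial op-period of the whole string and, by the invariant, it is the smallest non-trivial one — return $p$. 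If $k<n$, I must show $S$ has no non-trivial initial op-period $q$ with $p\le q\le \max(p+1,k-p-1)-1$; combined with the invariant this justifies the update $p:=\max(p+1,k-p-1)$. The case $q=p$ is immediate: $p$ is not initial since $k<n$. So suppose $p<q\le k-p-2$, so that in particular $p+q<k$. Here is the crux.

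\textbf{The main obstacle} is precisely this step: ruling out an initial op-period $q$ with $p<q<k-p-1$. The prefix $S[1..k]$ has initial op-period $p$ by definition of $k=\LPP_p$, and it has initial op-period $q$ (since $q<k$ and $q$ is an initial op-period of $S$). If $\gcd(p,q)=1$, then by Lemma~\ref{lem:key_init} applied to the string $S[1..k]$ (whose length $k$ satisfies $p+q<k$), $q$ is strictly monotone — but then $q$ is a non-trivial strictly monotone op-period of $S$, contradiction. If $d=\gcd(p,q)>1$, then since $p<q$ we have $q>d$ and $p+q\le \dots$; here I would invoke Theorem~\ref{t:fwd2} on $S[1..k]$ — provided $k\ge p+q-d$, which follows from $p+q<k$ — to conclude $S[1..k]$ has initial op-period $d$, hence so does $S[1..d\lfloor k/d\rfloor]$ and in particular $S$ has initial op-period $d$ with $1<d\le p$; but if $d<p$ this contradicts the invariant, and if $d=p$ then $p\mid q$, and one argues (again via $\PREF'$ and $\LPP_p$) that $q\le k-p$ forces $p$ to actually be an initial op-period up to position $q+p\le k$, which is consistent, but $q$ being an initial op-period with $q<k$ and $p\mid q$ is fine — the point is that $d=p$ reduces us to needing only to have excluded $q=p$, already done. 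Cleaning up this $d=p$ subcase is the fiddly part.

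Finally, the running time. Each iteration costs $O(k/p+1)=O(\LPP_p(S)/p+1)$ by Lemma~\ref{lem:LPP}, after the one-time $O(n)$ preprocessing of the $\PREF$ table. The key observation is that after an iteration with $k<n$, the new value of $p$ is at least $\max(p+1,\,k-p-1)$, so either $p$ increases by one (cheap iteration, $O(1)$ amortized since this can happen for at most the values of $p$ in the range touched) or $p$ roughly doubles relative to $k$: concretely $k\le 2p'+1$ where $p'$ is the new value, so the cost $O(k/p+1)$ of the iteration that \emph{produced} $k$ is $O(p'/p+1)$, and since $\sum p'/p$ over a geometrically-growing-ish sequence telescopes, the total is $O(n)$. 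I would make this precise by noting $p$ never exceeds $n$ and the potential $\sum_{\text{iterations}} k/p$ is dominated by a geometric-type series bounded by $O(n)$; the $+1$ terms sum to $O(n)$ because $p$ strictly increases each iteration so there are at most $n$ iterations, but more carefully, since $p$ at least increments and the "doubling" iterations are $O(\log n)$ of them while "increment" iterations have $k<2p$ hence cost $O(1)$ — summing to $O(n)$ overall.
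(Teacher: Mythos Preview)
Your overall structure tracks Algorithm~\ref{algo5} correctly, and the running-time analysis is essentially the paper's. However, the correctness argument for the inductive step has two genuine gaps.

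In the coprime case you misapply Lemma~\ref{lem:key_init}: with your $p<q$ both initial op-periods of $S[1..k]$ and $p+q<k$, the lemma (after relabelling to match its hypothesis ``$p>q$'') says the \emph{smaller} period --- namely your $p$ --- is strictly monotone, not $q$. The contradiction is then not with the assumption ``$S$ has no strictly monotone op-period'' (since $p$ need not be an op-period of the whole $S$), but with the initial choice of the loop variable: $p$ was set to exceed the longest monotone prefix, so $S[1..p]$ cannot be strictly monotone. This is exactly how the paper argues it.

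The more serious gap is the case $d=\gcd(p,q)>1$. From Theorem~\ref{t:fwd2} you correctly deduce that $S[1..k]$ has initial op-period $d$, but the jump to ``$S$ has initial op-period $d$'' is unjustified and in general false: nothing prevents the periodicity from breaking past position $k$. Your attempt to finesse the $d=p$ subcase does not close this. The paper's argument is different and uses a tool you omit: it takes $q$ to be the \emph{smallest} non-trivial initial op-period of $S$, notes that $d\mid q$ with $1<d<q$, so $d$ is a non-trivial full op-period of $S[1..q]$, and then invokes Lemma~\ref{l:primitive} (the shape of the smallest non-trivial initial op-period admits no shorter non-trivial full op-period) for the contradiction. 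Without Lemma~\ref{l:primitive} the $\gcd>1$ case does not go through.

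A smaller point: your base-case reasoning (``$\PREF'[q]\ge q$ forces $\trace(S)[1..q{-}1]$ unary'') is false; take $S=1\,3\,2\,2\,4\,3$, which has initial op-period $3$ but $\trace(S)[1..2]=\texttt{+}\texttt{-}$. The correct argument is simply that any $q$ below the initial value of $p$ lies inside the longest monotone prefix, so $S[1..q]$ is strictly monotone by definition, making $q$ a strictly monotone op-period --- already handled.
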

\begin{proof}
  We follow the lines of Algorithm~\ref{algo5}.
  If $S$ is not strictly monotone itself, we can compute the smallest non-trivial strictly monotone initial op-period of $S$ using Theorem~\ref{thm:monotone}.
  Otherwise, the smallest such op-period is 2.
  If $S$ has a non-trivial strictly monotone initial op-period and the smallest such op-period is $q>1$, then none of $2,\ldots,q-1$ is an initial op-period of $S$.
  Hence, we can safely return $q$.

  Let us now focus on the correctness of the while-loop. The invariant is that there is no initial op-period of $S$ that is smaller than $p$. If the value of $k=\LPP_p(S)$ equals $n$, then $p$ is an initial op-period of $S$ and we can safely return it. Otherwise, we can advance $p$ by 1. There is also no smallest initial op-period $p'$ such that $p<p'<k-p-1$. Indeed, Lemma~\ref{lem:key_init} would imply that $p$ is strictly monotone if $\gcd(p,p')=1$ (which is impossible due to the initial selection of $p$) and Theorem~\ref{t:fwd2} would imply an initial op-period of $S[1..p']$ that is smaller than $p'$ and divides $p'$ if $\gcd(p,p')>1$ (which is impossible due to Lemma~\ref{l:primitive}). This justifies the way $p$ is increased.

  Now let us consider the time complexity of the algorithm. The algorithm for strictly monotone op-periods of Theorem~\ref{thm:monotone} works in $O(n)$ time. By Lemma~\ref{lem:LPP}, $k$ can be computed in $O(k/p+1)$ time. If $k \le 3p$, this is $O(1)$. Otherwise, $p$ at least doubles; let $p'$ be the new value of $p$. Then $O(k/p+1)=O((p+p'-1)/p+1)=O(p'+1)$. The case that $p$ doubles can take place at most $O(\log n)$ times and the total sum of $p'$ over such cases is $O(n)$.
\end{proof}

\section{Computing All Op-Periods}\label{sec:general}
An \emph{interval representation} of a set $X$ of integers is
$X=\lb i_1..j_1 \rb \cup \lb i_2..j_2 \rb \cup \dots \cup \lb i_k..j_k \rb$
where $j_1+1 < i_2$, \ldots, $j_{k-1}+1 < i_k$; $k$ is called the \emph{size} of the representation.

Our goal is to compute a \emph{compact representation} of all the op-periods of a string
that contains, for each op-period $p$, an interval representation of the set $\Shifts_p$.

For an integer set $X$, by $X \bmod p$ we denote the set $\{x \bmod p\,:\, x \in X\}$.
The following technical lemma provides efficient operations on interval representations of sets.

\begin{lemma}\label{lem:Compact}
  \begin{enumerate}[label={\rm(\alph*)}]
  \item\label{com:a} Assume that $X$ and $Y$ are two sets with interval representations of sizes $x$ and $y$, respectively.
  Then the interval representation of the set $X \cap Y$ can be computed in $O(x+y)$ time.
  \item\label{com:b} Assume that $X_1,\dots,X_k \subseteq \lb 0..n \rb$ are sets with interval representations
  of sizes $x_1,\dots,x_k$ and $p_1,\ldots,p_k$ be positive integers.
  Then the interval representations of all the sets $X_1 \bmod p_1,\dots,X_k \bmod p_k$ can be computed in $O(x_1+\dots+x_k+k+n)$ time.
  \end{enumerate}
\end{lemma}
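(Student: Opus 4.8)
The two parts are handled separately; part~\ref{com:a} is a linear merge of two sorted lists, whereas part~\ref{com:b} needs a global counting-sort step to avoid a logarithmic overhead, and that is the only non-routine point. For part~\ref{com:a}, I would exploit the fact that an interval representation already lists its intervals in increasing order, and run a two-pointer sweep over the intervals of $X$ and $Y$. At a step where $\lb a..b\rb$ is the current interval of $X$ and $\lb c..d\rb$ the current interval of $Y$, their contribution to $X\cap Y$ is $\lb\max(a,c)..\min(b,d)\rb$; I emit it precisely when $\max(a,c)\le\min(b,d)$, and then advance the pointer whose interval has the smaller right endpoint (both, on a tie). This runs in $O(x+y)$ time and emits the intervals of $X\cap Y$ in increasing order. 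To see that the output is a valid interval representation one only checks the gap condition: the interval emitted just after one ending at $\min(b,d)$ starts at least at $b+2$ or at $d+2$, since in whichever of $X$, $Y$ the pointer was advanced the next source interval begins at least two units beyond the one just consumed.

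For part~\ref{com:b}, consider first a single set $X_i$. For one interval $\lb a..b\rb$ of its representation, $\lb a..b\rb\bmod p_i$ equals $\lb 0..p_i-1\rb$ when $b-a+1\ge p_i$; it equals $\lb a\bmod p_i\, ..\, b\bmod p_i\rb$ when $b-a+1<p_i$ and $\floor{a/p_i}=\floor{b/p_i}$; and it equals $\lb 0\, ..\, b\bmod p_i\rb\cup\lb a\bmod p_i\, ..\, p_i-1\rb$ when $b-a+1<p_i$ and $\floor{a/p_i}=\floor{b/p_i}-1$ (the only remaining case, since an interval shorter than $p_i$ crosses a multiple of $p_i$ at most once). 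Hence $X_i\bmod p_i$ is the union of at most $2x_i$ explicitly computable intervals contained in $\lb 0..p_i-1\rb\subseteq\lb 0..n\rb$ (or it is all of $\lb 0..p_i-1\rb$ if some source interval triggers the first case, which is detected in $O(1)$). These raw intervals need not be pairwise separated, so a final merge is required.

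To perform the merging for all $i$ without an $O(x_i\log x_i)$ per-group sort, I would gather the $O(\sum_i x_i)$ generated intervals into a single list, each tagged with its group index $i\in\lb 1..k\rb$ and its left endpoint $\ell\in\lb 0..n\rb$, and radix-sort the list by the key $(i,\ell)$: a stable counting sort by $\ell$ with $n+1$ buckets, followed by a counting sort by $i$ with $k$ buckets, all in $O(\sum_i x_i+n+k)$ time. Afterwards each group occupies a contiguous block sorted by left endpoint, and a single left-to-right pass per block merges overlapping or adjacent intervals (merging $\lb\ell_1..r_1\rb$ and $\lb\ell_2..r_2\rb$ with $\ell_1\le\ell_2$ exactly when $\ell_2\le r_1+1$) and outputs the maximal runs, which is the interval representation of $X_i\bmod p_i$; over all groups these passes cost $O(\sum_i x_i)$. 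Summing gives the stated $O(x_1+\dots+x_k+k+n)$ bound. The main obstacle is exactly this batching step: it is what lets the unions be computed in aggregate linear time, rather than incurring a logarithmic factor by sorting each set on its own.
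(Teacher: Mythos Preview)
Your proof is correct. For part~\ref{com:b} you do essentially what the paper does: observe that each source interval reduces modulo $p_i$ to at most two intervals in $\lb 0..p_i{-}1\rb$, then sort all generated intervals globally (you radix-sort by $(i,\ell)$ with counting sort; the paper bucket-sorts the endpoints) so that the per-group unions can be taken in aggregate linear time. For part~\ref{com:a} your route differs from the paper's: you run a two-pointer sweep over the two sorted interval lists and emit $\lb\max(a,c)..\min(b,d)\rb$ whenever nonempty, verifying the gap condition directly; the paper instead merges the two lists of interval endpoints, attaches weights $+1$/$-1$ to left/right endpoints, computes prefix sums, and reads off the intersection as the positions with prefix sum $2$. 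Both are standard linear-time techniques; your two-pointer sweep is arguably more direct here (it produces the output representation immediately and needs no post-processing), while the paper's prefix-sum method has the minor advantage that the same code computes unions (prefix sum $\ge 1$) and is what they implicitly reuse in part~\ref{com:b}.
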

\begin{proof}
  To compute $X \cap Y$ in point \ref{com:a}, it suffices to merge the lists of endpoints of intervals in the interval representations of $X$ and $Y$.
  Let $L$ be the merged list.
  With each element of $L$ we store a weight $+1$ if it represents the beginning of an interval and a weight $-1$ if it represents the endpoint of an interval.
  We compute the prefix sums of these weights for $L$.
  Then, by considering all elements with a prefix sum equal to 2 and their following elements in $L$, we can restore the interval representation of $X \cap Y$.

  Let us proceed to point~\ref{com:b}.
  Note that, for an interval $\lb i..j \rb$, the set $\lb i..j \rb \bmod p$ either equals $\lb 0..p-1 \rb$ if $j-i \ge p$, or
  otherwise is a sum of at most two intervals.
  For each interval $\lb i..j \rb$ in the representation of $X_a$, for $a=1,\ldots,k$, we compute the interval representation of $\lb i..j \rb \bmod p_a$.
  Now it suffices to compute the sum of these intervals for each $X_a$.
  This can be done exactly as in point \ref{com:a} provided that the endpoints of the intervals comprising representations of $\lb i..j \rb \bmod p_a$
  are sorted.
  We perform the sorting simultaneously for all $X_a$ using bucket sort \cite{DBLP:books/daglib/0023376}.
  The total number of endpoints is $O(x_1+\dots+x_k)$ and the number of possible values of endpoints is at most $n$.
  This yields the desired time complexity of point~\ref{com:b}.
\end{proof}

\begin{lemma}\label{lem:squares}
  For a string of length $n$, interval representations of the sets $\Squares_p$ for all $1 \le p \le n/2$
  can be computed in $O(n \log n)$ time.
\end{lemma}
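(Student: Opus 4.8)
The goal is to compute, for every $p \le n/2$, an interval representation of
$$\Squares_p = \{i : S[i..i+2p-1]\text{ is an op-square}\},$$
with total time $O(n\log n)$. The key structural observation is that $\Squares_p$ decomposes naturally into \emph{maximal runs} of consecutive op-squares of period $p$, and the endpoints of these runs are exactly the non-shiftable op-squares from Lemma~\ref{lem:op_sq}. Precisely, $i$ is the left endpoint of a maximal interval in $\Squares_p$ iff $S[i..i+2p-1]$ is a left non-shiftable op-square, and $j$ is the right endpoint of that same interval iff $S[j..j+2p-1]$ is a right non-shiftable op-square. So the whole task reduces to: (i) compute all non-shiftable op-squares and bucket them by period $p$; (ii) within each period $p$, sort the left-non-shiftable positions, sort the right-non-shiftable positions, and zip them together in increasing order to form the intervals $\lb i_1..j_1\rb, \lb i_2..j_2\rb,\dots$.

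\begin{proof}
  By Lemma~\ref{lem:op_sq}, all left non-shiftable and all right non-shiftable op-squares of $S$ can be computed in $O(n\log n)$ time; denote these (for period $p$) by $L_p$ and $R_p$ respectively. Observe that for a fixed $p$, the relation ``$S[i..i+2p-1]$ is an op-square'' partitions $\Squares_p$ into maximal intervals of consecutive integers: if $i$ and $i+1$ are both in $\Squares_p$ then by definition the op-square at $i$ is right shiftable, and symmetrically for left shiftability. Hence an integer $i$ is the smallest element of a maximal interval of $\Squares_p$ exactly when $S[i..i+2p-1]$ is a left non-shiftable op-square (i.e.\ $i \in L_p$), and $j$ is the largest element of that interval exactly when $j \in R_p$. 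Moreover, consecutive maximal intervals are separated by a gap of at least $2$ (a position strictly between two op-squares of period $p$ would make both neighbours shiftable), so listing them in increasing order of left endpoints yields a valid interval representation in the sense defined above, and the matched right endpoints occur in the same order.

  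Thus we proceed as follows. First compute $L = \bigcup_p L_p$ and $R = \bigcup_p R_p$ via Lemma~\ref{lem:op_sq}, each element tagged with its period $p \le n/2$; the total number of such elements is $O(n\log n)$ since every non-shiftable op-square is counted once. Using bucket sort \cite{DBLP:books/daglib/0023376} with buckets indexed by the pair (period $p$, starting position $i$) — there are at most $n/2$ values of $p$ and $n$ values of $i$ — we obtain, for each $p$, the list $L_p$ sorted increasingly and the list $R_p$ sorted increasingly, all in $O(n\log n + n) = O(n\log n)$ total time. Finally, for each $p$ we have $|L_p| = |R_p|$, and pairing the $t$-th element of the sorted $L_p$ with the $t$-th element of the sorted $R_p$ gives the $t$-th interval of $\Squares_p$; this is correct because the maximal intervals are pairwise disjoint and ordered, so their left endpoints and right endpoints interleave monotonically. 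Emitting these pairs takes time linear in $\sum_p(|L_p|+|R_p|) = O(n\log n)$, which also bounds the size of the output. Summing all contributions gives the claimed $O(n\log n)$ bound.
\end{proof}

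\textbf{Main obstacle.} The only genuinely non-routine point is the correctness of the pairing step — i.e.\ verifying that after sorting $L_p$ and $R_p$ independently, the $t$-th elements correspond to the same maximal run. This follows from the disjointness and linear ordering of the maximal intervals of $\Squares_p$ (so that $i_1 \le j_1 < i_2 \le j_2 < \cdots$ with each $|L_p|=|R_p|$), but it is worth stating explicitly; one should also double-check the boundary case where a single position forms a length-one interval, i.e.\ when an op-square is simultaneously left and right non-shiftable, so that the same position appears in both $L_p$ and $R_p$. Everything else is a direct application of Lemma~\ref{lem:op_sq} plus a bucket-sort accounting argument.
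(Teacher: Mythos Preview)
Your proof is correct and follows essentially the same approach as the paper: both identify the maximal intervals of $\Squares_p$ with pairs of left and right non-shiftable op-squares computed via Lemma~\ref{lem:op_sq}, note that $|L_p|=|R_p|$, and pair them in order. You add explicit detail the paper omits (the bucket sort to obtain sorted lists, the justification that the $t$-th left endpoint matches the $t$-th right endpoint, and the degenerate single-position case), but the underlying argument is identical.
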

\begin{proof}
  Let us define the following two auxiliary sets.
  \begin{align*}
    \LL_p &= \{i \in \lb 1..n-2p+1 \rb\,:\,S[i..i+2p-1]\text{ is a left non-shiftable op-square}\}\\
    \RR_p &= \{i \in \lb 1..n-2p+1 \rb\,:\,S[i..i+2p-1]\text{ is a right non-shiftable op-square}\}.
  \end{align*}
  By Lemma~\ref{lem:op_sq}, all the sets $\LL_p$ and $\RR_p$ can be computed in $O(n \log n)$ time.
  In particular, $\sum_p |\LL_p| = O(n \log n)$.

  Let us note that, for each $p$, $|\LL_p|=|\RR_p|$.
  Thus let $\LL_p=\{\ell_1,\dots,\ell_k\}$ and $\RR_p=\{r_1,\dots,r_k\}$.
  The interval representation of the set $\Squares_p$ is
  $\lb \ell_1..r_1 \rb \cup \dots \cup \lb \ell_k..r_k \rb$.
  Clearly, it can be computed in $O(|\LL_p|)$ time.
\end{proof}

%\subsection{Computing the Compact Representation}\label{subsec:algo}

\noindent
We will use the following characterization of op-periods.

\begin{observation}\label{obs:char}
  $p$ is an op-period of $S$ with shift $i$ if and only if all the following conditions hold:
  \begin{enumerate}[label={\rm(\Alph*)}]
    \item $S[i+1+kp..i+(k+2)p]$ is an op-square for every $0 \le k \le (n-2p-i)/p$,
    \item $\LCP(1,p+1) \ge \min(i, n-p)$,
    \item $\LCS(n,n-p) \ge \min((n-i) \bmod p, n-p)$.
  \end{enumerate}
\end{observation}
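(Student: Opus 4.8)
The plan is to prove the equivalence by unwinding the definition of an op-period. Recall that $p$ is an op-period of $S=S[1..n]$ with shift $i\in\lb 0..p-1\rb$ iff $S$ embeds in an infinite (or sufficiently long) sequence of pairwise order-equivalent blocks $V_1V_2\cdots V_k$ of length $p$ each, with $S$ placed so that $S[1..i]$ occupies the last $i$ positions of $V_1$ and $S[i+1..\,]$ starts at the beginning of $V_2$. I would first restate this in a purely local way: $p$ is an op-period with shift $i$ iff the three ``windows'' of $S$ — namely the prefix $S[1..i]$ (which sits at the tail of the first block), the bulk $S[i+1..i+\ell p]$ for $\ell=\lfloor (n-i)/p\rfloor$ (which is a concatenation of full blocks), and the suffix $S[i+\ell p+1..n]$ (a prefix of one more block) — are all ``consistent'' with a single shape of length $p$. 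I would then split the argument into showing that each of (A), (B), (C) captures exactly one of these consistency requirements, and that together they are equivalent to the existence of the common shape.

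The technical heart is the reduction from ``all blocks order-equivalent'' to ``all consecutive blocks order-equivalent,'' i.e.\ condition (A). First I would observe that order-equivalence of length-$p$ factors, restricted to factors aligned in the same residue class, is transitive along a chain of overlaps; but the blocks here are \emph{disjoint}, so we cannot chain them directly. Instead, the right statement is the classical fact (implicit in the SCER framework of \cite{DBLP:journals/tcs/MatsuokaAIBT16} and used already in \cite{DBLP:journals/tcs/CrochemoreIKKLP16}) that $S[j..j+p-1]\approx S[j+p..j+2p-1]$ for every valid starting position $j$ in a fixed residue class modulo $p$ implies all those factors share one shape — one proves it by induction, noting that a length-$2p$ op-square $S[j..j+2p-1]$ forces $\shape(S[j..j+p-1])=\shape(S[j+p..j+2p-1])$, and these shapes overlap transitively. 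So condition (A), which asserts exactly that $S[i+1+kp..i+(k+2)p]$ is an op-square for every admissible $k$, is equivalent to saying that the maximal run of full blocks inside $S$ — those starting at positions $i+1, i+1+p, \dots$ — all have a common shape $W$ of length $p$.

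It remains to glue in the partial blocks at the two ends, which is what (B) and (C) do. The prefix $S[1..i]$ must be order-equivalent to the \emph{suffix} of length $i$ of $W$; but the suffix of length $i$ of $W$ is $\shape$-equal to $S[i+1+kp+(p-i)..]$-type data, and the cleanest way to express it using only quantities available from $S$ is: the length-$i$ prefix of $S$ must op-match the length-$i$ factor of $S$ that occupies positions $i+1..i+i$ \emph{as read inside a block}, which after a shift becomes precisely the requirement $\LCP(1,p+1)\ge \min(i,n-p)$ — intuitively, the first $i$ characters and the characters $p+1,\dots$ must agree as op-patterns, because position $j$ and position $j+p$ lie in the same block-relative slot whenever $j\le i$. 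One has to be careful with the corner case $i> n-p$, when there is no position $p+1+\ldots$ to compare against; that is why the bound is $\min(i,n-p)$ rather than $i$, and it is here, in the boundary bookkeeping and the degenerate cases ($i=0$, or $n<2p$, or $\ell=0$), that I expect the main friction. Condition (C) is the mirror image for the right end: the suffix of $S$ of length $(n-i)\bmod p$ must be the prefix of that length of $W$, which translates to $\LCS(n,n-p)\ge\min((n-i)\bmod p,\,n-p)$ by comparing positions $j$ and $j-p$ that lie in the same block-relative slot near the right end.

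Putting it together: given (A), let $W$ be the common shape of the full blocks (if there are no full blocks, take $W$ arbitrary of length $p$ subject to (B) and (C), which then pin it down); conditions (B) and (C) then say exactly that the left and right partial blocks of $S$ are an order-equivalent suffix and prefix of $W$, respectively, so $S$ embeds in $\cdots W W W\cdots$ with shift $i$, giving an op-period. Conversely, an op-period with shift $i$ yields the common shape, which makes every relevant length-$2p$ window an op-square (hence (A)) and forces the $\LCP$/$\LCS$ agreements (hence (B), (C)). The only real obstacle is verifying that the $\min(\cdot, n-p)$ truncations in (B) and (C) are the \emph{correct} thresholds in all regimes — in particular that when $i\le n-p$ the full strength $\LCP(1,p+1)\ge i$ is both necessary and sufficient, and when $i>n-p$ the weaker bound suffices because the un-compared tail of $S[1..i]$ lies beyond position $n$ and is unconstrained — so I would treat those two regimes (and symmetrically for (C)) as separate short cases.
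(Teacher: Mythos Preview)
The paper states this as an observation without proof, so there is no argument to compare against; your proposal is essentially a correct fleshing-out of what the paper leaves implicit.

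Two small comments. First, you over-complicate the justification of (A): an op-square $S[j..j+2p-1]$ means precisely that $\shape(S[j..j+p-1])=\shape(S[j+p..j+2p-1])$, and equality of shapes is transitive, so the chain of op-squares in (A) immediately gives all full blocks the same shape $W$---no appeal to the SCER framework or to ``overlapping'' arguments is needed. Second, in the backward direction you gloss over the one non-tautological step: having fixed $W$ (or, in the no-full-block regime $i>n-p$, having only the partial constraints from (B) and (C)), you must actually \emph{construct} concrete integer strings $V_1$ and $V_k$ extending $S[1..i]$ and the right tail so that $\shape(V_1)=\shape(V_k)=W$. This is the routine ``shape extension'' fact (any string whose shape matches a suffix, resp.\ prefix, of $W$ can be completed to a string of shape $W$), and it is exactly here that condition (B)$\iff$(C) in the degenerate regime guarantees the two partial constraints on $W$ are consistent on their overlap of length $n-p$. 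Your plan to handle the $\min(\cdot,n-p)$ truncation by splitting into the two regimes $i\le n-p$ and $i>n-p$ (and symmetrically for (C)) is the right one; just make the extension step explicit when you write it out.
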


\begin{theorem}
  A representation of size $O(n \log n)$ of all the op-periods of a string of length $n$
  can be computed in $O(n\log n)$ time.
\end{theorem}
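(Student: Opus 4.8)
The plan is to read each set $\Shifts_p$ off the three conditions of Observation~\ref{obs:char}, rewritten as constraints on the shift $i$, and to evaluate all of them in bulk using Lemma~\ref{lem:Compact}. Throughout we consider only $p\in\lb 1..n-1\rb$, as the definition of an op-period requires $p<|S|$.

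\textbf{Condition (A).} Fix $p\le n/2$; for $p>n/2$ we have $\lb 1..n-2p+1\rb=\emptyset$, so (A) is vacuous and such $p$ can be handled directly in $O(1)$ time each. Let $N_p=\lb 1..n-2p+1\rb\setminus\Squares_p$. For $i\in\lb 0..p-1\rb$, the positions that (A) forces into $\Squares_p$ are exactly $\{m\in\lb 1..n-2p+1\rb : m\equiv i{+}1\pmod p\}$: indeed $i{+}1\in\lb 1..p\rb$, so every such $m$ equals $i{+}1{+}kp$ for some $k\ge 0$. Hence $i$ satisfies (A) iff no $m\in N_p$ has $m\equiv i{+}1\pmod p$, i.e. the set $A_p$ of admissible shifts under (A) equals $\lb 0..p-1\rb\setminus\big((N_p-1)\bmod p\big)$. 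By Lemma~\ref{lem:squares}, interval representations of all $\Squares_p$ are computed in $O(n\log n)$ time and have total size $O(n\log n)$; the same then holds for all $N_p$ and all $N_p-1\subseteq\lb 0..n\rb$. Feeding the family $\{N_p-1\}_{p\le n/2}$ with moduli $p$ into Lemma~\ref{lem:Compact}\ref{com:b} yields interval representations of all $(N_p-1)\bmod p$ in $O(n\log n)$ time, and complementing each within $\lb 0..p-1\rb$ (a linear-in-size step) gives interval representations of all $A_p$, of total size still $O(n\log n)$.

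\textbf{Conditions (B) and (C).} After the preprocessing of Lemma~\ref{lem:LCP} and its $\LCS$ analogue — deterministic in $O(n\log n)$ time, since $O(n\log^2\log n/\log\log\log n)=o(n\log n)$ — the values $\LCP(1,p+1)$ and $\LCS(n,n-p)$ are available in $O(1)$ time per $p$. Condition (B) then states that $i$ lies in an interval $\lb 0..\ell\rb$ (or in all of $\lb 0..p-1\rb$, when the LCP is large enough), so $B_p$ has an interval representation of size $O(1)$. Since $i\mapsto(n-i)\bmod p$ is a bijection of $\lb 0..p-1\rb$, condition (C) states that $(n-i)\bmod p$ lies in an interval $\lb 0..r\rb$, i.e. $i\in\lb n-r..n\rb\bmod p$, a union of at most two intervals, so $C_p$ likewise has an interval representation of size $O(1)$; both are computed in $O(1)$ time per $p$.

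\textbf{Assembling the answer.} By Observation~\ref{obs:char}, $\Shifts_p=A_p\cap B_p\cap C_p$. Two applications of Lemma~\ref{lem:Compact}\ref{com:a} compute this intersection in $O(|A_p|+1)$ time (writing $|A_p|$ for the representation size of $A_p$), producing an interval representation of $\Shifts_p$ of size $O(|A_p|+1)$; we report $p$ exactly when $\Shifts_p\neq\emptyset$. Summing over all $p$, both the running time and the total output size are $O\big(\sum_p(|A_p|+1)\big)=O(n\log n)$, which together with the preprocessing costs above proves the theorem. The crux — and the reason the non-shiftable op-squares of Lemma~\ref{lem:op_sq} are indispensable — is condition (A): enumerating $\Squares_p$ explicitly is hopeless, since $\sum_p|\Squares_p|$ can be $\Omega(n^2)$, whereas its run-based interval representation obtained from non-shiftable squares keeps $\sum_p|\Squares_p|$, and hence the whole computation, within $O(n\log n)$.
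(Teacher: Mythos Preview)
Your proof is correct and follows essentially the same approach as the paper: translate the three conditions of Observation~\ref{obs:char} into interval-represented sets of admissible shifts, use Lemma~\ref{lem:squares} and Lemma~\ref{lem:Compact}\ref{com:b} to handle condition~(A) via $(N_p-1)\bmod p$, obtain $B_p,C_p$ from the $\LCP/\LCS$ values, and intersect with Lemma~\ref{lem:Compact}\ref{com:a}. The only cosmetic difference is that the paper first writes $B_p,C_p$ as intervals in $\lb 1..n\rb$ and reduces them modulo~$p$ together with the $N_p$'s, whereas you compute their mod-$p$ images directly; this changes nothing in substance or in the $O(n\log n)$ bound.
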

\begin{proof}
  We use Algorithm~\ref{algo:4}.
  The sets $\AA_p$, $\BB_p$, and $\CC_p$ describe the sets of shifts $i$ that satisfy conditions (A), (B), and (C)
  from Observation~\ref{obs:char}, respectively.

  A crucial role is played by the set $\NN_p$ 
  of all positions which are \emph{not} the beginnings of op-squares of
  length $2p$. It is computed as a complement of the set $\Squares_p$.

\begin{center} 
\begin{minipage}{14cm}
\begin{algorithm}[H]
  \caption{Computing a Compact Representation of All Op-Periods}
  \label{algo:4}
  Compute $\Squares_p$ for all $p=1,\ldots,n$\Comment*[r]{Lemma~\ref{lem:squares}}
  \vspace*{0.05cm}
  \For{$p:=1$ \KwSty{to} $n$}{
    $\NN_p:=\lb 1..n-2p+1 \rb \setminus \Squares_p$\;
    $k:=\LCP(1,p+1)$; $\ell:=\LCS(n,n-p)$\;
    \lIf{$k=n-p$}{$\BB_p:=\CC_p:=\lb 1..n \rb$}\lElse{$\BB_p:=\lb 1..k \rb$; $\CC_p:=\lb n-\ell+1..n \rb$}
  }
  \vspace*{0.05cm}
  \For{$p:=1$ \KwSty{to} $n$ simultaneously}{
    $\NN_p:=\{ (x - 1) \bmod p : x\in \NN_p\}$;
    $\BB_p:=\BB_p \bmod p$;\ $\CC_p:=\CC_p \bmod p$\Comment*[r]{Lemma~\ref{lem:Compact}\ref{com:b}}
  }
  \vspace*{0.05cm}
  $\Shifts_1:=\lb 0 \rb$\;
  \For{$p:=2$ \KwSty{to} $n$}{
    $\AA_p:=\lb 0..p-1 \rb \setminus \NN_p$\;
    $\Shifts_p:=\AA_p \cap \BB_p \cap \CC_p$\Comment*[r]{Lemma~\ref{lem:Compact}\ref{com:a}}
  }
  \vspace*{0.1cm}
  \Return{$\Shifts_p$ for $p=1,\dots,n$;}
\end{algorithm}
\end{minipage}
\end{center}

\noindent
  Operations ``$\bmod$'' on sets are performed simultaneously using Lemma~\ref{lem:Compact}\ref{com:b}.
  All sets $\AA_p$, $\BB_p$, $\CC_p$ have $O(n \log n)$-sized representations.
  This guarantees $O(n \log n)$ time.
\end{proof}

\section{Computing Sliding Op-Periods}\label{sec:sliding}
For a string $S$ of length $n$, we define a family of strings $\SHAPE_1,\ldots,\SHAPE_n$ such that
$\SHAPE_k[i]=\shape(S[i..i+k-1])$ for $1 \le i \le n-k+1$.
Note that the characters of the strings are shapes.
Moreover, the total length of strings $\SHAPE_k$ is quadratic in $n$, so we will not compute those strings explicitly.
Instead, we use the following observation to test if two symbols are equal.
 \begin{observation}\label{obs:T}
   $\SHAPE_k[i]=\SHAPE_k[i']$ if and only if $\LCP(i,i') \ge k$.
\end{observation}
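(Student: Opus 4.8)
The plan is to peel both sides of the stated equivalence down to a single assertion about order-equivalence of two length-$k$ factors of $S$, and then close the argument using the shape/$\approx$ correspondence recorded in the introduction together with the fact that $\approx$ is closed under taking equal-length prefixes.

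First I would rewrite the left-hand side. By definition, $\SHAPE_k[i]=\SHAPE_k[i']$ presupposes that both indices are in range, i.e.\ $i+k-1\le n$ and $i'+k-1\le n$, and asserts $\shape(S[i..i+k-1])=\shape(S[i'..i'+k-1])$. Invoking the fact that two strings are order-equivalent if and only if they have the same shape, this is exactly $S[i..i+k-1]\approx S[i'..i'+k-1]$, with both factors well defined.

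Next I would treat the right-hand side. The key point is that $\approx$ is preserved under restriction to shorter equal-length prefixes: if $X\approx Y$ and $k\le|X|=|Y|$, then restricting the quantifier over $i,j$ in the definition of $\approx$ to $\lb 1..k\rb$ gives $X[1..k]\approx Y[1..k]$ (this is the ``substring consistent'' feature of the relation, but here it is immediate from the definition). Hence the set of lengths $k'$ for which $S[i..i+k'-1]\approx S[i'..i'+k'-1]$ is downward closed and bounded above by $\min(n-i+1,\,n-i'+1)$; since $\LCP(i,i')$ is by definition the maximum of this set, $\LCP(i,i')\ge k$ holds precisely when both factors are defined and $S[i..i+k-1]\approx S[i'..i'+k-1]$.

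Placing the two equivalences side by side gives the observation; the only thing that needs a moment's verification is that the in-range conditions match on the two sides, which they do, since $\LCP(i,i')\ge k$ forces $k\le n-i+1$ and $k\le n-i'+1$ exactly as $\SHAPE_k[i]$ and $\SHAPE_k[i']$ being defined does. There is no genuine obstacle here: the statement is essentially a reformulation of the shape/$\approx$ correspondence, and the only mild care required concerns the boundary indices $i+k-1$ and $i'+k-1$.
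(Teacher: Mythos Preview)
Your argument is correct. The paper does not actually give a proof of this observation; it states it and moves on, treating it as immediate from the definitions of $\SHAPE_k$, $\shape$, and $\LCP$. Your write-up simply makes explicit the two ingredients the paper takes for granted: the equivalence between equal shapes and the relation $\approx$, and the downward closure of $\approx$ under prefixes that makes ``$\LCP(i,i')\ge k$'' equivalent to ``the length-$k$ factors at $i$ and $i'$ are order-equivalent.'' There is no alternative route to compare against; what you wrote is exactly the intended (if unwritten) justification.
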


Sliding op-periods admit an elegant characterization based on $\SHAPE_k$; see Figure~\ref{fig:slid}.%
\begin{lemma}\label{lem:sliding}
An integer $p$, $1\le p \le n$, is a sliding op-period of $S$ if and only if
$p\le \frac12 n$ and $p$ is a period of $\SHAPE_p$, or $p > \frac12 n$ and $S[1..n-p]\approx S[p+1..n]$.
\end{lemma}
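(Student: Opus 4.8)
The plan is to unfold the definition of a sliding op-period, namely that $\Shifts_p = \lb 0..p-1 \rb$, and to translate each shift condition into a statement about op-squares, then to package the whole family of op-square conditions into a single periodicity statement about $\SHAPE_p$. First I would recall that $p$ being a sliding op-period means every shift $s\in\lb 0..p-1\rb$ is a valid shift, which by Observation~\ref{obs:char} is the conjunction, over all $s$, of the op-square condition (A) together with the boundary conditions (B) and (C). The key simplification is that when we quantify over \emph{all} shifts simultaneously, condition (A) for all $s$ says exactly that \emph{every} window $S[j..j+2p-1]$ with $1\le j\le n-2p+1$ is an op-square, i.e.\ $S[j..j+p-1]\approx S[j+p..j+2p-1]$ for all such $j$. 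By Observation~\ref{obs:T}, $S[j..j+p-1]\approx S[j+p..j+2p-1]$ is equivalent to $\LCP(j,j+p)\ge p$, which is equivalent to $\SHAPE_p[j]=\SHAPE_p[j+p]$; so "every length-$2p$ window is an op-square" is literally the statement that $p$ is a (standard) period of the string $\SHAPE_p$, which has length $n-p+1$. This handles the combinatorial core.

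Next I would separate the two regimes by the length of $\SHAPE_p$ relative to $p$. If $p\le\frac12 n$, then $\SHAPE_p$ has length $n-p+1 > p$, so "$p$ is a period of $\SHAPE_p$" is a non-vacuous condition and, as just argued, it is equivalent to all op-square conditions (A) holding for all shifts. It then remains to check that conditions (B) and (C) are automatically implied. Here I would use that (B) asks $\LCP(1,p+1)\ge\min(s,n-p)$ and (C) asks $\LCS(n,n-p)\ge\min((n-s)\bmod p, n-p)$; since $S[1..p]\approx S[p+1..2p]$ (this is the $j=1$ window, which is an op-square once $p$ is a period of $\SHAPE_p$, using $2p\le n$), we get $\LCP(1,p+1)\ge p\ge s$ for every shift $s\le p-1$, so (B) holds; symmetrically the $j=n-2p+1$ window gives $\LCS(n,n-p)\ge p$, so (C) holds. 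Thus in this regime $p$ is a sliding op-period iff $p$ is a period of $\SHAPE_p$.

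For the regime $p>\frac12 n$, the point is that $2p>n$, so \emph{no} length-$2p$ window fits inside $S$; condition (A) is vacuously true for every shift, and $\Squares_p=\emptyset$, so $\NN_p=\AA_p=\lb 0..p-1\rb$ and the only constraints are (B) and (C). I would then observe that the op-border condition $S[1..n-p]\approx S[p+1..n]$ is equivalent to $\LCP(1,p+1)\ge n-p$ (equivalently $\PREF'[p]=n$ in the notation of Section~\ref{sec:full_initial}), and that once this holds, $\min(s,n-p)\le n-p$ gives (B) for all $s$, while the equivalence $S[1..n-p]\approx S[p+1..n]\iff S[1+k..n-p]\approx S[1+p+k..n]$ for the relevant suffixes, i.e.\ $\LCS(n,n-p)\ge n-p$, gives (C) for all $s$; conversely if $p$ is a sliding op-period then taking shift $s=0$ forces $S[1..n-p]\approx S[p+1..n]$ directly from the definition of an op-period (with $k=1$ block, since $p<|S|\le 2p$). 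So in this regime $p$ is a sliding op-period iff $S[1..n-p]\approx S[p+1..n]$.

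The main obstacle I anticipate is not any single step but the bookkeeping at the boundaries: making sure that the index ranges in conditions (B) and (C) of Observation~\ref{obs:char}, the range $1\le j\le n-2p+1$ of windows, and the length $n-p+1$ of $\SHAPE_p$ all line up so that "$p$ is a period of $\SHAPE_p$" captures \emph{exactly} the family of op-square conditions — neither more nor less — and in particular that the extreme windows $j=1$ and $j=n-2p+1$ are strong enough to discharge (B) and (C). Once those edge cases are checked carefully, the rest is the straightforward translation via Observations~\ref{obs:char} and~\ref{obs:T}.
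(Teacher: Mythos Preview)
Your proof is correct and follows essentially the same route as the paper: both reduce the sliding condition to Observation~\ref{obs:char}, note that condition (A) taken over all shifts is exactly ``every length-$2p$ window is an op-square,'' translate this via Observation~\ref{obs:T} into $p$ being a period of $\SHAPE_p$, and then check that the boundary conditions (B), (C) are forced by the extreme windows (case $p\le\tfrac12 n$) or reduce to the op-border condition (case $p>\tfrac12 n$). One tiny slip: when $p>\tfrac12 n$ you have $\NN_p=\emptyset$, not $\lb 0..p-1\rb$ (the range $\lb 1..n-2p+1\rb$ is empty), though your conclusion $\AA_p=\lb 0..p-1\rb$ is correct and the rest of the argument is unaffected.
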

\begin{proof}
If $p$ is a sliding op-period, then $\Shifts_{p}=\lb 0 .. p-1\rb$.
Consequently, Observation~\ref{obs:char} yields that $S[i..i+2p-1]$ is an op-square for every $1 \le i \le n-2p+1$
and that $\LCP(1,p+1)\ge \min(p-1, n-p)$.

If $p\le \frac12 n$, then the former property yields $\SHAPE_{p}[i]=\SHAPE_{p}[i+p]$ for every $1 \le i \le n-2p+1$,
i.e., that $p$ is a period of $\SHAPE_p$.

On the other hand, if $p> \frac12 n$, the latter property implies $\LCP(1,p+1)\ge  n-p$,
i.e., $S[1..n-p]\approx S[p+1..n]$.

For a proof in the other direction, suppose that $p$ satisfies the characterization of Lemma~\ref{lem:sliding}.
If $p>\frac12n$, this yields $\LCP(1,p+1)=n-p=\LCS(n-p,n)$.
Otherwise, $S[i..i+2p-1]$ is an op-square for every $1 \le i \le n-2p+1$
and, in particular, $\LCP(1,p+1)\ge p$ and $\LCS(n-p,n)\ge p$.
In either case the characterization of Observation~\ref{obs:char} yields that $p$ is a sliding op-period.
\end{proof}

  \begin{figure}[t]
    \begin{center}
      \begin{tikzpicture}[scale=0.4]

\newcommand{\xstep}{1.1}
\newcommand{\ystep}{0.9}

\begin{scope}
  \foreach \x in {1,...,18}{\draw[densely dotted] (\x * \xstep,0 * \ystep) -- (\x * \xstep, 12 * \ystep);}
  \foreach \y in {0,...,12}{\draw[densely dotted] (1 * \xstep,\y * \ystep) -- (18 * \xstep, \y * \ystep);}
  \foreach \x/\y in {1/0,2/12,3/6,4/1,5/11,6/6,7/2,8/10,9/6,10/3,11/9,12/6,13/4,14/8,15/6,16/5,17/7,18/6}{
    \draw[very thick] (\x * \xstep,\y * \ystep) circle (0.07cm);
  }

  \foreach \x/\y in {1/0,4/1,7/2,10/3,13/4,16/5,18/6}{
    \draw (\x * \xstep, \y * \ystep) node[below right=-0.1cm] {\small \y};
  }
  \foreach \x/\y in {3/6,6/6,9/6,12/6,15/6}{
    \draw (\x * \xstep, \y * \ystep) node[below left=-0.1cm] {\small \y};
  }
  \foreach \x/\y in {11/9,14/8,17/7}{
    \draw (\x * \xstep, \y * \ystep) node[above right=-0.1cm] {\small \y};
  }
  \draw (2 * \xstep, 12 * \ystep) node[above right=-0.1cm] {\small $1\!2$};
  \draw (5 * \xstep, 11 * \ystep) node[above right=-0.1cm] {\small $1\!1$};
  \draw (8 * \xstep, 10 * \ystep) node[above right=-0.1cm] {\small $1\!0$};

  \draw (9.5 * \xstep,-0.5 * \ystep) node {\small $S$};
  \draw (1 * \xstep,0 * \ystep) -- (2 * \xstep,12 * \ystep) -- (3 * \xstep,6 * \ystep) -- (4 * \xstep,1 * \ystep) -- (5 * \xstep,11 * \ystep) -- (6 * \xstep,6 * \ystep) -- (7 * \xstep,2 * \ystep) -- (8 * \xstep,10 * \ystep) -- (9 * \xstep,6 * \ystep) -- (10 * \xstep,3 * \ystep) -- (11 * \xstep,9 * \ystep) -- (12 * \xstep,6 * \ystep) -- (13 * \xstep,4 * \ystep) -- (14 * \xstep,8 * \ystep) -- (15 * \xstep,6 * \ystep) -- (16 * \xstep,5 * \ystep) -- (17 * \xstep,7 * \ystep) -- (18 * \xstep,6 * \ystep);
\end{scope}

\begin{scope}[xshift=21cm,yshift=6cm]
  \foreach \x in {1,...,6}{\draw[densely dotted] (\x * \xstep,1 * \ystep) -- (\x * \xstep, 5 * \ystep);}
  \foreach \y in {1,...,5}{\draw[densely dotted] (1 * \xstep,\y * \ystep) -- (6 * \xstep, \y * \ystep);}
  \foreach \x/\y in {1/1,2/5,3/3,4/2,5/4,6/3}{
    \draw[very thick] (\x * \xstep,\y * \ystep) circle (0.07cm);
  }

  \foreach \x/\y in {1/1,4/2,6/3}{
    \draw (\x * \xstep, \y * \ystep) node[below right=-0.1cm] {\small \y};
  }
  \draw (3 * \xstep, 3 * \ystep) node[below left=-0.1cm] {\small 3};
  \foreach \x/\y in {2/5,5/4}{
    \draw (\x * \xstep, \y * \ystep) node[above right=-0.1cm] {\small \y};
  }

  \draw (3.5 * \xstep,0.5 * \ystep) node {\small $A$};
  \draw (1 * \xstep,1 * \ystep) -- (2 * \xstep,5 * \ystep) -- (3 * \xstep,3 * \ystep) -- (4 * \xstep,2 * \ystep) -- (5 * \xstep,4 * \ystep) -- (6 * \xstep,3 * \ystep);
\end{scope}

\begin{scope}[xshift=21cm,yshift=0cm]
  \foreach \x in {1,...,6}{\draw[densely dotted] (\x * \xstep,1 * \ystep) -- (\x * \xstep, 5 * \ystep);}
  \foreach \y in {1,...,5}{\draw[densely dotted] (1 * \xstep,\y * \ystep) -- (6 * \xstep, \y * \ystep);}
  \foreach \x/\y in {1/5,2/3,3/1,4/4,5/3,6/2}{
    \draw[very thick] (\x * \xstep,\y * \ystep) circle (0.07cm);
  }

  \foreach \x/\y in {3/1,6/2}{
    \draw (\x * \xstep, \y * \ystep) node[below right=-0.1cm] {\small \y};
  }
  \draw (5 * \xstep, 3 * \ystep) node[below left=-0.1cm] {\small 3};
  \foreach \x/\y in {1/5,2/3,4/4}{
    \draw (\x * \xstep, \y * \ystep) node[above right=-0.1cm] {\small \y};
  }

  \draw (3.5 * \xstep,-0.5 * \ystep) node {\small $B$};
  \draw (1 * \xstep,5 * \ystep) -- (2 * \xstep,3 * \ystep) -- (3 * \xstep,1 * \ystep) -- (4 * \xstep,4 * \ystep) -- (5 * \xstep,3 * \ystep) -- (6 * \xstep,2 * \ystep);
\end{scope}

\begin{scope}[xshift=28.5cm,yshift=3cm]
  \foreach \x in {1,...,6}{\draw[densely dotted] (\x * \xstep,1 * \ystep) -- (\x * \xstep, 5 * \ystep);}
  \foreach \y in {1,...,5}{\draw[densely dotted] (1 * \xstep,\y * \ystep) -- (6 * \xstep, \y * \ystep);}
  \foreach \x/\y in {1/3,2/1,3/5,4/3,5/2,6/4}{
    \draw[very thick] (\x * \xstep,\y * \ystep) circle (0.07cm);
  }

  \foreach \x/\y in {2/1,5/2}{
    \draw (\x * \xstep, \y * \ystep) node[below right=-0.1cm] {\small \y};
  }
  \draw (4 * \xstep, 3 * \ystep) node[below left=-0.1cm] {\small 3};
  \foreach \x/\y in {1/3,3/5,6/4}{
    \draw (\x * \xstep, \y * \ystep) node[above right=-0.1cm] {\small \y};
  }

  \draw (3.5 * \xstep,0.5 * \ystep) node {\small $C$};
  \draw (1 * \xstep,3 * \ystep) -- (2 * \xstep,1 * \ystep) -- (3 * \xstep,5 * \ystep) -- (4 * \xstep,3 * \ystep) -- (5 * \xstep,2 * \ystep) -- (6 * \xstep,4 * \ystep);
\end{scope}

\end{tikzpicture}
    \end{center}
    \caption{
      A string $S=0\,1\!2\,6\,1\,1\!1\,6\,2\,1\!0\,6\,3\,9\,6\,4\,8\,6\,5\,7\,6$ is graphically illustrated above (the $i$th point has coordinates $(i,S[i])$).
  We have $\SHAPE_6=ABCABCABCA$, where $A=1\,5\,3\,2\,4\,3$, $B=5\,3\,1\,4\,3\,2$, and $C=3\,1\,5\,3\,2\,4$.
  The shortest period of $\SHAPE_6$ is 3.
  Hence, 6 is a sliding op-period of $S$.
  Moreover, Lemma~\ref{lem:obs}\ref{it:obs:right} implies that 3 is a period of $\SHAPE_3$, hence a sliding op-period of $S$.
    }\label{fig:slid}
  \end{figure}
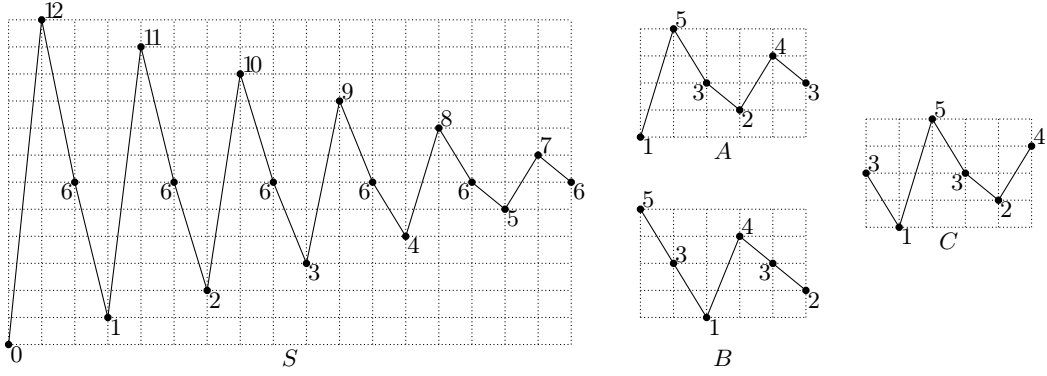

For a string $X$, we denote the shortest period of $X$ by $\per(X)$.
\begin{lemma}\label{lem:obs}
Suppose that $p=\per(\SHAPE_k[1..\ell])<\ell$. Then
\begin{enumerate}[label={\rm(\alph*)}]
  \item\label{it:obs:down} $p$ is also a period of $\SHAPE_{k'}[1..\ell+k-k']$ for $1\le k'\le k$,
  \item\label{it:obs:right} $q = \per(\SHAPE_k[1..\ell+1])$ satisfies $p=q$ or $p+q> \ell$.
\end{enumerate}
\end{lemma}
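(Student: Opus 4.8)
The plan is to reduce everything to two ingredients: the translation ``$\SHAPE_k[i]=\SHAPE_k[i']$ iff $S[i..i+k-1]\approx S[i'..i'+k-1]$'' (Observation~\ref{obs:T}, equivalently the definition of $\SHAPE_k$), and the purely combinatorial facts that (i) an equivalence $X\approx Y$ of two length-$k$ strings restricts to $X[a..b]\approx Y[a..b]$ for any $1\le a\le b\le k$, and (ii) the classical Fine--Wilf periodicity lemma, with the tight bound $p+q-\gcd(p,q)$, applied to $\SHAPE_k$ regarded as an abstract string over the alphabet of shapes.

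For part~\ref{it:obs:down}, I would first rewrite the hypothesis: $p=\per(\SHAPE_k[1..\ell])<\ell$ means $S[i..i+k-1]\approx S[i+p..i+p+k-1]$ for all $1\le i\le \ell-p$, and in particular (the instance $i=\ell-p$, legal since $p<\ell$) the single equivalence $W_1\approx W_2$ with $W_1=S[\ell-p..\ell-p+k-1]$ and $W_2=S[\ell..\ell+k-1]$. Fix $k'\le k$. For $1\le i\le \ell-p$, restricting the $i$-th equivalence to its length-$k'$ prefix gives $\SHAPE_{k'}[i]=\SHAPE_{k'}[i+p]$. For the remaining indices $\ell-p< i\le \ell+k-k'-p$, both $S[i..i+k'-1]$ and $S[i+p..i+p+k'-1]$ lie inside $W_1$, resp.\ $W_2$, at the \emph{same} offset $i-(\ell-p)$, so restricting $W_1\approx W_2$ to that offset window yields $\SHAPE_{k'}[i]=\SHAPE_{k'}[i+p]$. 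The two ranges cover $1\le i\le \ell+k-k'-p$, which is exactly periodicity of $\SHAPE_{k'}[1..\ell+k-k']$ with period $p$. The care needed is only to verify that every referenced window sits inside $S[1..\ell+k-1]$ (a short index check); note it is genuinely necessary to split into the prefix-restriction case and the sub-window-of-$W_1$ case, since for small $i$ the window $S[i..i+k'-1]$ need not fit inside $W_1$.

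For part~\ref{it:obs:right}, since $\SHAPE_k[1..\ell]$ is a prefix of $\SHAPE_k[1..\ell+1]$, the period $q=\per(\SHAPE_k[1..\ell+1])$ is also a period of $\SHAPE_k[1..\ell]$, so minimality of $p$ gives $p\le q$. Suppose for contradiction that $p\ne q$ and $p+q\le \ell$; I will show $\SHAPE_k[1..\ell+1]$ has period $p<q$, contradicting minimality of $q$. Since $\SHAPE_k[1..\ell]$ has periods $p$ and $q$ and $\ell\ge p+q>p+q-\gcd(p,q)$, Fine--Wilf gives it the period $\gcd(p,q)$, and minimality of $p$ forces $\gcd(p,q)=p$, i.e.\ $q=mp$ with $m\ge 2$. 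The only index at which period $p$ of $\SHAPE_k[1..\ell+1]$ is not already guaranteed by period $p$ of $\SHAPE_k[1..\ell]$ is $i=\ell+1-p$, so it remains to prove $\SHAPE_k[\ell+1-p]=\SHAPE_k[\ell+1]$. Apply period $q$: $\SHAPE_k[\ell+1]=\SHAPE_k[\ell+1-q]$ (valid, as $q\le \ell-1$ so $\ell+1-q\ge 2$). Then walk up from $\ell+1-q=\ell+1-mp$ in steps of $+p$ to $\ell+1-p$, invoking period $p$ of $\SHAPE_k[1..\ell]$ at each step; every index stepped from lies in $[1,\ell-p]$ because the smallest is $\ell+1-q\ge 2$ and the largest is $\ell+1-2p\le \ell-p$. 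This yields $\SHAPE_k[\ell+1-q]=\SHAPE_k[\ell+1-p]$, hence $\SHAPE_k[\ell+1]=\SHAPE_k[\ell+1-p]$, the desired contradiction.

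I expect the only real subtlety — in both parts — to be the index bookkeeping: ensuring in~\ref{it:obs:down} that the enlarged range $[1,\ell+k-k'-p]$ is covered by exactly the two cases and that all windows stay in bounds, and in~\ref{it:obs:right} that $p+q\le\ell$ together with $p\ge 1$ forces $q\le\ell-1$, which is precisely what keeps the back-walk from stepping outside $[1,\ell-p]$ before reaching $\ell+1-p$. None of this is deep; both parts become essentially one-paragraph arguments once the ranges are pinned down.
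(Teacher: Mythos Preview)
Your proof is correct and follows essentially the same route as the paper's: part~\ref{it:obs:down} uses the hereditary nature of $\approx$ to pass from $\SHAPE_k$-equality to a block of $\SHAPE_{k'}$-equalities (the paper states this as a single implication $\SHAPE_k[i]=\SHAPE_k[i']\Rightarrow\SHAPE_{k'}[i..i+k-k']=\SHAPE_{k'}[i'..i'+k-k']$, avoiding your explicit case split but with identical content), and part~\ref{it:obs:right} is virtually the same argument---Fine--Wilf on $\SHAPE_k[1..\ell]$ gives $p\mid q$, then $\SHAPE_k[\ell+1]=\SHAPE_k[\ell+1-q]=\SHAPE_k[\ell+1-p]$ forces $p$ to be a period of $\SHAPE_k[1..\ell+1]$, hence $p=q$.
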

\begin{proof}
Observe that $\SHAPE_k[i]=\SHAPE_k[i']$ is equivalent to $S[i..i+k-1]\approx S[i'..i'+k-1]$.
The relation $\approx$ is hereditary, so 
$S[j..j+k'-1]\approx S[j'..j'+k'-1]$ if $i \le j$, $j+k' \le i+k$, and $j-i=j'-i'$.
Thus, $$\SHAPE_{k'}[i..i+k-k']=\SHAPE_{k'}[i'..i'+k-k']$$ for each $k'\le k$.

Hence, $\SHAPE_k[1..\ell-p]=\SHAPE_k[p+1..\ell]$ implies
$\SHAPE_{k'}[1..\ell+k-k'-p]=\SHAPE_{k'}[p+1..\ell+k-k']$, which gives \ref{it:obs:down}.

For a proof of \ref{it:obs:right}, we observe that $p$ and $q$ are both periods of $\SHAPE_{k}[1..\ell]$.
If $p+q \le \ell$, then Periodicity Lemma implies $p \mid q$.
Thus, $\SHAPE_{k}[\ell+1]=\SHAPE_{k}[\ell+1-q]=\SHAPE_{k}[\ell+1-p]$,
i.e., $p=q$.
\end{proof}

\noindent We introduce a two-dimensional table $\PP$, where:

\centerline{$\PP[k, \ell] = \per(\SHAPE_k[1..\ell])$ if  $\per(\SHAPE_{k}[1..\ell])\le \frac13 \ell$,
and $\PP[k,\ell]=\bot$ (undefined) otherwise.}

\noindent
The size of $\PP$ is quadratic in $n$. However, Algorithm~\ref{algo2} computes $\PP$
column after column, keeping only the current column 
$\TT=\PP[\cdot,\ell]$.
The total number of differences between consecutive columns is linear.
Hence, any requested $O(n)$ values $\PP[k,\ell]$ can be computed in $O(n)$ time.
We also use an analogous table $\PP^R$ for the reverse string $S^R$.

\begin{center} 
\begin{algorithm}[H]
\LinesNumbered
\caption{Computation of $\PP[\cdot,\ell]$ from $\PP[\cdot,\ell-1]$}\label{algo2}
$\TT[1..n] := [\bot,\ldots,\bot]$; $t := 1$; $\ell' := 3$\;
\For{$\ell := 1$ \KwSty{to} $n$}{
\If{$t > 1$ \KwSty{and} $\SHAPE_{t-1}[\ell] \ne \SHAPE_{t-1}[\ell- \TT[t-1]]$}{\label{line:if}
$t := t-1$; $\TT[t]:=\bot$; $\ell' := 2\ell$\;\label{line:decr}
}
\If{$\ell \ge \ell'$}{\label{line:guard}
\While{$\per(\SHAPE_{t}[1..\ell])=\frac13\ell$}{\label{line:while}
 $\TT[t] := \frac13\ell$; $t := t+1$;    $\ell' := 2\ell$\;\label{line:incr}
}
}
\Comment{\rm Invariant:$\;\TT[k] = \PP[k,\ell]$% for $k \in \lb 1..n\rb$
  , $t = \min\{k : \TT[k]=\bot\}$, and $\per(\SHAPE_{t}[1..\ell]) \ge \frac13 \ell'$.}
}
\end{algorithm}
\end{center}

\begin{lemma}\label{lem:cor}
Algorithm~\ref{algo2} is correct, that is, it satisfies the invariant.
\end{lemma}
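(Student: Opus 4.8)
The plan is to verify that the loop invariant stated in Algorithm~\ref{algo2} holds before every iteration of the outer \texttt{for} loop. Concretely, I would prove by induction on $\ell$ that immediately before processing column $\ell$ (i.e.\ with the current value of $\ell$ and with $\TT$, $t$, $\ell'$ holding their values from the previous iteration), we have $\TT[k]=\PP[k,\ell-1]$ for all $k$, $t=\min\{k:\TT[k]=\bot\}$, and $\per(\SHAPE_{t}[1..\ell-1])\ge\tfrac13\ell'$. The base case $\ell=1$ is immediate: every string of length~$1$ has only period~$0\le\tfrac13$, so $\PP[k,0]=\bot$ for all $k$ (the shape strings are of length $n-k+1\ge 0$ but the prefix of length $0$ has period convention giving $\bot$), matching $\TT=[\bot,\dots,\bot]$, $t=1$, $\ell'=3$. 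The inductive step is where the work lies: I must show that the two \texttt{if} blocks transform a valid state for column $\ell-1$ into a valid state for column $\ell$.

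\textbf{Handling the first \texttt{if} (the decrement).} Using Lemma~\ref{lem:obs}\ref{it:obs:down} and the fact that $\approx$ is hereditary (so that the columns $\PP[\cdot,\ell]$ are ``monotone'' in the right sense), I would argue that when we move from $\ell-1$ to $\ell$ the only entry of the column that can become undefined is $\PP[t-1,\cdot]$: indeed for $k<t-1$ the period $\PP[k,\ell-1]$ is small, and appending one character either preserves it or, by Lemma~\ref{lem:obs}\ref{it:obs:right}, forces the new shortest period to exceed $\tfrac13\ell$ — but I need the stronger claim that among the defined entries only the last one, index $t-1$, is ``at risk,'' which follows because $\PP[k,\ell-1]\le\tfrac13(\ell-1)$ is genuinely a period of $\SHAPE_k[1..\ell]$ unless the newly added pair $\SHAPE_{k}[\ell]$ vs.\ $\SHAPE_{k}[\ell-\PP[k,\ell-1]]$ disagrees, and one shows (again via heredity, pushing the disagreement at level $k$ up to all levels $k'\le k$) that if it disagrees at level $k<t-1$ it would already have forced $\TT[t-1]=\bot$ earlier. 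So the test on line~\ref{line:if}, comparing $\SHAPE_{t-1}[\ell]$ with $\SHAPE_{t-1}[\ell-\TT[t-1]]$ (evaluated via Observation~\ref{obs:T} as an $\LCP$ query), correctly detects whether $\PP[t-1,\ell]$ stays equal to $\PP[t-1,\ell-1]$ or becomes $\bot$; in the latter case line~\ref{line:decr} sets $\TT[t-1]:=\bot$, decrements $t$, and sets $\ell':=2\ell$, and I must check that now $\per(\SHAPE_{t}[1..\ell])\ge\tfrac13\ell'=\tfrac23\ell$. This last inequality is exactly Lemma~\ref{lem:obs}\ref{it:obs:right}: the old period $p=\PP[t,\ell-1]$ (using the post-decrement $t$) and the new shortest period $q$ of $\SHAPE_{t}[1..\ell]$ satisfy $p=q$ or $p+q>\ell-1$; since the disagreement we just detected rules out $p=q$, we get $q>\ell-1-p\ge\ell-1-\tfrac13(\ell-1)=\tfrac23(\ell-1)$, and a small calculation (or, more cleanly, $q\ge\tfrac13\ell$ would have been caught, contradiction) upgrades this to $q\ge\tfrac13\ell'$. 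I would also note that at most one decrement happens per iteration because after it $\ell'=2\ell$ blocks the guard on line~\ref{line:guard} and no further \texttt{if}/\texttt{while} body runs this round.

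\textbf{Handling the second block (the \texttt{while} increments).} Here $\ell\ge\ell'$, so the guard passes, and I must show the \texttt{while} loop restores the invariant $\per(\SHAPE_{t}[1..\ell])\ge\tfrac13\ell'$ with the final $\ell'=2\ell$ (unless the loop doesn't execute, in which case $\per(\SHAPE_t[1..\ell])\ge\tfrac13\ell'$ held already from the old $\ell'$, which was $\le\ell$). Each iteration tests $\per(\SHAPE_t[1..\ell])=\tfrac13\ell$ exactly; if so, $\PP[t,\ell]=\tfrac13\ell$ by definition, so setting $\TT[t]:=\tfrac13\ell$ is correct, and incrementing $t$ and setting $\ell':=2\ell$ moves to the next index. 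The loop terminates when $\per(\SHAPE_t[1..\ell])\ne\tfrac13\ell$; I must argue this means $\per(\SHAPE_t[1..\ell])>\tfrac13\ell$ (hence $\PP[t,\ell]=\bot$, so $t=\min\{k:\TT[k]=\bot\}$ is restored) rather than $<\tfrac13\ell$. The point is that the period of $\SHAPE_t[1..\ell]$ is non-decreasing as $t$ increases at fixed $\ell$ (by heredity: a period at level $t$ gives a period at level $t+1$ over a prefix one shorter, combined with the guard), and before entering the loop we knew $\per(\SHAPE_t[1..\ell])\ge\tfrac13\ell'$ with the \emph{old} $\ell'$; tracking how $\ell'$ and the period evolve, one checks the period can never drop strictly below $\tfrac13\ell$. \textbf{The main obstacle} I anticipate is precisely this bookkeeping: pinning down why $\per(\SHAPE_t[1..\ell])$ is never strictly less than $\tfrac13\ell$ when the \texttt{while} test fails, and why the \texttt{for}-iteration does at most one of {decrement, (several) increments}, so that the two \texttt{if} blocks never interfere — this requires carefully combining the monotonicity in $k$ (heredity) with the monotonicity in $\ell$ (Periodicity Lemma via Lemma~\ref{lem:obs}) and watching the role of $\ell'$ as a ``credit'' that is reset to $2\ell$ exactly when the period jumps. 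Once that is in place, the remaining verifications — that $\TT[k]=\PP[k,\ell]$ for the untouched indices, that the decrement and the increments each maintain $t=\min\{k:\TT[k]=\bot\}$ — are routine consequences of the definitions.
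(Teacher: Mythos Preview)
Your plan is essentially the paper's: induction on $\ell$, treating the decrement (only $\TT[t-1]$ can turn to $\bot$) and the \texttt{while}-increments (any newly defined entry must equal $\tfrac13\ell$) separately, both driven by Lemma~\ref{lem:obs}. Two places where the paper's argument is tighter than your sketch deserve mention.

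First, for $k<t-1$, your claim that a disagreement at level $k$ ``would already have forced $\TT[t-1]=\bot$'' does not go through as stated, because the periods $\PP[k,\ell-1]$ and $\PP[t-1,\ell-1]$ may differ, so a mismatch $\SHAPE_k[\ell]\ne\SHAPE_k[\ell-\PP[k,\ell-1]]$ does not translate into the mismatch the algorithm tests at level $t-1$. The paper instead applies Lemma~\ref{lem:obs}\ref{it:obs:down} with $p=\PP[t-1,\ell-1]$ to deduce directly that $p$ is a period of $\SHAPE_k[1..\ell]$ (not just of $\SHAPE_k[1..\ell-1]$); then $p'=\PP[k,\ell-1]$ and $q=\per(\SHAPE_k[1..\ell])$ satisfy $p'+q\le p'+p\le\tfrac23(\ell-1)$, so Lemma~\ref{lem:obs}\ref{it:obs:right} forces $q=p'$.

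Second, for your ``main obstacle'' (the \texttt{while}-termination invariant $\per(\SHAPE_{t'}[1..\ell])\ge\tfrac23\ell$ for the final $t'$), monotonicity of $\per(\SHAPE_t[1..\ell])$ in $t$ only gives $>\tfrac13\ell$, not $\ge\tfrac23\ell$. The paper closes this gap with the Periodicity Lemma: if $r=\per(\SHAPE_{t'}[1..\ell])<\tfrac23\ell$, then by Lemma~\ref{lem:obs}\ref{it:obs:down} $r$ is a period of $\SHAPE_t[1..\ell]$ for the old $t$ (where $\per=\tfrac13\ell$); since $r+\tfrac13\ell\le\ell$, the Periodicity Lemma yields $\tfrac13\ell\mid r$, whence $r=\tfrac13\ell$, contradicting the definition of $t'$.
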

\begin{proof}
First, observe that the invariant is satisfied after the first iteration.
This is because $\per(\SHAPE_k[1..1])=1$ for each $k$ and  the initial values are not changed during this iteration.

Thus, our task is to prove that the invariant is preserved after each subsequent $\ell$th iteration.
Let $t=\min\{k : \PP[k,\ell-1]=\bot\}$ and $t'=\min\{k : \PP[k,\ell]=\bot\}$.

First, we consider the values $\PP[k,\ell]$ for $k < t$.
For this,  we assume $t>1$ and denote $p = \PP[t-1,\ell-1]$.
Since $p$ is a period of $\SHAPE_{t-1}[1..\ell-1]$, Lemma~\ref{lem:obs}\ref{it:obs:down} yields that $p$ is also a period of $\SHAPE_{k}[1..\ell]$ for $k<t-1$.
We apply Lemma~\ref{lem:obs}\ref{it:obs:right} for $p'=\per(\SHAPE_{k}[1..\ell-1])$. 
Since $p'+p \le \ell-1$, we conclude that $p'=\per(\SHAPE_{k}[1..\ell])$, i.e., $\PP[k,\ell-1]=p'=\PP[k,\ell]$.
Now, we consider the value $\PP[t-1,\ell]$. Lemma~\ref{lem:obs}\ref{it:obs:right}, applied for $p=\per(\SHAPE_{t-1}[1..\ell-1])$
and $q=\per(\SHAPE_{t-1}[1..\ell])$, yields $p=q$ or $p+q\ge \ell$.
To verify the first case, we check whether $\SHAPE_{t-1}[\ell]=\SHAPE_{t-1}[\ell-p]$.
In the second case, we conclude that $q \ge \frac23\ell$, so $\PP[t-1,\ell]=\bot$ (and $\ell':=2\ell$ is also set correctly).

Next, we consider the values $\PP[k,\ell]$ for $k\ge t$. Since $\PP[k,\ell-1]=\bot$, we have $\PP[k,\ell]=\bot$ or $\PP[k,\ell]=\frac13\ell$.
More precisely, $\PP[k,\ell]=\bot$ for $k\ge t'$ and $\PP[k,\ell]=\frac13\ell$ for $t \le k < t'$.
Thus, we check if $\per(\SHAPE_{k}[1..\ell])=\frac13\ell$ for subsequent values $k\ge t$.
Since $\per(\SHAPE_{t}[1..\ell])\ge \frac13 \ell'$, no verification is needed if $\ell < \ell'$.
To complete the proof, we need to show that the update $\ell':=2\ell$ is valid if $t'>t$.
For a proof by contradiction suppose that $r := \per(\SHAPE_{t'}[1..\ell]) < \frac23 \ell$.
By Lemma~\ref{lem:obs}\ref{it:obs:down}, $r$ is a period of $\SHAPE_{t}[1..\ell]$. Since $r+\frac13\ell \le \ell$,
Periodicity Lemma yields $\frac13 \ell \mid r$, and thus $r=\frac13\ell$, which contradicts the definition of $t'$.
\end{proof}

\begin{lemma}\label{lem:complexity}
Algorithm~\ref{algo2} can be implemented in time $O(n)$ plus the time to answer $O(n)$ $\LCP$ queries in $S$.
\end{lemma}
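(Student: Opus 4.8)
The plan is to account for the cost of Algorithm~\ref{algo2} over the entire run $\ell=1,\ldots,n$, and show that every line executes either $O(1)$ times per iteration of the outer \texttt{for} loop, or $O(n)$ times in total (amortized), where each unit of work is either constant-time bookkeeping or a single $\LCP$ query. First I would observe that by Observation~\ref{obs:T}, every test of the form $\SHAPE_{k}[i]=\SHAPE_{k}[i']$ reduces to checking whether $\LCP(i,i')\ge k$, i.e., a single $\LCP$ query. This covers the test on line~\ref{line:if}. For the \texttt{while}-loop condition on line~\ref{line:while}, I would explain how to evaluate $\per(\SHAPE_{t}[1..\ell])=\tfrac13\ell$ with $O(1)$ $\LCP$ queries: since by the invariant we only enter this block knowing $\per(\SHAPE_t[1..\ell])\ge\tfrac13\ell$, it suffices to verify that $\tfrac13\ell$ (when it is an integer) is actually a period of $\SHAPE_t[1..\ell]$, i.e., that $\SHAPE_t[1..\ell-\tfrac13\ell]=\SHAPE_t[\tfrac13\ell+1..\ell]$ as sequences of shapes; by Observation~\ref{obs:T} this is exactly the condition $\LCP(1,\tfrac13\ell+1)\ge \ell-\tfrac13\ell$ (after checking $3\mid\ell$), so one $\LCP$ query suffices.

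Next I would do the amortized analysis of the index $t$. The key quantity is that $t$ changes only on line~\ref{line:decr} (decrement by $1$) and line~\ref{line:incr} (increment by $1$); it never exceeds $n$ and never goes below $1$. The decrement on line~\ref{line:decr} happens at most once per outer iteration (it is guarded by a single \texttt{if}), hence at most $n$ times total; therefore the increments on line~\ref{line:incr} also occur at most $n+O(1)$ times in total across the whole run, since $t$ starts at $1$ and ends at most at $n$. Each execution of the \texttt{while} body on line~\ref{line:incr} does $O(1)$ work plus the one $\LCP$ query used to re-test the condition; so the total contribution of the \texttt{while}-loop is $O(n)$ time plus $O(n)$ $\LCP$ queries. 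Everything else in the outer loop body — the \texttt{if} on line~\ref{line:if}, maintaining $\ell'$, writing $\TT[t]$ — is $O(1)$ per iteration, for $O(n)$ total.

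Finally I would combine: the outer loop runs $n$ times with $O(1)$ amortized work each outside the \texttt{while} body, the \texttt{while} body runs $O(n)$ times total, and the only non-constant-time primitive used anywhere is the $\LCP$ query, of which $O(n)$ are issued. Hence the algorithm runs in $O(n)$ time plus the time to answer $O(n)$ $\LCP$ queries in $S$ (and symmetrically for $\PP^R$ on $S^R$). The step I expect to be the main obstacle is pinning down the \texttt{while}-loop condition evaluation: one must be careful that the invariant $\per(\SHAPE_{t}[1..\ell])\ge\tfrac13\ell'$ genuinely lets us replace the computation of the exact shortest period $\per(\SHAPE_t[1..\ell])$ — which we cannot afford — by a single divisibility check plus one $\LCP$ query, and that after a successful iteration the new value of $t$ again satisfies the invariant (which is exactly what Lemma~\ref{lem:cor} guarantees). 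Getting the bookkeeping of $\ell'$ consistent with this, so that no spurious period tests are performed when $\ell<\ell'$, is the delicate part.
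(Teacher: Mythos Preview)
Your amortized count of how many times each line is executed is fine and matches the paper. The gap is in how you evaluate the condition on line~\ref{line:while}. You write that ``$\SHAPE_t[1..\ell-\tfrac13\ell]=\SHAPE_t[\tfrac13\ell+1..\ell]$ as sequences of shapes; by Observation~\ref{obs:T} this is exactly the condition $\LCP(1,\tfrac13\ell+1)\ge \ell-\tfrac13\ell$.'' But Observation~\ref{obs:T} only converts a \emph{single} symbol equality $\SHAPE_k[i]=\SHAPE_k[i']$ into the test $\LCP(i,i')\ge k$; it says nothing about matching a whole block of $\SHAPE_t$ with one query. In fact, ``$\tfrac13\ell$ is a period of $\SHAPE_t[1..\ell]$'' is equivalent to $\LCP(i,i+\tfrac13\ell)\ge t$ for \emph{every} $i\in\lb 1..\tfrac23\ell\rb$, which is $\Theta(\ell)$ queries, not one. (A single test $\LCP(1,\tfrac13\ell+1)\ge \tfrac23\ell+t-1$ would be sufficient but not necessary: global order-equivalence of two long factors of $S$ is strictly stronger than pointwise order-equivalence of all their length-$t$ windows.) Consequently your ``one $\LCP$ query per while-condition evaluation'' is not justified, and the bound collapses.

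The paper closes this gap with additional structure you do not use. It groups iterations into \emph{phases} between consecutive changes of $t$, observes that the guard $\ell\ge\ell'$ ensures line~\ref{line:while} is reached only in phases where $\ell$ at least doubles, and within such a phase runs the Knuth--Morris--Pratt failure-function computation on $\SHAPE_t[1..\ell]$ (each KMP symbol comparison being one $\LCP$ query via Observation~\ref{obs:T}) to detect the first $\ell$ with $\per(\SHAPE_t[1..\ell])=\tfrac13\ell$. Once that $\ell$ is found, it spends $O(\ell)$ further $\LCP$ queries to compute $t'=\min\{\LCP(i,i+\tfrac13\ell):1\le i\le \tfrac23\ell\}$, after which every remaining while-condition test at that $\ell$ is the $O(1)$ comparison ``is the current $t$ at most $t'$?''. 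Because $\ell$ doubles across such phases, the total cost telescopes to $O(n)$ time and $O(n)$ $\LCP$ queries. This KMP-plus-phase argument is the missing ingredient in your plan.
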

\begin{proof}
First, observe that each line is executed $O(n)$ times. 
Indeed, we always have $t\le n$ and $t$ is decremented at most $n$ times in Line~\ref{line:decr}, so the number of increments in Line~\ref{line:incr}
is $O(n)$.

Each instruction takes constant time except for the conditions in Lines~\ref{line:if} and~\ref{line:while}.
The test in Line~\ref{line:if} can be implemented using a single $\LCP$ query (due to Observation~\ref{obs:T}).
Checking the condition in Line~\ref{line:while} requires a more careful implementation exploiting the structure of the queries. 

Suppose that the variable $t$ has been changed in iterations $\ell_1<\dots<\ell_m$. For consistence, we also define $\ell_0 = 1$ and $\ell_{m+1}=n+1$.
Consider a \emph{phase}, consisting of iterations $\ell\in \lb \ell_i+1 .. \ell_{i+1} \rb$.
Observe that $\ell'\ge 2\ell_i$ during the $i$th phase, so Line~\ref{line:while} is executed only during phases such that $\ell_{i+1}\ge 2\ell_i$.

Consider such a phase with $t=\min\{k : \PP[k,\ell_i]=\bot\}$.
We use the Knuth--Morris--Pratt algorithm~\cite{DBLP:books/daglib/0020103} to determine $\per(\SHAPE_{t}[1..\ell])$ for subsequent values $\ell\ge 2\ell_i$.
This takes $O(\ell_{i+1})$ time and, additionally, requires $O(\ell_{i+1})$ symbol equality checks within $\SHAPE_t$, which are implemented based on Observation~\ref{obs:T} using $\LCP$ queries.

If we learn that $\per(\SHAPE_{t}[1..\ell])=\frac13\ell$, we conclude that $\ell=\ell_{i+1}$.
We compute the largest $t'$ such that $\per(\SHAPE_{t'}[1..\ell])= \frac13\ell$ and for the subsequent values $k\ge t$ we simply verify if $k\le t'$ to check if $\per(\SHAPE_{k}[1..\ell])= \frac13\ell$.
Due to Observation~\ref{obs:T}, we have $$t'= \min\{\LCP(i,i+\tfrac13\ell)\,:\, 
1 \le i \le \tfrac23\ell\},$$ so $t'$ can be determined in $O(\ell_{i+1})$ time plus the time to answer $O(\ell_{i+1})$ $\LCP$-queries.

The $i$th phase makes $O(\ell_{i+1})$ steps if $\ell_{i+1}\ge 2\ell_i$, and $O(\ell_{i+1}-\ell_i)$ in general.
The overall running time is therefore $O(n)$ plus the time to answer $O(n)$ $\LCP$-queries.
\end{proof}

\begin{center} 
\begin{algorithm}[H]
\LinesNumbered
\caption{Computing the sliding op-periods $p \le \frac12 n$}\label{algo3}
$p := 1$\;
\While{$p\le \frac12 n$}{
\If{$(q := \PP[p, n-2p+1])=\PP^R[p, n-2p+1] \ne \bot$}{
\lIf{$p$ is a period of $\SHAPE_{p}[1..p+q]$}{report $p$}\label{line:per1}
$p := \min\{p' > p  : p' \text{ is a period of }\SHAPE_p[1..p+2q]\}$\label{line:nxt1}
}
\lElseIf{$\PP[p, \lceil\frac34(n-2p+1)\rceil]=\PP^R[p, \lceil\frac34(n-2p+1)\rceil] \ne \bot$}{$p := p+1$\label{line:2}}
\Else{
\lIf{$p$ is a period of $\SHAPE_p$}{report $p$}\label{line:per2}
$p := \min\{p' > p  : p' \text{ is a period of }\SHAPE_p\}$;\label{line:nxt2}
}
}
\end{algorithm}
\end{center}

\begin{lemma}\label{lem:cor2}
Algorithm~\ref{algo3} is correct, that is, it reports all sliding op-periods $p\le \frac12n$ of $S$.
\end{lemma}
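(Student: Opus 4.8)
The plan is to invoke Lemma~\ref{lem:sliding} to reduce the statement to the following: for every $p\le\frac12 n$, the algorithm reports $p$ if and only if $p$ is a period of the string $\SHAPE_p$, whose length is $n-p+1$. Before the case analysis I would set up two auxiliary facts. First, since order-equivalence is invariant under reversal, $\SHAPE^R_p$ is, up to a bijective relabelling of its symbols, the reverse of $\SHAPE_p$; consequently $\PP^R[p,\ell]$ equals the shortest period of the length-$\ell$ suffix of $\SHAPE_p$ when that period is at most $\frac13\ell$, and is $\bot$ otherwise. Second, the hereditary argument used in the proof of Lemma~\ref{lem:obs}\ref{it:obs:down} actually shows that \emph{every} period $\pi$ of $\SHAPE_k[1..\ell]$ --- not only the shortest --- is a period of $\SHAPE_{k'}[1..\ell+k-k']$ whenever $k'\le k$; I will record this stronger form, as it bounds the jumps of the loop variable.

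The core of the proof is a structural claim justifying Line~\ref{line:per1}. Put $\ell=n-2p+1$ and $L=n-p+1$, and assume $\per(\SHAPE_p[1..\ell])=\per(\SHAPE_p[p+1..L])=q\le\frac13\ell$, where $\SHAPE_p[p+1..L]$ is the length-$\ell$ suffix of $\SHAPE_p$. I claim that $p$ is a period of $\SHAPE_p$ if and only if $p$ is a period of $\SHAPE_p[1..p+q]$. One direction is immediate; for the other I would prove by induction on $i$, from $i=q$ up to $i=\ell$, that $\SHAPE_p[i]=\SHAPE_p[i+p]$: the base case $i\le q$ is the assumed period of $\SHAPE_p[1..p+q]$, and the inductive step chains $\SHAPE_p[i]=\SHAPE_p[i-q]$ (period $q$ of the prefix), $\SHAPE_p[i-q]=\SHAPE_p[i-q+p]$ (the induction hypothesis), and $\SHAPE_p[i-q+p]=\SHAPE_p[i+p]$ (period $q$ of the suffix). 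Now if $p$ is a period of $\SHAPE_p$ then $\SHAPE_p[1..\ell]=\SHAPE_p[p+1..L]$, so its length-$\ell$ prefix and suffix have equal shortest period; if that period is at most $\frac13\ell$ the loop is in the first branch, where by the claim the test of Line~\ref{line:per1} is precisely the condition that $p$ is a sliding op-period, and if it exceeds $\frac13\ell$ then $\PP[p,\ell]=\PP^R[p,\ell]=\bot$ and the first branch is not taken. Line~\ref{line:per2} is correct by inspection, being the defining condition through Lemma~\ref{lem:sliding}.

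Next I would show that whenever the middle branch at Line~\ref{line:2} is taken, $p$ is \emph{not} a period of $\SHAPE_p$; then advancing by one without reporting is safe, and a sliding op-period never falls into this branch. Suppose, for contradiction, that $p$ is a period of $\SHAPE_p$, and put $\ell'=\ceil{\frac34\ell}$ and $q=\PP[p,\ell']=\PP^R[p,\ell']$. By the period $p$, the length-$\ell'$ suffix of $\SHAPE_p$ coincides with the factor $\SHAPE_p[\ell-\ell'+1..\ell]$, so the latter has period $q$; since $\SHAPE_p[1..\ell']$ also has period $q$ and these two factors overlap in at least $\frac12\ell\ge q$ positions and together exhaust $\SHAPE_p[1..\ell]$, the period $q$ extends to all of $\SHAPE_p[1..\ell]$, whence $\per(\SHAPE_p[1..\ell])\le q\le\frac13\ell$. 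Using $\SHAPE_p[1..\ell]=\SHAPE_p[p+1..L]$ once more gives $\PP[p,\ell]=\PP^R[p,\ell]\neq\bot$, contradicting that the loop is in the \texttt{else if}.

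Finally I would verify that the loop never skips a sliding op-period: the first branch advances $p$ to $\min\{p'>p : p' \text{ is a period of }\SHAPE_p[1..p+2q]\}$ and the last branch to $\min\{p'>p : p' \text{ is a period of }\SHAPE_p\}$, both minima existing because the respective full lengths are periods exceeding $p$. If $p''$ is a sliding op-period with $p<p''\le\frac12 n$, then $p''$ is a period of $\SHAPE_{p''}$, so by the strengthened form of Lemma~\ref{lem:obs}\ref{it:obs:down} it is a period of $\SHAPE_p[1..n-p+1]=\SHAPE_p$, hence of the prefix $\SHAPE_p[1..p+2q]$ as well (here $p+2q\le L$ since $q\le\frac13\ell$). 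Therefore the new value of $p$ is at most $p''$, so no sliding op-period lies strictly between consecutive values of $p$; as $p$ strictly increases, every sliding op-period $\le\frac12 n$ eventually equals the current $p$ and is then reported --- by Line~\ref{line:per1} if the first branch applies, and otherwise, being barred from the middle branch, by Line~\ref{line:per2} in the last branch --- while no other value is reported. I expect the structural claim of the second paragraph to be the main obstacle: one must pin down exactly the prefix length ($p+q$ for the test, $p+2q$ for the jump, $\ceil{\frac34\ell}$ for the middle test) at which the local period tests become equivalent to the global one, and must check that all index ranges in the chaining and gluing arguments stay in bounds --- in particular the interplay of the threshold $\frac13$ in the definition of $\PP$ and the threshold $\frac34$ in Line~\ref{line:2}.
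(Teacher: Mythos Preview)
Your proposal is correct and follows essentially the same approach as the paper: analyze each of the three branches, show that the reporting test is equivalent to $p$ being a period of $\SHAPE_p$, and show that the jump never skips a sliding op-period via the hereditary period-transfer property. Your write-up is in fact a bit more explicit than the paper's in two places---you spell out the induction that upgrades the test on $\SHAPE_p[1..p+q]$ to a period of all of $\SHAPE_p$, and you state the strengthened form of Lemma~\ref{lem:obs}\ref{it:obs:down} (any period, not just the shortest) that both proofs silently rely on---but the argument is the same.
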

\begin{proof}
Let $p_i$ be the value of $p$ at the beginning of the $i$th iteration of the while-loop and let $\ell_i = n-2p_i+1$.
We shall prove that $p_i$ is reported if and only if it is a sliding op-period
and that there is no sliding op-period strictly between $p_{i}$ and $p_{i+1}$.

First, suppose that $q = \per(\SHAPE_{p_i}[1..\ell_i])=\per(\SHAPE_{p_i}[p_i+1..p_i+\ell_i])\le \frac13\ell_i$, i.e., we are in the first branch.
If $\SHAPE_{p_i}[1..q]=\SHAPE_{p_i}[p_i+1..p_i+q]$, then we must have
$\SHAPE_{p_i}[1..\ell_i] = \SHAPE_{p_i}[p_i+1..p_i+\ell_i]$, i.e., $p_i$ is a period of $\SHAPE_{p_i}=\SHAPE_{p_i}[1..p_i+\ell_i]$ and $p_i$ is a sliding op-period due to Lemma~\ref{lem:sliding}.
Moreover, any sliding op-period $p'>p_i$ must be a period of $\SHAPE_{p_i}$ (and, in particular, of $\SHAPE_{p_i}[1..p_i+2q]$) due to Lemma~\ref{lem:obs}\ref{it:obs:down}.
Consequently, $p'\ge p_{i+1}$, as claimed.

In the second branch we only need to prove that $\SHAPE_{p_i}[1..\ell_i]\ne \SHAPE_{p_i}[p_i+1..p_i+\ell_i]$.
For a proof by contradiction, suppose that we have an equality.
The condition from Line~\ref{line:2} means that the length-$\lceil\frac34 \ell_i\rceil$ prefix and suffix of $\SHAPE_{p_i}[1..\ell_i]=\SHAPE_{p_i}[p_i+1..p_i+\ell_i]$ 
has the common shortest period $q \le \frac13 \lceil\frac34 \ell_i\rceil \le \lceil\frac14 \ell_i\rceil$.
The prefix and the suffix overlap by at least $\lceil\frac12 \ell_i \rceil$ characters, so we actually have $q=\per(\SHAPE_{p_i}[1..\ell_i])=\per(\SHAPE_{p_i}[p_i+1..p_i+\ell_i])$.
Hence, in that case we would be in the first branch.

Finally, in the third branch we directly use  Lemma~\ref{lem:sliding} to check if $p_i$ is a sliding op-period.
Moreover, if $p'>p_i$ is also a sliding op-period, then $p'$ is a period of $\SHAPE_{p_i}$, i.e., $p'\ge p_{i+1}$.
\end{proof}

\begin{lemma}\label{lem:time}
Algorithm~\ref{algo3} can be implemented in time $O(n)$ plus the time to answer $O(n)$ $\LCP$ and $\LCS$ queries in $S$.
\end{lemma}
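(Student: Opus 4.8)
The plan is to account separately for (i) the number of iterations of the while loop, (ii) the reads from the tables $\PP$ and $\PP^R$, and (iii) the periodicity operations performed in Lines~\ref{line:per1},~\ref{line:nxt1},~\ref{line:per2}, and~\ref{line:nxt2}. Since $p$ strictly increases in every iteration and is always at most $\frac12 n$, there are at most $\frac12 n$ iterations, so it suffices to show that one iteration costs $O(1)$ amortized time plus a constant number of $\LCP$/$\LCS$ queries and a constant number of reads from $\PP$ and $\PP^R$.

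For the table reads I would avoid materializing $\PP$ and $\PP^R$ and instead precompute, for every $p$, the $O(1)$ entries that Algorithm~\ref{algo3} may ever consult for that value of $p$: the entries $\PP[p,n-2p+1]$ and $\PP[p,\lceil\frac34(n-2p+1)\rceil]$ together with their $\PP^R$ counterparts, and --- once the former are known, so that $q=\PP[p,n-2p+1]$ is available --- the entries $\PP[p,q]$ and $\PP[p,p+2q]$ needed for the periodicity operations below. By the remark following Algorithm~\ref{algo2}, one left-to-right sweep that runs Algorithm~\ref{algo2} and, upon completing the column of index $\ell$, records the requested entries lying in that column, takes $O(n)$ time plus $O(n)$ $\LCP$/$\LCS$ queries; since each value of $p$ contributes $O(1)$ such entries, the total number of recorded entries is $O(n)$. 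Two sweeps for $\SHAPE$ and two for $\SHAPE^R$ (the mirror of Algorithm~\ref{algo2} run on $S^R$) suffice, the second pass being needed only because $q$ --- hence the cells $\PP[p,q]$ and $\PP[p,p+2q]$ --- is learned during the first pass.

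The crux is realizing the period tests and ``next period'' searches of Lines~\ref{line:per1}--\ref{line:nxt1} and~\ref{line:per2}--\ref{line:nxt2} in constant amortized time, i.e.\ without scanning a prefix of $\SHAPE_p$ of length $\Theta(n)$. When the first branch is taken we have $q=\PP[p,\ell]=\PP^R[p,\ell]\neq\bot$, so $q=\per(\SHAPE_p[1..\ell])=\per(\SHAPE_p[p+1..p+\ell])\le\frac13\ell$ with $\ell=n-2p+1$; by Lemma~\ref{lem:obs}\ref{it:obs:down} and the Fine--Wilf periodicity lemma, the shortest period $\pi$ of $\SHAPE_p[1..p+2q]$ is $q$ or a divisor of $q$ (read off from $\PP[p,\cdot]$), and its least period exceeding $p$ --- the value sought in Line~\ref{line:nxt1} --- is the least multiple of $\pi$ exceeding $p$, computable in $O(1)$. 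The test in Line~\ref{line:per1} is equivalent, again via periodicity, to the length-$q$ block $\SHAPE_p[1..q]$ being invariant under the cyclic shift by $p\bmod q$, i.e.\ to its shortest period dividing $\gcd(p,q)$ (computed in $O(1)$ by Fact~\ref{fct:gcd}), which follows from $\PP[p,q]$, or, when $\gcd(p,q)$ is comparable to $q$, from a single $\LCP$ query via Observation~\ref{obs:T}. The third branch is handled analogously with $\SHAPE_p$ in place of a prefix and with the suffix-side periods supplied by $\PP^R$ (recall $\per(\SHAPE_p)=\per(\SHAPE^R_p)$); in the residual cases where no sufficiently short period is certified by $\PP$ or $\PP^R$, I would follow the border chain of $\SHAPE_p$ using a Knuth--Morris--Pratt-style failure table maintained incrementally as $p$ grows, so that its total cost telescopes to $O(1)$ per iteration, its symbol comparisons being answered by $\LCP$ queries. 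Correctness of all these constant-time substitutes follows from the identities already established in the proof of Lemma~\ref{lem:cor2}.

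I expect the main obstacle to be exactly this third step: carrying out the case analysis on the relative sizes of $p$, $q$, and $\ell$ carefully enough that every period test and every next-period search costs $O(1)$ amortized time and a constant number of $\LCP$/$\LCS$ queries --- in particular, pinning down precisely when the relevant shortest period (of $\SHAPE_p$ or of $\SHAPE_p[1..p+2q]$) is available directly from $\PP$ or $\PP^R$, and bounding the incremental failure-table work in the remaining cases. Once this is settled, the iteration count and the lookup bookkeeping are routine, and summing the three contributions yields the claimed running time of $O(n)$ plus $O(n)$ $\LCP$ and $\LCS$ queries.
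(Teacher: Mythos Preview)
Your plan diverges sharply from the paper's, and the divergence is where the gaps appear. The paper does \emph{not} attempt to make each iteration run in $O(1)$ amortized time. Instead it lets the first branch cost $O(q)$ (computing a border table of $\SHAPE_p[1..2q]\SHAPE_p[p+1..p+2q]$) and the third branch cost $O(n-2p+1)$, and then introduces the potential $\phi_i=p_i+p'_i$, where $p'_i$ is the smallest period of $\SHAPE_{p_i}$ exceeding $p_i$. The heart of the argument is showing $\phi_{i+1}-\phi_i\ge q$ in the first branch and $\phi_{i+1}-\phi_i\ge\frac14\ell_i$ in the third, so the costs telescope to $O(n)$.

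Your constant-time substitutes rest on claims that are not true in general. First, ``the shortest period $\pi$ of $\SHAPE_p[1..p+2q]$ is $q$ or a divisor of $q$'': you only know that $q$ is the shortest period of the length-$\ell$ prefix and suffix of $\SHAPE_p$, and when $p\ge\ell$ these do not overlap, so the middle of $\SHAPE_p[1..p+2q]$ is unconstrained and its shortest period need not be related to $q$ at all. Second, even granting a shortest period $\pi$, the Fine--Wilf lemma only forces periods $p'\le p+2q-\pi$ to be multiples of $\pi$; larger periods need not be, so ``the least multiple of $\pi$ exceeding $p$'' can miss the answer to Line~\ref{line:nxt1}. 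Third, the cyclic-shift reformulation of Line~\ref{line:per1} requires $\SHAPE_p[p+1..p+q]$ to lie inside the periodic prefix $\SHAPE_p[1..\ell]$, i.e.\ $p\le\ell-q$, which fails for $p$ close to $\tfrac12 n$. Finally, ``maintaining a KMP failure table incrementally as $p$ grows'' is not meaningful here: $\SHAPE_p$ and $\SHAPE_{p+1}$ are strings over different alphabets (shapes of length $p$ versus $p+1$) with no prefix relationship, so there is no incremental update to exploit. The potential-function route avoids all of these issues.
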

\begin{proof}
It suffices to bound the time complexity assuming that each $\LCP$ and $\LCS$ query takes unit time.

First, we observe that $\PP[k,\ell]$ and $\PP^R[k,\ell]$ is used only for $\ell=n-2k+1$ or $\ell=\ceil{\frac34 (n-2k+1)}$.
These $O(n)$ values can be computed in $O(n)$ time using Algorithm~\ref{algo2}.

The condition in Line~\ref{line:per1} can be verified using $O(q)$ equality checks in $\SHAPE_p$,
whereas in Line~\ref{line:nxt1}, it suffices to compute the border table of $\SHAPE_p[1..2q]\SHAPE_p[p+1..p+2q]$,
which also takes $O(q)$ time and equality checks in $\SHAPE_p$.
By a similar argument, the third branch can be implemented in $O(|\SHAPE_p|)=O(n-2p+1)$ time, whereas the second branch clearly takes $O(1)$ time.

In order to prove that the total running time is $O(n)$, we introduce a potential function.
Let $p_i$ be the value of the variable $p$ at the beginning of the $i$th iteration,
let $p'_i=\min \{p' > p_i : p'\text{ is a period of }\SHAPE_{p_i}\}$.
Note that $p_i < p'_i \le |\SHAPE_{p_i}|$ due to $p_i\le \frac12 n$.
Moreover, $p_{i+1}>p_i$ and $p'_{i+1}\ge p'_{i}$ by Lemma~\ref{lem:obs}\ref{it:obs:down}.

Our potential function is 
$$\phi_i=p_i + p'_i,$$ 
i.e., we shall prove that the running time of the $i$th iteration is $O(\phi_{i+1}-\phi_i)$.

The running time of the first branch is $O(q)$, so we shall prove that $\phi_{i+1}-\phi_i \ge q$.
Assume to the contrary that $p'_{i+1}-p'_i+p_{i+1}-p_i<q$. This yields that $p_{i+1}<p_{i}+q$ and $p'_{i+1}<p'_i+q$.
The first condition implies that 
$$\SHAPE_{p_i}[1..q]=\SHAPE_{p_i}[p_{i+1}+1..p_{i+1}+q].$$
Since $q$ is a period of $\SHAPE_{p_i}[p_{i+1}+1..p_i+\ell_i]$ and of $\SHAPE_{p_i}[1..\ell_i]$, we conclude that $p_{i+1}$ is a period of $\SHAPE_{p_i}$
so $p_{i+1}=p'_i$.
Due to Lemma~\ref{lem:obs}\ref{it:obs:down}, the condition $p'_{i+1}<p'_i+q=p_{i+1}+q$ implies 
$$\SHAPE_{p_i}[1..q]=\SHAPE_{p_i}[p'_{i+1}+1..p'_{i+1}+q].$$
This gives a non-trivial occurrence of $\SHAPE_{p_i}[1..q]$ in $\SHAPE_{p_i}[p_{i+1}+1..p_{i+1}+2q]=\SHAPE_{p_i}[1..q]^2$,
which contradicts the primitivity of $\SHAPE_{p_i}[1..q]$.

The running time of the second branch is $O(1)$ and we indeed have $\phi_{i+1}-\phi_i\ge 1$.

In the third branch, the running time is $O(n-2p+1)$ and we shall prove that 
$$\phi_{i+1}-\phi_i \ge \tfrac14 \ell_i.$$
For a proof by contradiction, suppose that $\phi_{i+1}-\phi_i < \frac14 \ell_i$.
In this branch we have $p_{i+1}=p'_i$, so $\phi_{i+1}-\phi_i=p'_{i+1}-p_i$.
By Lemma~\ref{lem:obs}\ref{it:obs:down}, both $p'_{i}$ and $p'_{i+1}$ are periods of $\SHAPE_{p_i}$.
Hence, $p'_{i+1}-p'_i$ is a period of
$$\SHAPE_{p_i}[p'_i+1..n-p_i+1] = \SHAPE_{p_i}[1..n-p'_i-p_i+1].$$
In particular,
$$\per(\SHAPE_{p_i}[1..n-p'_i-p_i+1])=\per(\SHAPE^R_{p_i}[1..n-p'_i-p_i+1])<\tfrac14 \ell_i.$$
Since $p'_i-p_i \le p'_{i+1}-p_i\le \frac14\ell_i$, we have
$n-p'_i-p_i+1 = \ell_i-p'_i+p_i \ge \frac34 \ell_i$.
Consequently,
$$\per(\SHAPE_{p_i}[1..\ceil{\tfrac34 \ell_i}])=\per(\SHAPE^R_{p_i}[1..\ceil{\tfrac34 \ell_i}])<\tfrac14 \ell_i.$$
Hence, $\PP[p_i,\lceil\frac34 \ell_i\rceil]$ and $\PP^R[p_i,\lceil\frac34 \ell_i\rceil]$ are both equal to
this common value.
This is a contradiction, because in that case we would be in the second branch.
This completes the proof.
\end{proof}

\begin{theorem}
  All sliding op-periods of a string of length $n$ can be computed in $O(n)$ space and
    $O(n \log \log n)$ expected time or $O(n \log^2 \log n/\log \log \log n)$ worst-case time.
\end{theorem}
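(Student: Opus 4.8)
The plan is to combine the characterization of Lemma~\ref{lem:sliding} with Algorithms~\ref{algo2} and~\ref{algo3}, handling candidate periods $p\le\tfrac12 n$ and $p>\tfrac12 n$ separately. First I would build the data structure of Lemma~\ref{lem:LCP} for $S$, and also for the reverse string $S^R$ so that $\LCS$-queries in $S$ reduce to $\LCP$-queries in $S^R$ (order-equivalence is preserved by reversing both strings). This is the only super-linear step: it costs $O(n\log\log n)$ expected or $O(n\log^2\log n/\log\log\log n)$ worst-case time, and $O(n)$ space. Afterwards every $\LCP$- and $\LCS$-query in $S$ is answered in $O(1)$ time.

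For $p\le\tfrac12 n$ I would run Algorithm~\ref{algo3}. Its correctness is Lemma~\ref{lem:cor2}, and by Lemma~\ref{lem:time} it runs in $O(n)$ time plus $O(n)$ $\LCP$- and $\LCS$-queries, hence $O(n)$ total after the preprocessing. Algorithm~\ref{algo3} only ever consults the table entries $\PP[k,n-2k+1]$ and $\PP[k,\lceil\tfrac34(n-2k+1)\rceil]$ (and their $S^R$-analogues in $\PP^R$); these are $O(n)$ specific cells, and since for each column index $\ell$ there is at most one $k$ with $n-2k+1=\ell$ and at most one with $\lceil\tfrac34(n-2k+1)\rceil=\ell$, they can all be read off during a single left-to-right sweep of Algorithm~\ref{algo2} over the columns of $\PP$, run once on $S$ and once on $S^R$. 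By Lemmas~\ref{lem:cor} and~\ref{lem:complexity} this sweep takes $O(n)$ time plus $O(n)$ queries and, keeping only the current column, uses $O(n)$ space.

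For $\tfrac12 n<p\le n-1$, Lemma~\ref{lem:sliding} says that $p$ is a sliding op-period exactly when $S[1..n-p]\approx S[p+1..n]$, that is, when $\LCP(1,p+1)=n-p$. I would test this with one $\LCP$-query per value of $p$, for a total of $O(n)$ time. Concatenating the periods reported in the two phases gives all sliding op-periods.

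Summing the three phases yields the stated running time, and since the op-suffix trees of $S$ and $S^R$, the single stored columns of $\PP$ and $\PP^R$, and the state of Algorithm~\ref{algo3} all occupy $O(n)$ space, the space bound follows. The delicate point I expect to need the most care --- and the one I would state explicitly rather than leave implicit --- is the interface between Algorithms~\ref{algo2} and~\ref{algo3}: despite $\PP$ having quadratic size, the $O(n)$ cells queried by Algorithm~\ref{algo3} must be supplied within the $O(n)$-time-plus-$O(n)$-queries budget, which relies precisely on the ``total number of differences between consecutive columns is linear'' property established inside the proof of Lemma~\ref{lem:complexity}.
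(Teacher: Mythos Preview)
Your proposal is correct and follows essentially the same approach as the paper: preprocess for $\LCP$/$\LCS$ queries via Lemma~\ref{lem:LCP}, run Algorithm~\ref{algo3} for $p\le\tfrac12 n$ (invoking Lemmas~\ref{lem:cor2} and~\ref{lem:time}), and handle $p>\tfrac12 n$ by a single $\LCP$ test per $p$ using Lemma~\ref{lem:sliding}. Your added discussion of how the $O(n)$ needed cells of $\PP$ and $\PP^R$ are harvested during the sweep of Algorithm~\ref{algo2} is a welcome elaboration of what the paper states more tersely (in the paragraph preceding Algorithm~\ref{algo2} and inside the proof of Lemma~\ref{lem:time}).
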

\begin{proof}
First, we apply Lemma~\ref{lem:LCP} so that $\LCP$ and $\LCS$ queries can be answered in $O(1)$ time.
Next, we run Algorithm~\ref{algo3} to report sliding op-periods $p\le \frac12 n$.
Then, we iterate over $p>\frac12n$ and report $p$ if $\LCP(1,p+1)=n-p$.
Correctness follows from Lemmas~\ref{lem:cor2} and~\ref{lem:sliding}.
The overall time is $O(n)$ (Lemma~\ref{lem:time}) plus the preprocessing time of Lemma~\ref{lem:LCP}.
\end{proof}

\paragraph{Acknowledgements.}
A part of this work was done during the workshop ``StringMasters in Warsaw 2017'' that was sponsored by
the Warsaw Center of Mathematics and Computer Science.
The authors thank the participants of the workshop, especially Hideo Bannai and Shunsuke Inenaga,
for helpful discussions.

\bibliographystyle{plain}
\bibliography{op_periods}

\end{document}